\theoremstyle{plain}
\newtheorem{theorem}{Theorem}[section]
\theoremstyle{definition}
\newcolumntype{C}[1]{>{\centering\arraybackslash}m{#1}}
\newcolumntype{L}[1]{>{\raggedright\arraybackslash}m{#1}}
\newcolumntype{R}[1]{>{\raggedleft\arraybackslash}m{#1}}
\newcolumntype{J}[1]{>{\arraybackslash}m{#1}}
\newcommand{\saveparinfo}{\xdef\savedindent{\the\parindent}\xdef\savedparskip{\the\parskip}}
\newcommand{\useparinfo}{\setlength{\parindent}{\savedindent}\setlength{\parskip}{\savedparskip}\justifying}
\DeclareMathOperator{\tr}{tr}
\DeclareMathOperator{\ad}{ad}
\DeclareMathOperator*{\esssup}{ess\,sup}
\DeclareMathOperator*{\essinf}{ess\,inf}
\newcommand{\Prb}{\mathbb{P}}
\newcommand{\Exp}{\mathbb{E}}
\newcommand{\Var}{\mathbb{V}}
\newcommand{\Cum}{\mathbb{M}}
\newcommand{\bfrac}[2]{\frac{\displaystyle#1}{\displaystyle#2}}
\begin{document}
\saveparinfo

\title{The Relative Entropy of Expectation and Price}
\author{Paul McCloud}
\date{\today}
\begin{abstract}
As operators acting on the undetermined final settlement of a derivative security, expectation is linear but price is non-linear. When the market of underlying securities is incomplete, non-linearity emerges from the bid-offer around the mid price that accounts for the residual risks of the optimal funding and hedging strategy. At the extremes, non-linearity also arises from the embedded options on capital that are exercised upon default. In this essay, these convexities are quantified in an entropic risk metric that evaluates the strategic risks, which is realised as a cost with the introduction of bilateral margin. Price is then adjusted for market incompleteness and the risk of default caused by the exhaustion of capital.

In the complete market theory, price is derived from a martingale condition. In the incomplete market theory presented here, price is instead derived from a log-martingale condition:
\begin{equation}
p=-\frac{1}{\alpha}\log\Exp\exp[-\alpha P] \notag
\end{equation}
for the price $p$ and payoff $P$ of a funded and hedged derivative security, where the price measure $\Exp$ has minimum entropy relative to economic expectations, and the parameter $\alpha$ matches the risk aversion of the investor. This price principle is easily applied to standard models for market evolution, with applications considered here in model risk analysis, data-driven hedging and quantum information.
\end{abstract}
\maketitle

\section{Introduction}

A derivative transaction exchanges the execution price $p$ at initial time $t$ for a commitment to return the contractual price $p+dp$ at final time $t+dt$. In determining its economic viability, the counterparties assess whether the trade enhances or degrades the overall performance of their investment portfolios, balancing expected return against net risk. Asymmetry between the expectations and investment objectives of participants then drives market evolution, following the dynamic that trades execute when parties agree on price but disagree on value.

Capital is deployed to finance the transaction, entailing economic consequences for the investor beyond the contractual terms of the derivative. In addition to the expected final settlements, valuation depends on the funding of initial and variation margin and the level of default protection they confer. The impact of margin is typically developed as a valuation adjustment to a price model derived from the hedging of contractual settlements. In this essay, an entropic margin model is instead placed at the centre of the pricing methodology, providing the risk counterbalance to expected return in the investment strategy. There are numerous benefits to this approach.
\begin{description}[leftmargin=0\parindent]
\item[Investment strategy]Traditional strategies balance expected return against risk. Here, the risk metric is encoded in the entropic margin model as an additional settlement. Maximising the return on capital net of all settlements naturally embeds risk management.
\item[Incomplete markets]Market equilibrium identifies a unique price model with minimum entropy relative to economic expectations, calibrated to available funding and hedging. In the process, this resolves the ambiguity of pricing in incomplete markets.
\item[Information model]Entropic risk optimisation is applicable wherever the concept of entropy is defined. This includes theoretical or data-driven models with classical or quantum information. Entropic pricing unlocks the novel algorithmic potential of these frameworks.
\item[Market and model risk]Uniquely determining price from expectation, market and model risks are distinguished respectively as sensitivities to observed prices and subjective economic assumptions. This enables a hierarchical approach that recognises market tiering.
\item[Decentralised finance]The level of margin is controlled by a single parameter, and the link with funding settlements is transparent and programmatic. Implemented in the scripting of smart contracts for derivatives, this supports the transition to decentralisation.
\end{description}

Advocates of deep hedging replace risk neutralisation with risk optimisation as the guiding principle for pricing, as data-driven models see the complex correlations among economic variables as opportunity rather than hindrance. Risk optimisation is similarly used here to derive the price measure from the expectation measure, though the objectives differ. Model risk is quantified as the residual risk after funding and hedging, which requires for its definition a precise understanding of the relationship between expectation and price. Taking this idea to its logical conclusion, the convex risk metric is implemented as margin held against model risk. Optimisation of the strategic return net of funding and hedging then adjusts price to account for the risk management benefits of margin.

\begin{figure*}[!p]
\setlength{\abovecaptionskip}{5pt}
\setlength{\belowcaptionskip}{20pt}
\centering
\includegraphics[width=0.8\textwidth]{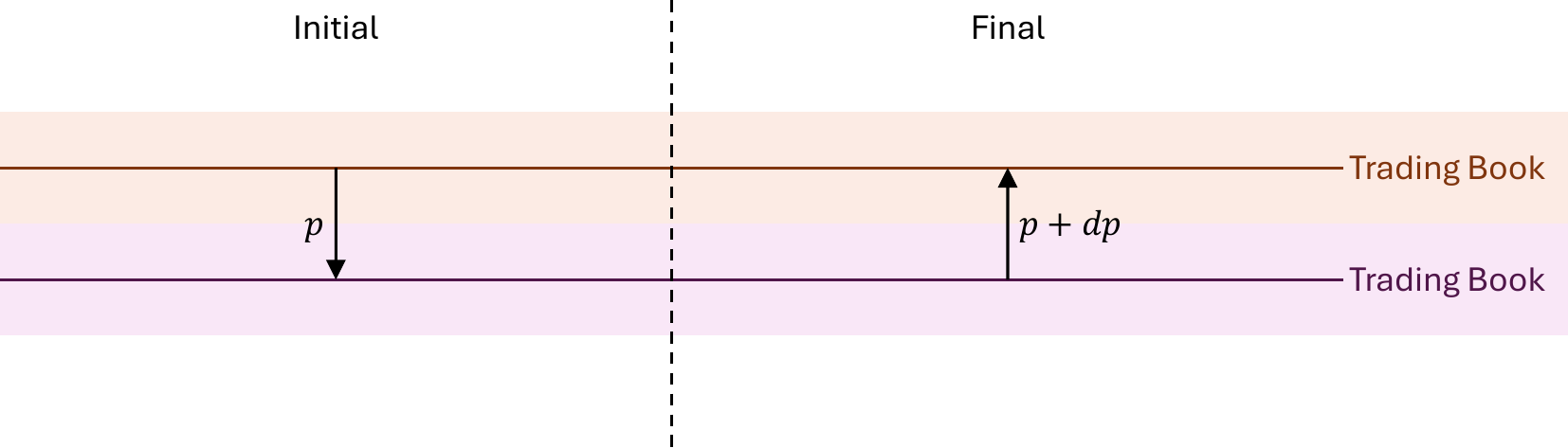}
\caption{The contractual settlements of the trade.}
\label{fig:cashflows1}
\includegraphics[width=0.8\textwidth]{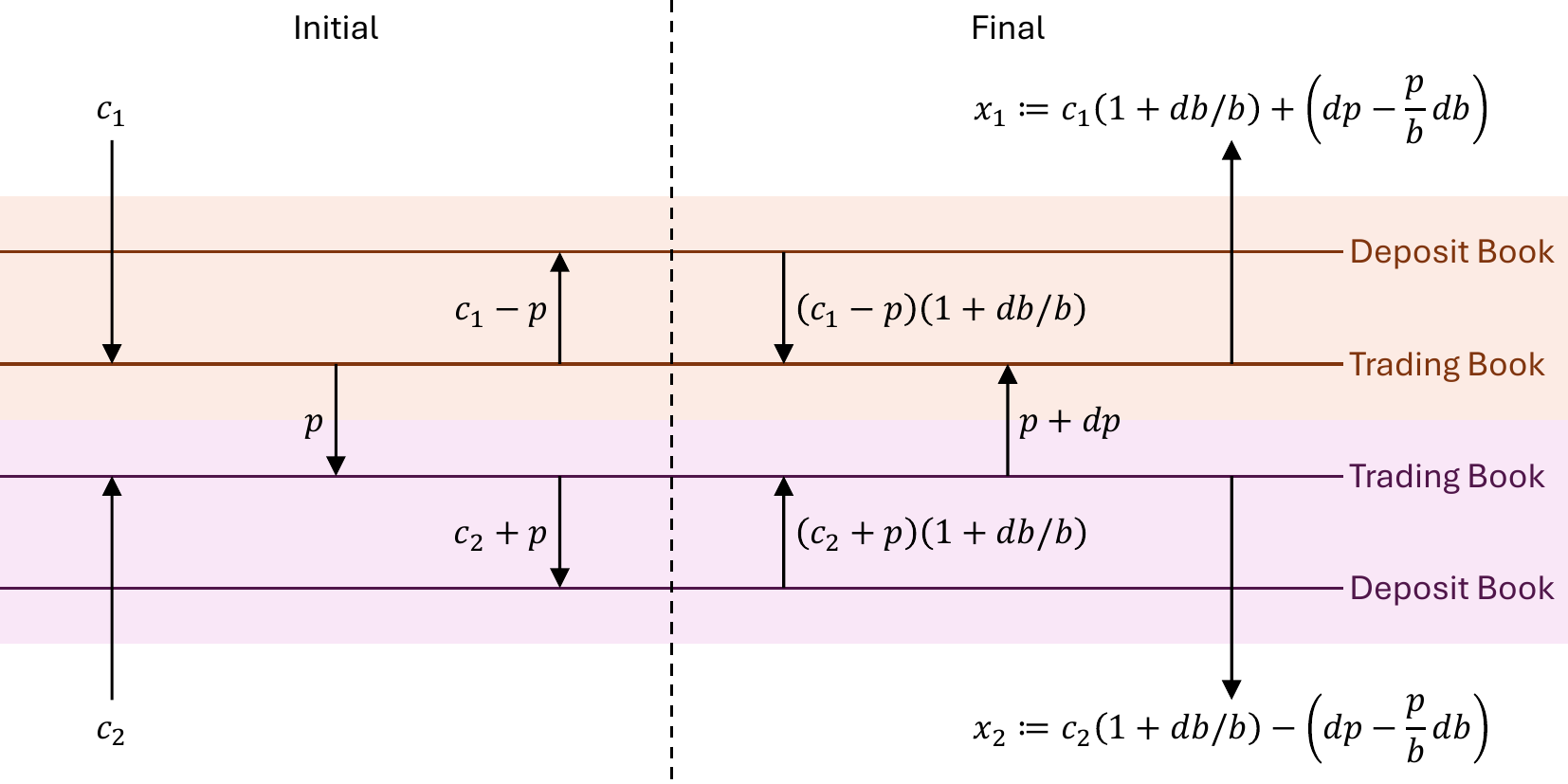}
\caption{The contractual and funding settlements of the trade.}
\label{fig:cashflows2}
\includegraphics[width=0.8\textwidth]{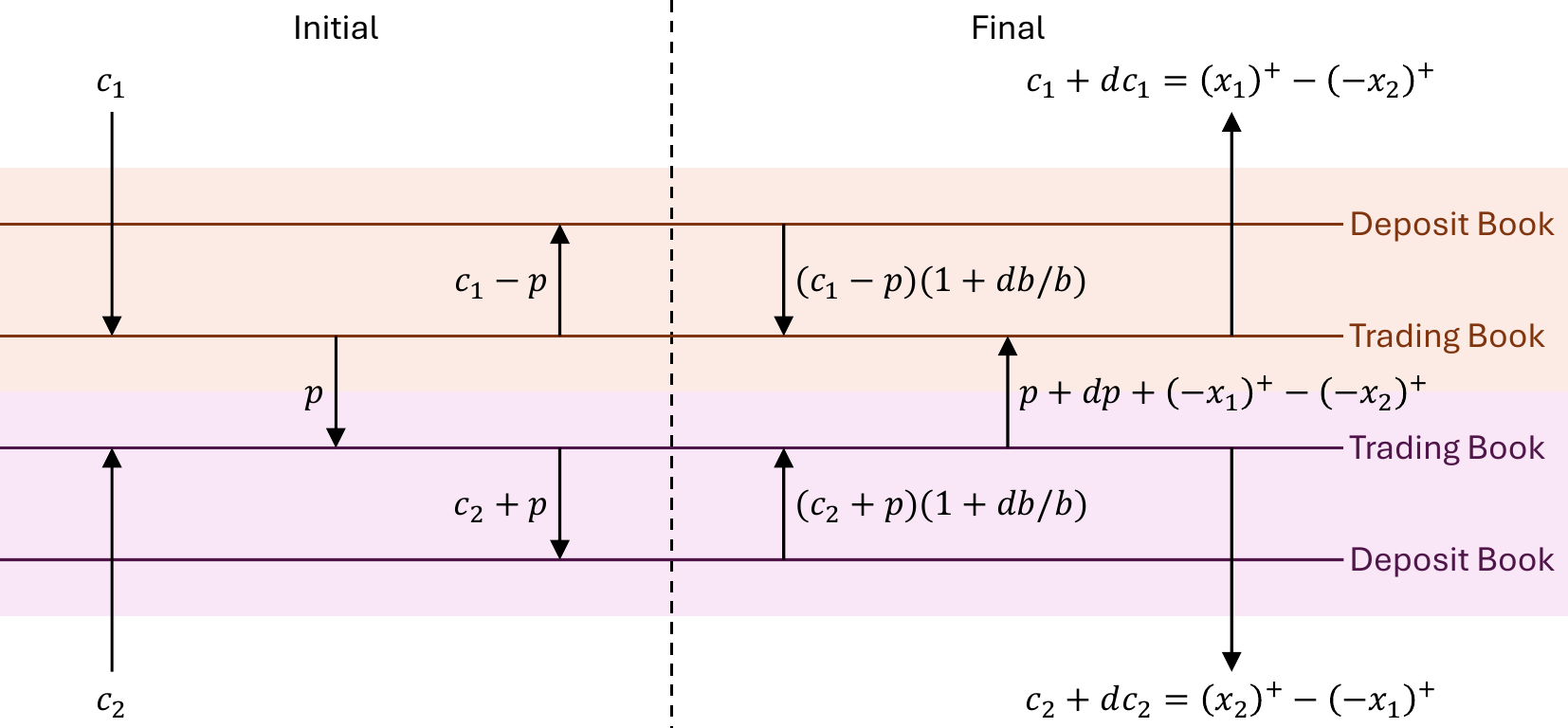}
\caption{The contractual, funding and default settlements of the trade.}
\label{fig:cashflows3}
\end{figure*}

Suppose that two counterparties deploy initial capital $c_1>0$ and $c_2>0$ to the transaction and deposit the proceeds in a liquid margin account with positive unit price $b>0$. The first counterparty buys the derivative from the second counterparty, and the capital must be sufficient to enter the transaction:
\begin{equation}
-c_2\le p\le c_1
\end{equation}
The capital $c_1$ deployed by the first counterparty is partitioned into the variation margin $p$ and the initial margin $m_1:=c_1-p$, and the capital $c_2$ deployed by the second counterparty is partitioned into the variation margin $-p$ and the initial margin $m_2:=c_2+p$. With the initial margin deposited in the margin account, the final margin available to settle the derivative is $m_1(1+db/b)$ for the first counterparty and $m_2(1+db/b)$ for the second counterparty. This is sufficient to settle the contractual commitments of the derivative when:
\begin{equation}
-m_1(1+db/b)\le p+dp\le m_2(1+db/b)
\end{equation}
Define the default trigger levels:
\begin{align}
x_1:={}&m_1(1+db/b)+(p+dp) \\
={}&c_1(1+db/b)+\left(dp-\frac{p}{b}db\right) \notag \\
x_2:={}&m_2(1+db/b)-(p+dp) \notag \\
={}&c_2(1+db/b)-\left(dp-\frac{p}{b}db\right) \notag
\end{align}
The first counterparty defaults when $x_1<0$ and the second counterparty defaults when $x_2<0$, and these two conditions are mutually exclusive. The actual final settlement of the derivative, net of default, is then:
\begin{equation}
p+dp+(-x_1)^+-(-x_2)^+
\end{equation}
and the final capital of the counterparties is:
\begin{align}
c_1+dc_1&=(x_1)^+-(-x_2)^+ \\
c_2+dc_2&=(x_2)^+-(-x_1)^+ \notag
\end{align}
respectively. These expressions for the change in capital are rearranged to:
\begin{align}
d\frac{c_1}{b}&=\left(-\frac{c_1}{b}-d\frac{p}{b}\right)^{\!+}-\left(d\frac{p}{b}-\frac{c_2}{b}\right)^{\!+}+d\frac{p}{b} \\
d\frac{c_2}{b}&=\left(d\frac{p}{b}-\frac{c_2}{b}\right)^{\!+}-\left(-\frac{c_1}{b}-d\frac{p}{b}\right)^{\!+}-d\frac{p}{b} \notag
\end{align}
demonstrating that the discounted return on capital matches the discounted return on the derivative adjusted by a put option on $d(p/b)$ with strike $-c_1/b$, exercised when the first counterparty defaults, and a call option on $d(p/b)$ with strike $c_2/b$, exercised when the second counterparty defaults. When assessing the viability of the trade, it is these net returns that must be considered, accounting for funding costs and the possibility that the settling counterparty defaults on its commitments.

Fixing a target for the investment strategy, the expected performance is quantified by the entropy-adjusted mean return net of funding and default:
\begin{equation}
\Exp[\alpha][dc]:=-\frac{1}{\alpha}\log\Exp\exp[-\alpha\,dc]
\end{equation}
where the measure $\Exp$ encapsulates the expectations of the investor and the parameter $\alpha>0$ controls their risk aversion. In this strategy, the trade is considered to be viable when the entropy-adjusted mean is positive. Accounting for the expectations and risk appetites of both counterparties, the trade is viable when:
\begin{equation}
\Exp_2[-\alpha_2][-dc_2]\le0\le\Exp_1[\alpha_1][dc_1]
\end{equation}
Setting both sides of this window of viability to zero imposes two fair pricing conditions on the initial and variation margins. The remaining degree of freedom controls the level of default protection embedded in the trade.

This simple framework for bilateral margin is used in the following to derive equilibrium pricing expressions that account for funding and default. It is an idealisation of real trading activity, and does not account for the default protections provided by additional capital covering the net risks of the institution. Extensions to multilateral margin, which can be incorporated into the framework, are not considered here.

In this essay, entropic foundations for pricing are investigated that do not rely on unrealistic assumptions regarding market completeness or the effectiveness of the trading strategy, properly account for residual model risk, and establish consistency across diverse applications including algorithmic trading, derivative pricing and hedging, and the modelling of margin and capital. In developing these foundations, the entropic risk metric has dual application: as the measure of performance used to optimise the investment strategy; and as the determinant of margin used to protect against default. These results leverage the remarkable versatility and tractability of relative entropy as a universal metric of uncertainty, applicable across a wide range of models and algorithms.

\section{Entropic investment strategies}

\begin{figure*}[!t]
\centering
\includegraphics[width=0.45\textwidth]{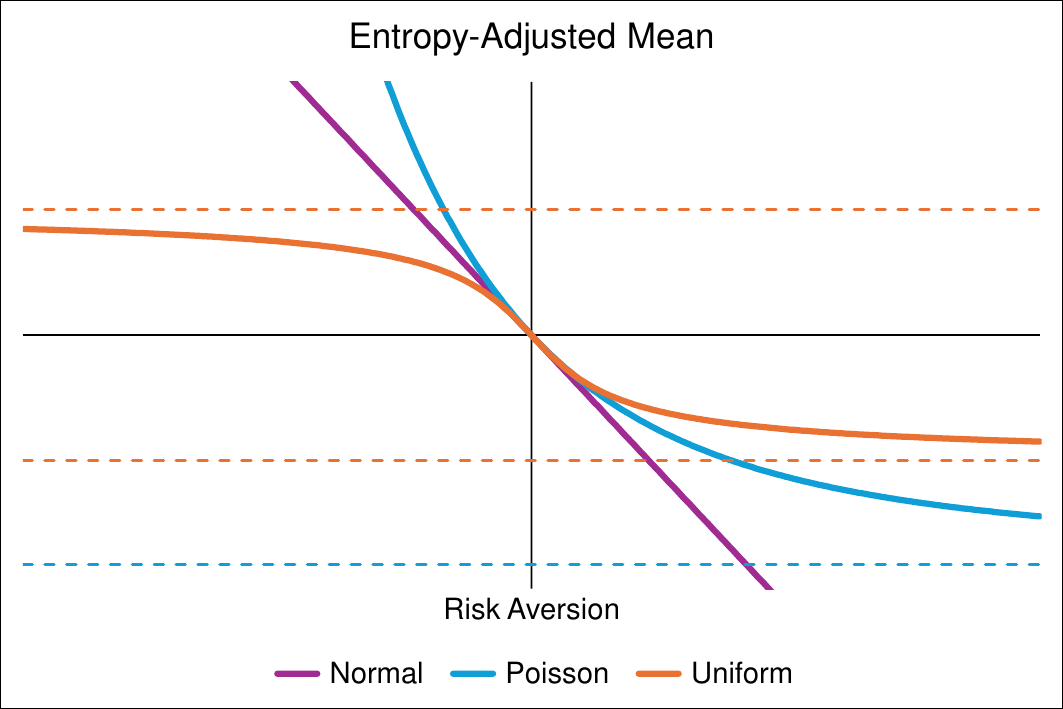}
\caption{Entropy-adjusted mean as a function of risk aversion for three random variables. In each case, the variable is standardised by linear transformation to have zero mean and unit standard deviation. The entropy-adjusted mean is monotonic decreasing; it is unbounded for the normal variable, bounded below for the Poisson variable with ten expected jumps, and bounded above and below for the uniform variable.}
\label{fig:entropyadjustedmean}
\end{figure*}

The objective of the investor is to optimise the return $dc$ on the investment strategy, a task that depends on an initial assessment of the final economy that is both subjective and indeterminate. This assessment is encapsulated in the expectation measure $\Exp$ of the investor, which maps the final variable $X$ to its initial expectation $\Exp[X]$. Use the following notation for the associated operators of variance, covariance and probability:
\begin{align}
\Var[X]&:=\Exp[XX^t]-\Exp[X]\Exp[X]^t \\
\Var[X,Y]&:=\Exp[XY^t]-\Exp[X]\Exp[Y]^t \notag \\
\Prb[\mathcal{A}]&:=\Exp[1_\mathcal{A}] \notag
\end{align}
for the random vectors $X$ and $Y$ and the event $\mathcal{A}$, where $1_\mathcal{A}$ is the indicator of the event. Investment performance is estimated using these operators, and the strategy is optimised accordingly.

\subsection{Entropic risk metric}

Maximising the mean $\Exp[dc]$ is an obvious strategic objective, though taken on its own this incentivises reckless investment and fails to reward effective risk management. Mean-variance optimisation extends the strategy by maximising the variance-adjusted mean:
\begin{equation}
\Exp[dc]-\frac{1}{2}\alpha\Var[dc]
\end{equation}
In this expression, the parameter $\alpha$ controls the risk aversion, and is interpreted as the Lagrange multiplier for the variance constraint in a strategy that maximises the mean for a fixed variance. Incorporating higher moments, entropic-risk optimisation further extends the strategy by maximising the entropy-adjusted mean:
\begin{align}
\Exp[\alpha][dc]:={}&\!-\frac{1}{\alpha}\log\Exp\exp[-\alpha\,dc] \\
={}&\Exp[dc]-\frac{1}{2}\alpha\Var[dc]+\sum_{n=3}^\infty\frac{(-\alpha)^{n-1}}{n!}\Cum_n[dc] \notag
\end{align}
where $\Cum_n$ is the $n$th cumulant operator of the investor, with $\Cum_0=0$, $\Cum_1=\Exp$ and $\Cum_2=\Var$. The change of measure implicit in this definition reweights according to the sign and magnitude of the return, with properties that make it a useful metric of value-at-risk.

\begin{theorem}[Entropic risk metric]Let $dc$ be a random variable whose cumulant generating function in the probability measure $\Exp$ is defined. The entropy-adjusted mean:
\begin{equation}
\Exp[\alpha][dc]:=-\frac{1}{\alpha}\log\Exp\exp[-\alpha\,dc]
\end{equation}
satisfies the linearity property:
\begin{equation}
\Exp[\alpha][\lambda\,dc+\mu]=\lambda\Exp[\lambda\alpha][dc]+\mu
\end{equation}
for the scalars $\lambda$ and $\mu$ and is monotonic decreasing as a function of the adjustment parameter $\alpha$ with limits:
\begin{align}
\Exp[-\infty][dc]&=\esssup[dc] \\
\Exp[0][dc]&=\Exp[dc] \notag \\
\Exp[\infty][dc]&=\essinf[dc] \notag
\end{align}
Let $\bar{\Exp}$ be an equivalent probability measure with strictly positive Radon-Nikodym weight $d\bar{\Exp}/d\Exp$. Define the entropy of $\bar{\Exp}$ relative to $\Exp$ by:
\begin{equation}
S[\bar{\Exp}|\Exp]:=\Exp\!\left[\frac{d\bar{\Exp}}{d\Exp}\log\!\left[\frac{d\bar{\Exp}}{d\Exp}\right]\right]
\end{equation}
Relative entropy is a metric of distance between equivalent probability measures: it is positive, $S[\bar{\Exp}|\Exp]\ge0$, and is zero when $\bar{\Exp}=\Exp$. Let $\Exp^\alpha$ be the equivalent probability measure with Radon-Nikodym weight $d\Exp^\alpha/d\Exp$ given by:
\begin{equation}
\bfrac{d\Exp^\alpha}{d\Exp\hfill}:=\frac{\exp[-\alpha\,dc]}{\Exp[\exp[-\alpha\,dc]]}
\end{equation}
The entropy-adjusted mean $\Exp[\alpha][dc]$ equals the mean $\bar{\Exp}[dc]$ adjusted by relative entropy:
\begin{equation}
\Exp[\alpha][dc]=\bar{\Exp}[dc]+\frac{1}{\alpha}(S[\bar{\Exp}|\Exp]-S[\bar{\Exp}|\Exp^\alpha])
\end{equation}
\end{theorem}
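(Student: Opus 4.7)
The plan is to organise everything around the cumulant generating function $\phi(\alpha):=\log\Exp[\exp[-\alpha\,dc]]$, since $\Exp[\alpha][dc]=-\phi(\alpha)/\alpha$, and then to dispose of the five sub-claims in turn. \textbf{Linearity} is a one-line computation: factoring $e^{-\alpha\mu}$ out of the expectation and absorbing $\lambda$ into the tilting parameter immediately reads off $\Exp[\alpha][\lambda\,dc+\mu]=\lambda\Exp[\lambda\alpha][dc]+\mu$. \textbf{Positivity of relative entropy} follows from Jensen's inequality applied to the strictly convex function $f(x):=x\log x$ under $\Exp$: $S[\bar{\Exp}|\Exp]=\Exp[f(d\bar{\Exp}/d\Exp)]\ge f(\Exp[d\bar{\Exp}/d\Exp])=f(1)=0$, with equality forcing the Radon-Nikodym weight to be almost surely $1$ and hence $\bar{\Exp}=\Exp$.

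For \textbf{monotonicity}, I would differentiate $-\phi(\alpha)/\alpha$ directly to obtain $\frac{d}{d\alpha}\Exp[\alpha][dc]=(\phi(\alpha)-\alpha\phi'(\alpha))/\alpha^2$. Using $\phi'(\alpha)=-\Exp^\alpha[dc]$ together with the explicit formula $\log(d\Exp^\alpha/d\Exp)=-\alpha\,dc-\phi(\alpha)$, one computes $S[\Exp^\alpha|\Exp]=-\alpha\Exp^\alpha[dc]-\phi(\alpha)$, so the numerator equals $-S[\Exp^\alpha|\Exp]$ and the derivative reduces to $-S[\Exp^\alpha|\Exp]/\alpha^2$, which is non-positive by the positivity result just established. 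The \textbf{limit at $\alpha\to0$} is then a Taylor expansion of $\phi$ recovering $\Exp[dc]$, while the \textbf{limits at $\alpha\to\pm\infty$} follow from a Laplace-type argument showing that $-\alpha^{-1}\log\Exp[\exp[-\alpha\,dc]]$ concentrates on $\essinf[dc]$ when $\alpha\to+\infty$ and on $\esssup[dc]$ when $\alpha\to-\infty$.

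The \textbf{main identity} I would obtain by decomposing the relative entropy through the chain rule $\log(d\bar{\Exp}/d\Exp^\alpha)=\log(d\bar{\Exp}/d\Exp)-\log(d\Exp^\alpha/d\Exp)$: taking the $\bar{\Exp}$-expectation and substituting $\log(d\Exp^\alpha/d\Exp)=-\alpha\,dc-\phi(\alpha)$ yields $S[\bar{\Exp}|\Exp^\alpha]=S[\bar{\Exp}|\Exp]+\alpha\bar{\Exp}[dc]+\phi(\alpha)$, which rearranges directly to the stated formula via $\Exp[\alpha][dc]=-\phi(\alpha)/\alpha$. The hard part will be the $\alpha\to\pm\infty$ asymptotic when $dc$ is unbounded: this requires a tail-splitting argument leveraging the hypothesis that the cumulant generating function is defined on the relevant range, possibly supplemented by a dominated-convergence step that isolates mass near the essential extremum. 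Everything else reduces to algebra on $\phi(\alpha)$ and a single invocation of Jensen's inequality.
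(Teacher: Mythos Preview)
Your proposal is correct and largely parallels the paper, but your treatment of monotonicity takes a genuinely different route. The paper argues monotonicity directly from H\"older's inequality in the form $\Exp[\exp[-\alpha_1\,dc]]^{1/\alpha_1}\le\Exp[\exp[-\alpha_2\,dc]]^{1/\alpha_2}$ for $0<\alpha_1<\alpha_2$, whereas you differentiate $-\phi(\alpha)/\alpha$ and identify the derivative as $-S[\Exp^\alpha|\Exp]/\alpha^2$, so that monotonicity becomes a corollary of the non-negativity of relative entropy. Your approach has the merit of making the entropic content of monotonicity explicit and of handling positive and negative $\alpha$ uniformly; the paper's H\"older argument is more elementary in that it needs no differentiability of the cumulant generating function. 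For the $\alpha\to\pm\infty$ limits the paper makes your ``Laplace-type argument'' concrete via Markov's inequality (giving $\Exp[\alpha][dc]\le\gamma-\alpha^{-1}\log\Prb[dc\le\gamma]$ and then sending $\alpha\to\infty$, $\gamma\to\essinf[dc]$), and for positivity of relative entropy it invokes the pointwise inequality $x\log x\ge x-1$ rather than Jensen---both minor variations on the same idea. The main identity is derived essentially identically in both: the paper writes it as the Donsker--Varadhan variational formula and expands $S[\bar{\Exp}|\Exp]-S[\bar{\Exp}|\Exp^\alpha]$ directly, which is exactly your chain-rule computation.
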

\begin{proof}
Linearity follows directly from the definition of the entropy-adjusted mean. Without loss of generality, assume the adjustment parameter is positive and use linearity to derive the corresponding results for the case of negative adjustment parameter.

These are standard results from information theory, and only the main elements of their proofs are presented here. Monotonicity is a consequence of H\"{o}lder's inequality in the form:
\begin{equation}
\Exp[\exp[-\alpha_1dc]]^{1/\alpha_1}\le\Exp[\exp[-\alpha_2dc]]^{1/\alpha_2}
\end{equation}
for $0<\alpha_1<\alpha_2$, so that $\Exp[\alpha_1][dc]\ge\Exp[\alpha_2][dc]$. The limit for $\alpha\to0$ follows from the expansion of the cumulant generating function. For the limit $\alpha\to\infty$, Markov's inequality:
\begin{equation}
\Prb[dc\le\gamma]\le\frac{\Exp[\exp[-\alpha\,dc]]}{\exp[-\alpha\gamma]}
\end{equation}
for the parameters $\alpha>0$ and $\gamma>\essinf[dc]$ (so that $\Prb[dc\le\gamma]>0$) is rearranged to:
\begin{equation}
\Exp[\alpha][dc]\le\gamma-\frac{1}{\alpha}\log[\Prb[dc\le\gamma]]
\end{equation}
From this, taking the limit $\alpha\to\infty$ followed by taking the limit $\gamma\to\essinf[dc]$ derives $\Exp[\infty][dc]\le\essinf[dc]$. The other side of the target inequality follows from the almost certainty of $\exp[-\alpha\,dc]\le\exp[-\alpha\essinf[dc]]$, so that $\Exp[\alpha][dc]\ge\essinf[dc]$.

Gibb's theorem states that $S[\bar{\Exp}|\Exp]\ge0$, which follows from the observation that $x\log[x]\ge x-1$ for $x>0$. The remaining result is a presentation of Donsker and Varadhan's variational formula. Let $\bar{m}:=d\bar{\Exp}/d\Exp$ and $m_\alpha:=d\Exp^\alpha/d\Exp$ be the Radon-Nikodym weights. Then:
\begin{align}
S[\bar{\Exp}&|\Exp]-S[\bar{\Exp}|\Exp^\alpha] \\
&=\Exp[\bar{m}\log[\bar{m}]]-\Exp^\alpha[(\bar{m}/m_\alpha)\log[\bar{m}/m_\alpha]] \notag \\
&=\bar{\Exp}[\log[\bar{m}]]-(\bar{\Exp}[\log[\bar{m}]]-\bar{\Exp}[\log[m_\alpha]]) \notag \\
&=\bar{\Exp}[-\alpha\,dc-\log\Exp\exp[-\alpha\,dc]] \notag \\
&=-\alpha(\bar{\Exp}[dc]-\Exp[\alpha][dc]) \notag
\end{align}
relating the entropy-adjusted mean to the relative entropies of the measures.
\end{proof}

Instances of the final equality for $\bar{\Exp}=\Exp$ and $\bar{\Exp}=\Exp^\alpha$ generate bounds for the entropy-adjusted mean:
\begin{equation}
\begin{multlined}[t][0.85\displaywidth]
\Exp^\alpha[dc]\le \\
\makebox[0pt]{$\displaystyle\frac{1}{\alpha}S[\Exp^\alpha|\Exp]+\Exp^\alpha[dc]=\Exp[\alpha][dc]=\Exp[dc]-\frac{1}{\alpha}S[\Exp|\Exp^\alpha]$}\hspace{-0.6cm} \\
\le\Exp[dc]\hspace{-0.6cm}
\end{multlined}
\end{equation}
in the case $\alpha>0$, with the inequalities reversed in the case $\alpha<0$. The lever $\alpha$ dials the entropy-adjusted mean $\Exp[\alpha][dc]$ across the span of possible values for $dc$ between $\essinf[dc]$ and $\esssup[dc]$ and no further. Entropic risk is, in this sense, more targeted than variance adjustment as a moments-based metric of value-at-risk. The equivalent measure $\Exp^\alpha$ plays a central role in this result, and the gap between the mean $\Exp[dc]$ and the entropy-adjusted mean $\Exp[\alpha][dc]$ is proportional to the distance between the measure $\Exp$ and the measure $\Exp^\alpha$, as defined by their relative entropy.

\subsection{Optimising entropic risk}

The investor has a choice of self-funded market strategies whose returns over the investment period are the components of the return vector $dq$. The return on the investment portfolio whose weights are the components of the weight vector $\omega$ is then $dc:=\omega\cdot dq$, the weighted sum of the returns on the market strategies.

Mean-variance and entropic-risk optimisation are the basis for investment strategies, leading in each case to equilibrium models of pricing and hedging. There are many similarities between these strategies, but also some critical differences. Mean-variance optimisation does not account for non-Gaussian features in the return, and the strategy may need to be moderated to avoid arbitrage. In contrast, entropic-risk optimisation incorporates all the cumulants of the return, and naturally leads to arbitrage-free equilibrium pricing.

Introduce the equivalent measure $\Exp^\phi$ parametrised by the vector $\phi$ with Radon-Nikodym weight:
\begin{equation}
\bfrac{d\Exp^\phi}{d\Exp\hfill}:=\frac{\exp[-\phi\cdot dq]}{\Exp[\exp[-\phi\cdot dq]}
\end{equation}
Price measures that are equivalent to the expectation measure are identified within this family from the equilibrium of the entropic-risk optimisation strategy. The parameter $\phi$ arises from the change of measure in the definition of the entropic risk metric, which depends on the existence of the cumulant generating function. If this does not exist, the quadratic-regularised measure is used instead, with Radon-Nikodym weight:
\begin{equation}
\bfrac{d\Exp^\phi}{d\Exp\hfill}:=\frac{\exp[-\phi\cdot dq-\frac{1}{2}dq\cdot\zeta\,dq]}{\Exp[\exp[-\phi\cdot dq-\frac{1}{2}dq\cdot\zeta\,dq]]}
\end{equation}
regularised by the positive-definite matrix $\zeta$. Regularisation is removed in the limit $\zeta\to0$; if the strategy does not survive this limit, then regularisation, which moderates the impact of large events, is retained as a parameter of the model. In the following, the cumulant generating function is assumed to exist, implying the use of quadratic regularisation wherever necessary.

In the entropic-risk optimisation strategy, the investor seeks the optimal portfolio $\omega$ of market strategies that maximises the entropy-adjusted mean:
\begin{equation}
\Exp[\alpha][\omega\cdot dq]:=-\frac{1}{\alpha}\log\Exp\exp[-\alpha\omega\cdot dq]
\end{equation}
Risk aversion is controlled by the parameter $\alpha>0$, and entropic risk is monotonic in this range:
\begin{equation}
\begin{multlined}[t][0.85\displaywidth]
\essinf[\omega\cdot dq]= \\
\Exp[\infty][\omega\cdot dq]\le\Exp[\alpha][\omega\cdot dq]\le\Exp[0][\omega\cdot dq] \\
=\omega\cdot\Exp[dq]
\end{multlined}
\end{equation}
Large values of $\alpha$ thus imply that the investor takes more account of negative outcomes in the strategic objective, corresponding to a lower risk appetite. The optimal strategy is the stationary point of entropic risk $\omega=(1/\alpha)\phi$ where the unit optimal portfolio $\phi$ satisfies the calibration condition:
\begin{equation}
0=\Exp^\phi[dq]
\end{equation}
As a practical observation, the Newton-Raphson scheme:
\begin{equation}
\partial\phi=\Var^\phi[dq]^{-1}\Exp^\phi[dq]
\end{equation}
initialised with $\phi=0$ calibrates the unit optimal portfolio to the calibration condition when the variance is invertible. This portfolio is a property of the market return, generating the price measure $\Exp^\phi$ equivalent to the expectation measure $\Exp$. Applying the entropic risk metric theorem for the case $dc=\phi\cdot dq$ and $\alpha=1$ leads to the result:
\begin{align}
S[\bar{\Exp}|\Exp]&=S[\bar{\Exp}|\Exp^\phi]+S[\Exp^\phi|\Exp]-\phi\cdot(\bar{\Exp}[dq]-\Exp^\phi[dq]) \\
&\ge S[\Exp^\phi|\Exp]-\phi\cdot(\bar{\Exp}[dq]-\Exp^\phi[dq]) \notag
\end{align}
Among the equivalent measures $\bar{\Exp}$ satisfying the calibration condition $\bar{\Exp}[dq]=0$, the price measure $\Exp^\phi$ thus has the lowest entropy relative to the expectation measure $\Exp$.

The optimal strategy is constructed on the assumption that the investor has expertise in all the available markets. In practice, markets are tiered and investors specialise in specific market sectors, and the optimal strategy across all sectors emerges as the net strategy of multiple specialist investors. This arrangement is conveniently captured in the entropic risk metric.

Suppose that the market return vector decomposes as a direct sum $dq=dq_1\oplus\cdots\oplus dq_n$ where $dq_i$ is the return vector for the $i$th market tier. The optimal portfolio is similarly decomposed as a direct sum $\phi=\phi_1\oplus\cdots\oplus\phi_n$ where $\phi_i$ is the portfolio in the $i$th market tier, and is solved from the calibration conditions:
\begin{equation}
0=\Exp^{\phi_1\oplus\cdots\oplus\phi_n}[dq_1]=\cdots=\Exp^{\phi_1\oplus\cdots\oplus\phi_n}[dq_n]
\end{equation}
This provides $n$ conditions for the $n$ components of the optimal portfolio. While these calibration conditions need to be solved simultaneously, market tiering is exploited to incrementally build the optimal portfolio from the independent contributions of sector investors, localising at each tier the risk management activity for the corresponding market expert. In this perspective, there are $n(n+1)/2$ calibration conditions for $n$ sub-markets:
\begin{align}
0&=\Exp^{\phi_1^1}[dq_1] \\
0&=\Exp^{\phi_1^2\oplus\phi_2^2}[dq_1]=\Exp^{\phi_1^2\oplus\phi_2^2}[dq_2] \notag \\
&\makebox[\widthof{$=$}][r]{$\vdots$} \notag \\
0&=\Exp^{\phi_1^n\oplus\cdots\oplus\phi_n^n}[dq_1]=\cdots=\Exp^{\phi_1^n\oplus\cdots\oplus\phi_n^n}[dq_n] \notag
\end{align}
where the calibration conditions in the $i$th row maximise the entropy-adjusted mean of the $i$th optimal portfolio $\phi^i=\phi_1^i\oplus\cdots\oplus\phi_i^i$ constituted in the lowest $i$ market tiers only. The $i$th investor then takes responsibility for determining the primary component $\psi_i$ of the $i$th optimal portfolio and its delta $\delta^i=\delta_1^i\oplus\cdots\oplus\delta_{i-1}^i$ to the $(i-1)$th optimal portfolio in the underlying sub-market, using the invertible linear change of variables:
\begin{equation}\arraycolsep=1.4pt\def\arraystretch{1.4}\abovedisplayskip=0.6\abovedisplayskip
\begin{array}{lr}
\begin{array}{rl}
\psi_i&=\phi_i^i \\
\delta_j^i&=\phi_j^{i-1}-\phi_j^i
\end{array} &
\qquad\phi_j^i=\psi_j-(\delta_j^{j+1}+\cdots+\delta_j^i)
\end{array}
\end{equation}
Stacking these deltas for $i=1,\ldots,n$ generates the optimal portfolio of the full market.

For each $i=0,\ldots,n$, define the equivalent price measure $\Exp^i$ with Radon-Nikodym weight:
\begin{equation}
\bfrac{d\Exp^i}{d\Exp\hfill}:=\frac{\exp[-(\phi_1^i\cdot dq_1+\cdots+\phi_i^i\cdot dq_i)]}{\Exp[\exp[-(\phi_1^i\cdot dq_1+\cdots+\phi_i^i\cdot dq_i)]]}
\end{equation}
and define the $i$th entropy-adjusted mean:
\begin{equation}
\mu^i:=-\log\Exp\exp[-(\phi_1^i\cdot dq_1+\cdots+\phi_i^i\cdot dq_i)]
\end{equation}
At the $i$th tier, the portfolio $\phi^i$ maximises the entropy-adjusted mean $\mu^i$ when the primary portfolio $\psi_i$ and the hedge portfolio $\delta_1^i\oplus\cdots\oplus\delta_{i-1}^i$ are calibrated to the hedge conditions:
\begin{align}
&0= \\
&\Exp^{i-1}[dq_j\exp[-(\psi_i\cdot dq_i-(\delta_1^i\cdot dq_1+\cdots+\delta_{i-1}^i\cdot dq_{i-1}))]] \notag
\end{align}
for $j=1,\ldots,i$. Starting with the expectation measure $\Exp^0=\Exp$, these calibrations are iterated to obtain the unit optimal portfolio $\phi^n$ in the full market, with its associated price measure $\Exp^n$ calibrated to the calibration conditions for the market strategies.

Since expanding the market can only increase the opportunities for the investor, the maximum entropy-adjusted mean increases, $\mu^i\ge\mu^{i-1}$, as new tiers are introduced. Furthermore, the $i$th tier introduces no additional advantage for the strategy if the optimal portfolios $\phi^i$ and $\phi^{i-1}$ satisfy the indifference condition $\mu^i=\mu^{i-1}$. This is adopted as the equilibrium condition for pricing derivative strategies that have no intrinsic value beyond their contractual link with underlying strategies.

\section{Entropic pricing and hedging}

\begin{figure*}[!p]
\centering\normalsize\setlength{\tabcolsep}{0.5\columnsep}
\begin{tabular}{@{}C{\textwidth}@{}}
Entropic Pricing and Hedging \\[0.5ex] \hline
\\[-1ex]
{\begin{minipage}[t]{\textwidth}\useparinfo
Fix a finite investment horizon and let $\Exp$ be the expectation measure mapping variables at the final time $T=t+dt$ to their expectations at the initial time $t$. Assume the market is tiered into a set of underlying securities with initial price vector $q$ and final price vector $Q=q+dq$ and a derivative security with initial price $p$ and final price $P=p+dp$. Consider two self-funded strategies: one comprising underlying securities only; and one that also includes the derivative security.
\end{minipage}} \\\\[-1ex]
{\begin{minipage}[t]{0.5\textwidth-\tabcolsep}\useparinfo
{\bf Underlying only:} In this strategy, the investment comprises an underlying portfolio $\omega$ satisfying the initial condition:
\begin{equation}
\omega\cdot q=0
\end{equation}
so that the investment is self-funded. The return on the investment has entropy-adjusted mean:
\begin{align}
\Exp[\alpha]&[\omega\cdot dq]= \\
&-\frac{1}{\alpha}\log\Exp\exp[-\alpha\omega\cdot dq] \notag
\end{align}
at the risk aversion $\alpha>0$. The optimal portfolio $\omega$ that maximises this satisfies the {\bf calibration condition}:
\begin{equation}
\Exp[(dq-qr\,dt)\exp[-\alpha\omega\cdot dq]]=0
\end{equation}
where $r\,dt$ is the Lagrange multiplier used to implement the self-funding condition.
\end{minipage}}
\hfill\vline\hfill
{\begin{minipage}[t]{0.5\textwidth-\tabcolsep}\useparinfo
{\bf Underlying and derivative:} In this strategy, the investment comprises the derivative security and an underlying portfolio $(\omega-\delta)$ satisfying the initial condition:
\begin{equation}
p+(\omega-\delta)\cdot q=0
\end{equation}
so that the investment is self-funded. The return on the investment has entropy-adjusted mean:
\begin{align}
\Exp[\alpha]&[dp+(\omega-\delta)\cdot dq]= \\
&-\frac{1}{\alpha}\log\Exp\exp[-\alpha(dp+(\omega-\delta)\cdot dq)] \notag
\end{align}
at the risk aversion $\alpha>0$. The hedge portfolio $\delta$ that maximises this satisfies the {\bf hedge condition}:
\begin{equation}
\Exp[(dq-q(r+\alpha s)\,dt)\exp[-\alpha(dp+(\omega-\delta)\cdot dq)]]=0
\end{equation}
where $(r+\alpha s)\,dt$ is the Lagrange multiplier used to implement the self-funding condition.
\end{minipage}} \\\\[-1ex]
{\begin{minipage}[t]{\textwidth}\useparinfo
These two strategies offer different entropy-adjusted mean returns to the investor. When the derivative security has no intrinsic value beyond its contractual link with the underlying securities, the efficient market equilibrates to the indifferent state with matching entropy-adjusted means, leading to the {\bf price condition}:
\begin{equation}
\Exp[\alpha][\omega\cdot dq]=\Exp[\alpha][dp+(\omega-\delta)\cdot dq]
\end{equation}
The calibration condition is solved to derive the optimal portfolio $\omega$ of underlying securities, and the hedge condition is then solved to derive the hedge portfolio $\delta$ of underlying securities that re-establishes optimality for the combined portfolio. With these portfolios, the fair initial price $p$ of the derivative security is derived from the price condition.
\end{minipage}} \\\\[-1ex]
{\begin{minipage}[t]{0.5\textwidth-\tabcolsep}\useparinfo
Define the price measure $\Exp^\phi$ equivalent to the expectation measure $\Exp$ with Radon-Nikodym weight:
\begin{equation}
\bfrac{d\Exp^\phi}{d\Exp^{\vphantom{\phi}}\hfill}:=\bfrac{\exp[-\phi\cdot dq]}{\Exp^{\vphantom{\phi}}[\exp[-\phi\cdot dq]]}
\end{equation}
While there is no explicit expression for the unit optimal portfolio $\phi:=\alpha\omega$ that solves the calibration condition, it is efficiently discovered by the Newton-Raphson scheme initialised with $\phi=0$ and repeating the step:
\begin{align}
r\,dt:={}& \\
&\hspace{-1cm}\bfrac{(\phi+\Var^\phi[dq]^{-1}\Exp^\phi[dq])\cdot q}{\Var^\phi[dq]^{-1}q\cdot q} \notag \\
\partial\phi={}&\Var^\phi[dq]^{-1}(\Exp^\phi[dq]-qr\,dt) \notag
\end{align}
until the required accuracy is achieved.
\end{minipage}}
\hfill\vline\hfill
{\begin{minipage}[t]{0.5\textwidth-\tabcolsep}\useparinfo
Define the price measure $\Exp^{\phi\delta}:=\Exp^{(\phi-\alpha\delta)\oplus\alpha}$ equivalent to the price measure $\Exp^\phi$ with Radon-Nikodym weight:
\begin{equation}
\bfrac{d\Exp^{\phi\delta}}{d\Exp^\phi\hfill}:=\bfrac{\exp[-\alpha(P-\delta\cdot dq)]}{\Exp^\phi[\exp[-\alpha(P-\delta\cdot dq)]]}
\end{equation}
While there is no explicit expression for the hedge portfolio $\delta$ that solves the hedge and price conditions, it is efficiently discovered by the Newton-Raphson scheme initialised with $\delta=0$ and repeating the step:
\begin{align}
(r+\alpha s)\,dt:={}& \\
&\hspace{-1cm}\bfrac{\alpha\Exp^\phi[\alpha][P-\delta\cdot Q]+\Var^{\phi\delta}[dq]^{-1}\Exp^{\phi\delta}[dq]\cdot\Exp^{\phi\delta}[Q]}{\Var^{\phi\delta}[dq]^{-1}q\cdot\Exp^{\phi\delta}[Q]} \notag \\
\partial\delta={}&-\Var^{\phi\delta}[dq]^{-1}(\Exp^{\phi\delta}[dq]-q(r+\alpha s)\,dt)/\alpha \notag
\end{align}
until the required accuracy is achieved.
\end{minipage}} \\
\\[-1ex] \hline
\end{tabular}
\caption{In entropic pricing, economic principles are established that determine the optimal portfolio of underlying securities and the hedge portfolio and price of the derivative security, based on the maximisation of the entropy-adjusted mean return.}
\label{fig:entropicpricingprinciples}
\vspace{100cm}
\end{figure*}

\begin{figure*}[!pt]
\centering\normalsize\setlength{\tabcolsep}{0.5\columnsep}
\begin{tabular}{@{}C{\textwidth}@{}}
Entropic Pricing and Hedging: Normally Distributed Returns \\[0.5ex] \hline
\\[-1ex]
{\begin{minipage}[t]{\textwidth}\useparinfo
Fix a finite investment horizon and let $\Exp$ be the expectation measure mapping variables at the final time $T=t+dt$ to their expectations at the initial time $t$. Assume the market is tiered into a set of underlying securities with initial price vector $q$ and final price vector $Q=q+dq$ and a derivative security with initial price $p$ and final price $P=p+dp$. Also assume that the underlying and derivative returns have joint normal distribution in the expectation measure. Consider two self-funded strategies: one comprising underlying securities only; and one that also includes the derivative security.
\end{minipage}} \\\\[-1ex]
{\begin{minipage}[t]{0.5\textwidth-\tabcolsep}\useparinfo
{\bf Underlying only:} In this strategy, the investment comprises an underlying portfolio $\omega$ satisfying the initial condition:
\begin{equation}
\omega\cdot q=0
\end{equation}
so that the investment is self-funded. The return on the investment has entropy-adjusted mean:
\begin{align}
\Exp[\alpha]&[\omega\cdot dq]= \\
&\Exp[\omega\cdot dq]-\frac{1}{2}\alpha\Var[\omega\cdot dq] \notag
\end{align}
at the risk aversion $\alpha>0$. The optimal portfolio $\omega$ that maximises this is given by the {\bf calibration solution}:
\begin{align}
r\,dt&=\bfrac{\Var[dq]^{-1}q\cdot\Exp[dq]}{\Var[dq]^{-1}q\cdot q} \\
\alpha\omega&=\Var[dq]^{-1}(\Exp[dq]-qr\,dt) \notag
\end{align}
where $r\,dt$ is the Lagrange multiplier used to implement the self-funding condition.
\end{minipage}}
\hfill\vline\hfill
{\begin{minipage}[t]{0.5\textwidth-\tabcolsep}\useparinfo
{\bf Underlying and derivative:} In this strategy, the investment comprises the derivative security and an underlying portfolio $(\omega-\delta)$ satisfying the initial condition:
\begin{equation}
p+(\omega-\delta)\cdot q=0
\end{equation}
so that the investment is self-funded. The return on the investment has entropy-adjusted mean:
\begin{align}
\Exp[\alpha]&[dp+(\omega-\delta)\cdot dq]= \\
&\Exp[dp+(\omega-\delta)\cdot dq]-\frac{1}{2}\alpha\Var[dp+(\omega-\delta)\cdot dq] \notag
\end{align}
at the risk aversion $\alpha>0$. The hedge portfolio $\delta$ that maximises this is given by the {\bf hedge solution}:
\begin{align}
s\,dt&=\bfrac{p-\Var[dq]^{-1}\Var[dq,dp]\cdot q}{\Var[dq]^{-1}q\cdot q} \\
\delta&=\Var[dq]^{-1}(\Var[dq,dp]+qs\,dt) \notag
\end{align}
where $(r+\alpha s)\,dt$ is the Lagrange multiplier used to implement the self-funding condition.
\end{minipage}} \\\\[-1ex]
{\begin{minipage}[t]{\textwidth}\useparinfo
These two strategies offer different entropy-adjusted mean returns to the investor. When the derivative security has no intrinsic value beyond its contractual link with the underlying securities, the efficient market equilibrates to the indifferent state with matching entropy-adjusted means, leading to the {\bf price solution} $p=\delta\cdot q$ with hedge portfolio:
\begin{equation}
\delta=\Var[dq]^{-1}\left(\Var[dq,P]+\frac{1}{\alpha}\left(\sqrt{1+2\alpha\bfrac{\Exp[\alpha][P-\Var[dq]^{-1}\Var[dq,P]\cdot Q]}{\Var[dq]^{-1}\bar{q}\cdot\bar{q}}}-1\right)\bar{q}\right)
\end{equation}
where $\bar{q}:=q(1+r\,dt)$ inflates the initial underlying price $q$ by the implied funding rate $r$. The hedge portfolio of underlying securities simultaneously funds and hedges the derivative security, and the price solution decomposes the initial derivative price as the weighted sum of its component initial underlying prices.
\end{minipage}} \\
\\[-1ex] \hline
\end{tabular}
\caption{When the underlying and derivative returns have joint normal distribution in the expectation measure, the calibration, hedge and price conditions are solved explicitly in terms of their means and covariances.}
\label{fig:entropicpricingnormal}
\end{figure*}

\begin{figure*}[!pt]
\centering\normalsize\setlength{\tabcolsep}{0.5\columnsep}
\begin{tabular}{@{}C{\textwidth}@{}}
Entropic Pricing and Hedging: Continuous Price Diffusion \\[0.5ex] \hline
\\[-1ex]
{\begin{minipage}[t]{\textwidth}\useparinfo
Fix a finite investment horizon and let $\Exp$ be the expectation measure mapping variables at the final time $T=t+dt$ to their expectations at the initial time $t$. Assume the market is tiered into a set of underlying securities with initial price vector $q$ and final price vector $Q=q+dq$ and a derivative security with initial price $p$ and final price $P=p+dp$. Also assume that the prices diffuse continuously, so that their covariances are $O[dt]$ and their higher cumulants are $O[dt^2]$. Consider two self-funded strategies: one comprising underlying securities only; and one that also includes the derivative security.
\end{minipage}} \\\\[-1ex]
{\begin{minipage}[t]{0.5\textwidth-\tabcolsep}\useparinfo
{\bf Underlying only:} In this strategy, the investment comprises an underlying portfolio $\omega$ satisfying the initial condition:
\begin{equation}
\omega\cdot q=0
\end{equation}
so that the investment is self-funded. The return on the investment has entropy-adjusted mean:
\begin{align}
\Exp[\alpha]&[\omega\cdot dq]= \\
&\Exp[\omega\cdot dq]-\frac{1}{2}\alpha\Var[\omega\cdot dq]+O[dt^2] \notag
\end{align}
at the risk aversion $\alpha>0$. The optimal portfolio $\omega$ that maximises this is given by the {\bf calibration solution}:
\begin{align}
r\,dt&=\bfrac{\Var[dq]^{-1}q\cdot\Exp[dq]}{\Var[dq]^{-1}q\cdot q}+O[dt^2] \\
\alpha\omega&=\Var[dq]^{-1}(\Exp[dq]-qr\,dt)+O[dt] \notag
\end{align}
where $r\,dt$ is the Lagrange multiplier used to implement the self-funding condition.
\end{minipage}}
\hfill\vline\hfill
{\begin{minipage}[t]{0.5\textwidth-\tabcolsep}\useparinfo
{\bf Underlying and derivative:} In this strategy, the investment comprises the derivative security and an underlying portfolio $(\omega-\delta)$ satisfying the initial condition:
\begin{equation}
p+(\omega-\delta)\cdot q=0
\end{equation}
so that the investment is self-funded. The return on the investment has entropy-adjusted mean:
\begin{align}
\Exp&[\alpha][dp+(\omega-\delta)\cdot dq]= \\
&\Exp[dp+(\omega-\delta)\cdot dq]-\frac{1}{2}\alpha\Var[dp+(\omega-\delta)\cdot dq]+O[dt^2] \notag
\end{align}
at the risk aversion $\alpha>0$. The hedge portfolio $\delta$ that maximises this is given by the {\bf hedge solution}:
\begin{align}
s\,dt&=\bfrac{p-\Var[dq]^{-1}\Var[dq,dp]\cdot q}{\Var[dq]^{-1}q\cdot q}+O[dt^2] \\
\delta&=\Var[dq]^{-1}(\Var[dq,dp]+qs\,dt)+O[dt] \notag
\end{align}
where $(r+\alpha s)\,dt$ is the Lagrange multiplier used to implement the self-funding condition.
\end{minipage}} \\\\[-1ex]
{\begin{minipage}[t]{\textwidth}\useparinfo
These two strategies offer different entropy-adjusted mean returns to the investor. When the derivative security has no intrinsic value beyond its contractual link with the underlying securities, the efficient market equilibrates to the indifferent state with matching entropy-adjusted means, leading to the {\bf price solution}:
\begin{equation}
p=\frac{1}{1+r\,dt}\Exp^\phi[P]-\frac{1}{2}\alpha\Var[P-\Var[dq]^{-1}\Var[dq,P]\cdot dq]+O[dt^2]
\end{equation}
The first term expresses the discounted derivative price as a martingale in the price measure. The second term then adjusts the price to account for the residual risk after hedging.
\end{minipage}} \\
\\[-1ex] \hline
\end{tabular}
\caption{When the underlying and derivative prices diffuse continuously in the expectation measure, the calibration, hedge and price conditions are solved explicitly to $O[dt]$ in terms of their means and covariances. These solutions can be used when the portfolio is rebalanced continuously.}
\label{fig:entropicpricingcontinuous}
\end{figure*}

The market of underlying securities accessible to the investor performs the two key functions of {\em funding} and {\em hedging} for the derivative security. Fix a finite investment horizon with initial time $t$ and final time $T=t+dt$, and assume the market is tiered into a set of underlying securities with initial price vector $q$ and final price vector $Q=q+dq$ and a derivative security with initial price $p$ and final price $P=p+dp$. The investor quantifies the performance of the investment strategy using an initial assessment of the final economic state, encapsulated in the expectation measure $\Exp$ that maps final variables to their initial expectations. Following the strategy developed in the previous section, the optimal portfolio then maximises the entropy-adjusted mean return at a specified risk aversion $\alpha$ for the investor.

There are two self-funded investment strategies to consider: one that includes only underlying securities; and one that also includes the derivative security. In the first strategy, the portfolio $\omega$ of underlying securities is optimal when it maximises the entropy-adjusted mean return $\Exp[\alpha][\omega\cdot dq]$. The optimal portfolio thus satisfies the {\bf calibration condition}:
\begin{equation}
\Exp[(dq-qr\,dt)\exp[-\alpha\omega\cdot dq]]=0
\end{equation}
where $r\,dt$ is the Lagrange multiplier used to implement the self-funding condition $\omega\cdot q=0$. Adjusting for the opportunity and risk that adding the derivative security presents, in the second strategy the optimal portfolio $\omega$ is modified by the hedge portfolio $\delta$, and optimality is restored in the combined portfolio when it maximises the entropy-adjusted mean return $\Exp[\alpha][dp+(\omega-\delta)\cdot dq]$. The hedge portfolio thus satisfies the {\bf hedge condition}:
\begin{equation}
\Exp[(dq-q(r+\alpha s)\,dt)\exp[-\alpha(dp+(\omega-\delta)\cdot dq)]]=0
\end{equation}
where $(r+\alpha s)\,dt$ is the Lagrange multiplier used to implement the self-funding condition $p+(\omega-\delta)\cdot q=0$.

Returns from correlated changes in the derivative and underlying prices are cancelled by hedging, with the residual return on the combined portfolio maximising the entropy-adjusted mean. When its price additionally satisfies the {\bf price condition}:
\begin{equation}
\Exp[\alpha][\omega\cdot dq]=\Exp[\alpha][dp+(\omega-\delta)\cdot dq]
\end{equation}
the derivative security neither advantages nor disadvantages the investment strategy. Since it has no value beyond its contractual links, this indifference relationship is assumed as the equilibrium condition for the derivative price. The market mechanism that achieves equilibrium is considered in more detail later in the essay.

The calibration, hedge and price conditions, together with the two self-funding conditions, are the foundations of entropic pricing. By avoiding the unrealistic assumptions that enable risk-neutral theory, entropic pricing can be consistently applied to pricing and hedging across a range spanning everything from algorithmic trading to XVA and capital modelling.

\subsection{Calibration}

Investments are self-funded within the underlying market, and the entropic model does not assume the existence of a risk-free funding rate for investments that are not self-funded. Calibration identifies the price measure $\Exp^\phi$ equivalent to the expectation measure $\Exp$ with Radon-Nikodym weight:
\begin{equation}
\bfrac{d\Exp^\phi}{d\Exp\hfill}:=\bfrac{\exp[-\phi\cdot dq]}{\Exp[\exp[-\phi\cdot dq]]}
\end{equation}
where the unit optimal portfolio $\phi$ and implied funding rate $r$ are solved from the calibration and self-funding conditions:
\begin{align}
qr\,dt&=\Exp^\phi[dq] \\
\phi\cdot q&=0 \notag
\end{align}
The implied funding rate defined here originates in the optimisation as the Lagrange multiplier for the self-funding condition.

When the portfolio is rebalanced on the discrete schedule $t_0<t_1<\cdots<t_n$, at each time $t_i$ the unit optimal portfolio $\phi_i$ is determined at the start of the investment interval $t_i<t_{i+1}$ and the calibration condition is compounded to generate the price measure $\Exp_i^\phi$ equivalent to the expectation measure $\Exp_i$:
\begin{align}
\bfrac{d\Exp_i^\phi}{d\Exp_i\hfill}:={}&\prod_{j=i}^{n-1}\bfrac{\exp[-\phi_j\cdot dq_j]}{\Exp_j[\exp[-\phi_j\cdot dq_j]]} \\
={}&\exp\!\left[\sum_{j=i}^{n-1}(-\phi_j\cdot dq_j-\log\Exp_j\exp[-\phi_j\cdot dq_j])\right] \notag
\end{align}
for variables on the discrete schedule. In the limit of continuous rebalancing, the sum in the exponent is replaced by the integral.

\begin{figure*}[!pt]
\centering
\begin{tabular}{@{}C{0.33\textwidth-\tabcolsep}C{0.33\textwidth-\tabcolsep}C{0.33\textwidth-\tabcolsep}@{}}
\includegraphics[width=0.33\textwidth-\tabcolsep]{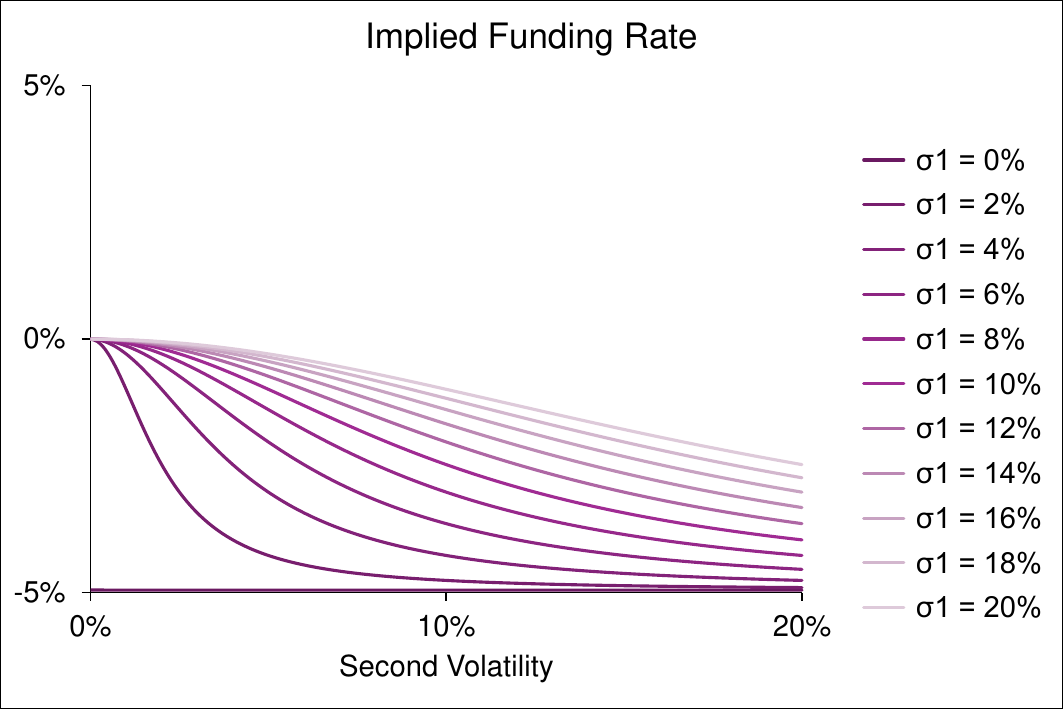}&
\includegraphics[width=0.33\textwidth-\tabcolsep]{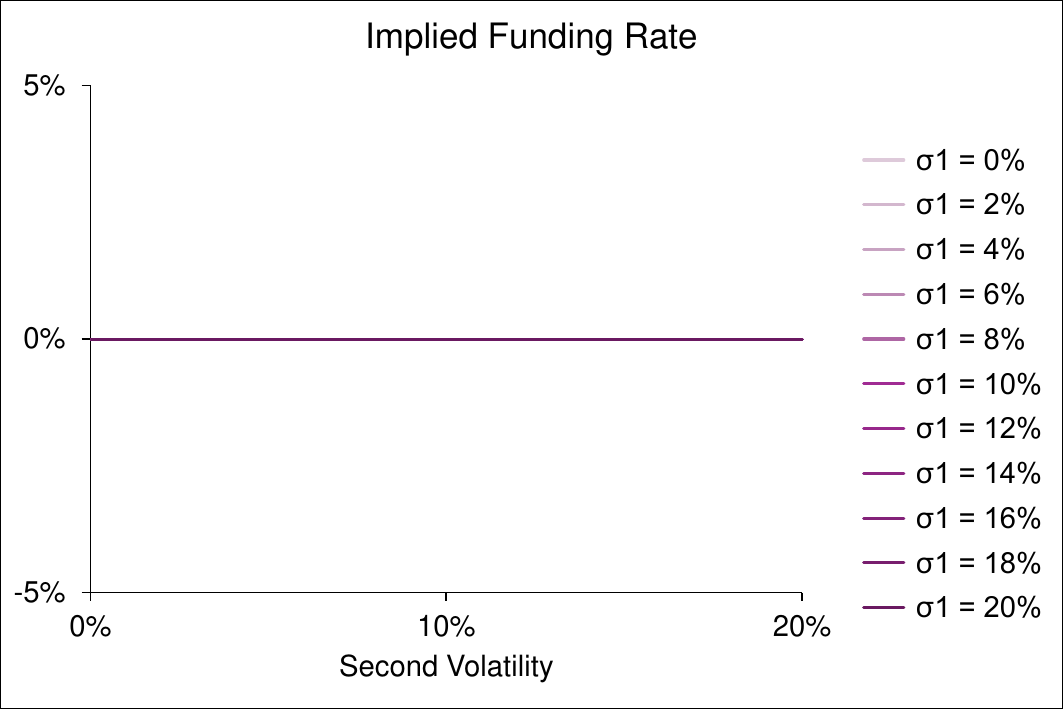}&
\includegraphics[width=0.33\textwidth-\tabcolsep]{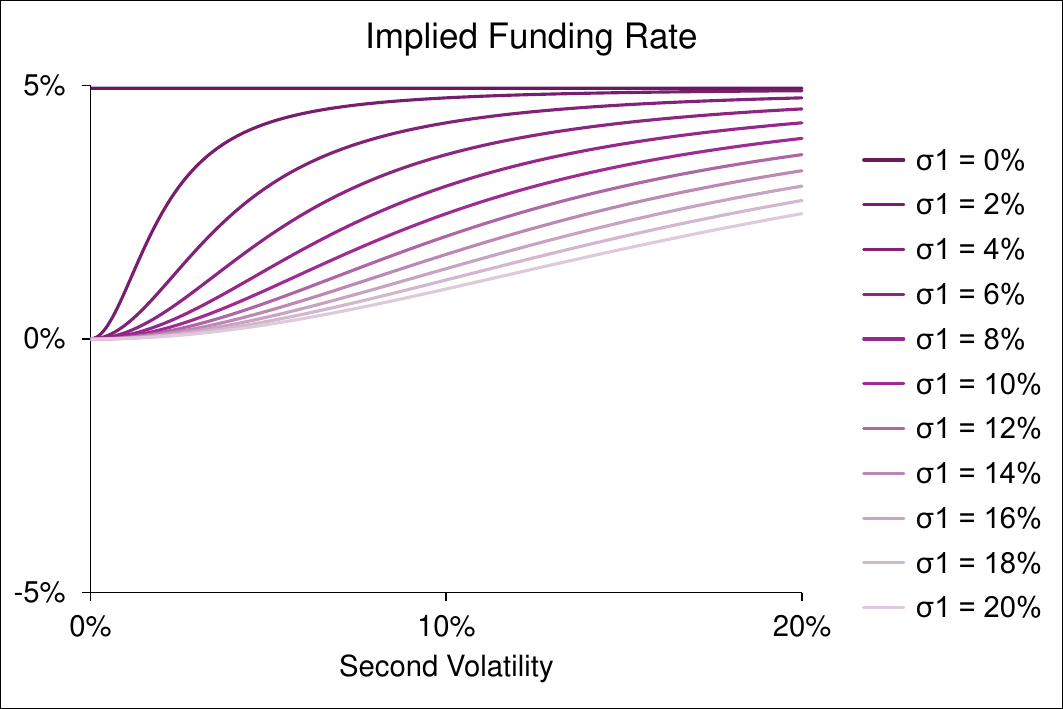}\\
Normal ($\mu_1=-5\%$)&Normal ($\mu_1=0\%$)&Normal ($\mu_1=5\%$)\\
\includegraphics[width=0.33\textwidth-\tabcolsep]{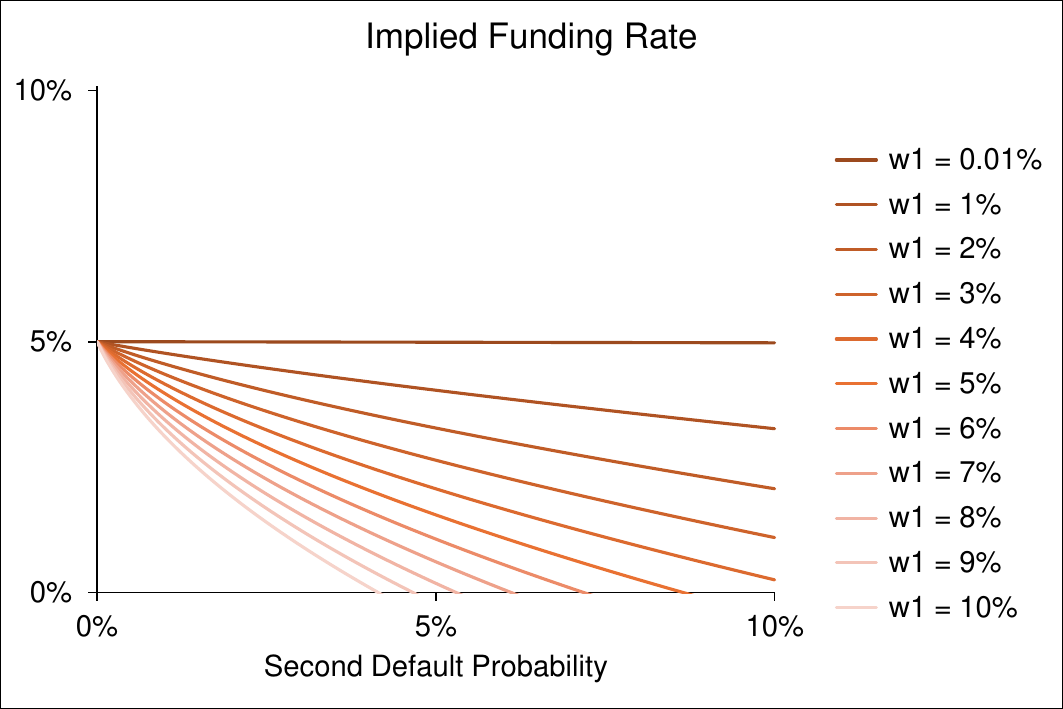}&
\includegraphics[width=0.33\textwidth-\tabcolsep]{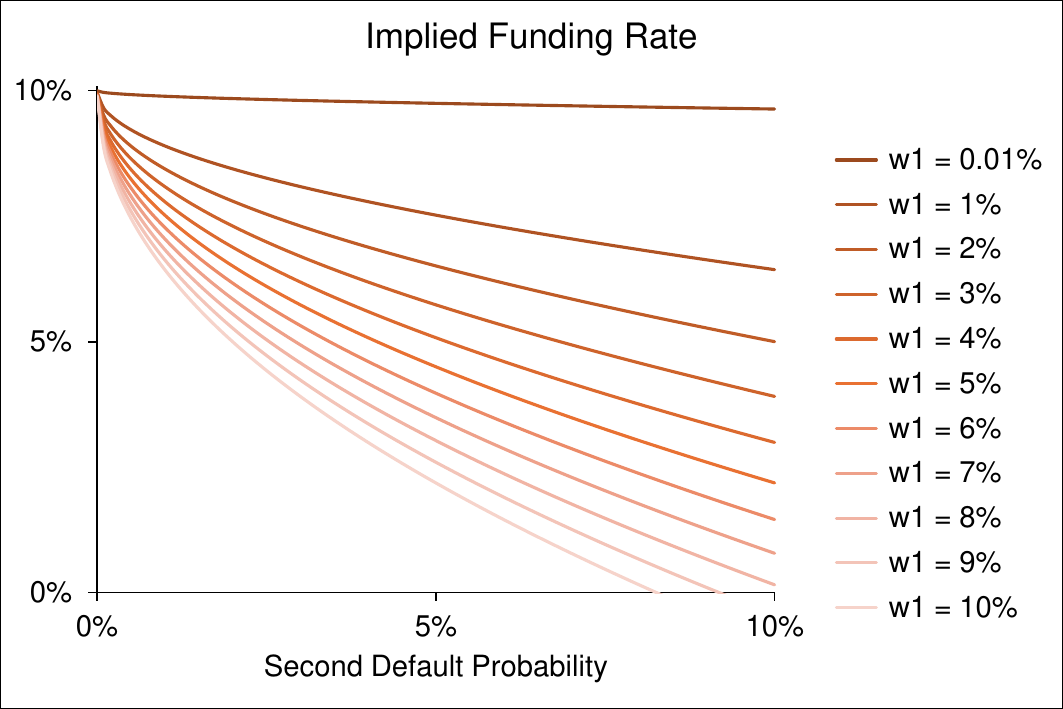}&
\includegraphics[width=0.33\textwidth-\tabcolsep]{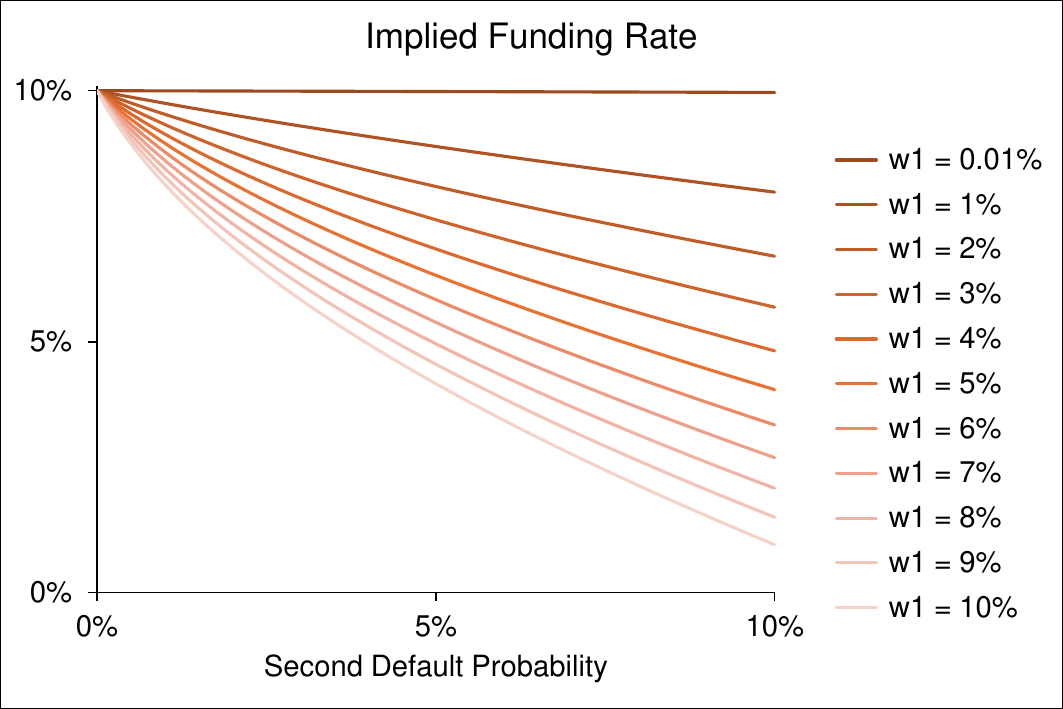}\\
Binomial ($c_1=5\%$)&Binomial ($c_1=10\%$)&Binomial ($c_1=15\%$)
\end{tabular}
\caption{Discounting is an emergent feature in entropic pricing, where the funding rate is implied as the Lagrange multiplier that implements the self-funding condition. In these studies the investment horizon is $dt=1$ and the underlying market comprises two securities with unit initial prices and independent final prices.
\\\hspace*{\savedindent}In the first study, the final prices are modelled as independent normal variables with mean $q(1+\mu\,dt)$ and variance $q^2\sigma^2\,dt$. Examples are plotted for the first underlying with drift $\mu_1=-5\%,0\%,5\%$ and volatility $\sigma_1=0\%,\ldots,20\%$. The second underlying has drift $\mu_2=0\%$ and the implied funding rate is plotted as a function of its volatility $\sigma_2$. The implied funding rate is inbetween the two drifts and gravitates to the drift of the underlying with the lowest volatility.
\\\hspace*{\savedindent}In the second study, the final prices are modelled as independent binomial variables with possible values $(0,1+c\,dt)$ and probabilities $(w,1-w)$. Examples are plotted for the first underlying with coupon $c_1=5\%,10\%,15\%$ and default probability $w_1=0.01\%,\ldots,10\%$. The second underlying has coupon $c_2=10\%$ and the implied funding rate is plotted as a function of its default probability $w_2$. The implied funding rate is capped by the coupons and decreases as the default probability increases.}
\label{fig:fundingrate}
\end{figure*}

While there is no explicit solution to the calibration condition, the unit optimal portfolio and implied funding rate can be identified in some scenarios, and more generally are efficiently discovered by iterating the Newton-Raphson scheme. In each of these scenarios, the expectation in the calibration condition is simplified or approximated, which facilitates its solution.
\begin{description}[leftmargin=0\parindent]
\item[Normal model]When the underlying prices are joint normally distributed in the expectation measure, their cumulant generating function is:
\begin{align}
\log\Exp\exp[\psi\cdot dq]&= \\
&\hspace{-2cm}\psi\cdot\Exp[dq]+\frac{1}{2}\psi\cdot\Var[dq]\psi \notag \\
\log\Exp^\phi\exp[\psi\cdot dq]&= \notag \\
&\hspace{-2cm}\psi\cdot(\Exp[dq]-\Var[dq]\phi)+\frac{1}{2}\psi\cdot\Var[dq]\psi \notag
\end{align}
The calibration condition becomes:
\begin{equation}
qr\,dt=\Exp[dq]-\Var[dq]\phi
\end{equation}
which is solved with the unit optimal portfolio:
\begin{equation}
\phi=\Var[dq]^{-1}(\Exp[dq]-qr\,dt)
\end{equation}
The implied funding rate is then determined from the self-funding condition:
\begin{equation}
r\,dt=\bfrac{\Var[dq]^{-1}q\cdot\Exp[dq]}{\Var[dq]^{-1}q\cdot q}
\end{equation}
\item[Continuous model]When the underlying prices diffuse continuously in the expectation measure, their cumulant generating function is:
\begin{align}
\log\Exp\exp[\psi\cdot dq]&= \\
&\hspace{-2cm}\psi\cdot\Exp[dq]+\frac{1}{2}\psi\cdot\Var[dq]\psi+O[dt^2] \notag \\
\log\Exp^\phi\exp[\psi\cdot dq]&= \notag \\
&\hspace{-2cm}\psi\cdot(\Exp[dq]-\Var[dq]\phi)+\frac{1}{2}\psi\cdot\Var[dq]\psi+O[dt^2] \notag
\end{align}
where the first-order cumulants are $O[1]$ or higher, the second-order cumulants are $O[dt]$ and the higher-order cumulants are $O[dt^2]$ or higher. The calibration condition becomes:
\begin{equation}
qr\,dt=\Exp[dq]-\Var[dq]\phi+O[dt^2]
\end{equation}
which is solved with the unit optimal portfolio:
\begin{equation}
\phi=\Var[dq]^{-1}(\Exp[dq]-qr\,dt)+O[dt]
\end{equation}
The implied funding rate is then determined from the self-funding condition:
\begin{equation}
r\,dt=\bfrac{\Var[dq]^{-1}q\cdot\Exp[dq]}{\Var[dq]^{-1}q\cdot q}+O[dt^2]
\end{equation}
In addition to the assumption that the underlying prices diffuse continuously, this solution is applicable only when the optimal portfolio is rebalanced with sufficient frequency so that the $O[dt]$ corrections to the strategy can be neglected.
\item[Numerical scheme]At each step the Newton-Raphson scheme improves the approximate unit optimal portfolio $\phi$ by the correction $\partial\phi$ which solves the calibration and self-funding conditions to $O[\partial\phi]$. Neglecting $O[\partial\phi^2]$ and higher terms, the calibration condition becomes:
\begin{equation}
qr\,dt=\Exp^\phi[dq]-\Var^\phi[dq]\,\partial\phi
\end{equation}
which is solved with the correction:
\begin{equation}
\partial\phi=\Var^\phi[dq]^{-1}(\Exp^\phi[dq]-qr\,dt)
\end{equation}
The implied funding rate $r$ is then determined from the self-funding condition:
\begin{equation}
r\,dt=\bfrac{(\phi+\Var^\phi[dq]^{-1}\Exp^\phi[dq])\cdot q}{\Var^\phi[dq]^{-1}q\cdot q}
\end{equation}
The Newton-Raphson scheme is initialised with the choice $\phi=0$ and repeats these steps until the calibration condition is satisfied to the required level of accuracy.
\end{description}

The implied funding rate incorporates contributions from all the underlying returns, weighted by their predictabilities so that the least risky returns dominate. This property is demonstrated in the studies of \cref{fig:fundingrate}, which simplify the model with the assumption that the final underlying prices are independent. The calibration conditions decouple to generate a set of independent conditions, with the $i$th underlying security satisfying:
\begin{equation}
r=\frac{1}{dt}\left(\bfrac{\Exp[Q_i\exp[-\phi_iQ_i]]}{q_i\Exp[\exp[-\phi_iQ_i]]}-1\right)
\end{equation}
These conditions are solved separately to express the components of the unit optimal portfolio $\phi[r]$ as functions of the implied funding rate $r$. The self-funding condition $\phi[r]\cdot q=0$ is then solved for $r$.

In the first study, the final underlying price $Q_i$ is modelled as a normal variable with mean $q_i(1+\mu_i\,dt)$ and variance $q_i^2\sigma_i^2\,dt$, where $\mu_i$ is the drift and $\sigma_i$ is the volatility. The calibration condition derives:
\begin{equation}
\phi_i=\frac{\mu_i-r}{q_i^{\vphantom{2}}\sigma_i^2}
\end{equation}
and the self-funding condition then derives:
\begin{equation}
r=\sum\nolimits_i\bfrac{1/\sigma_i^2}{\sum\nolimits_j1/\sigma_j^2}\,\mu_i
\end{equation}
The implied funding rate is the weighted average of the drifts, with weight inversely proportional to the square of the volatility.

In the second study, the final underlying price $Q_i$ is modelled as a binomial variable with possible values $(0,1+c_i\,dt)$ occurring with probabilities $(w_i,1-w_i)$, where $c_i$ is the coupon and $w_i$ is the default probability. The calibration condition derives:
\begin{equation}
\phi_i=\frac{1}{1+c_i\,dt}\log\!\left[\frac{1-w_i}{w_i}\left(\frac{1+c_i\,dt}{q_i(1+r\,dt)}-1\right)\right]
\end{equation}
and the self-funding condition:
\begin{equation}
0=\sum\nolimits_i\frac{q_i}{1+c_i\,dt}\log\!\left[\frac{1-w_i}{w_i}\left(\frac{1+c_i\,dt}{q_i(1+r\,dt)}-1\right)\right]
\end{equation}
is solved for the implied funding rate.

\subsection{Hedging and pricing}

The price condition determines the initial price $p$ from the final price $P$ of the derivative security via the relationship:
\begin{equation}
0=\Exp^\phi[\alpha][dp-\delta\cdot dq]
\end{equation}
where the investment is funded by the hedge portfolio $\delta$ in the underlying market, chosen to maximise the entropy-adjusted mean return at risk aversion $\alpha$ subject to the self-funding condition $\delta\cdot q=p$.

When $\alpha=0$ the investor neglects the impact of residual risk from the hedging strategy and the derivative price is a discounted martingale in the price measure:
\begin{equation}
p=\frac{1}{1+r\,dt}\Exp^\phi[P]
\end{equation}
Over the discrete schedule $t_0<t_1<\cdots<t_n$ the price condition is compounded to derive:
\begin{equation}
p_i=\Exp_i^\phi\!\left[\prod_{j=i}^{n-1}\frac{1}{1+r_j\,dt_j}p_n\right]
\end{equation}
The model bears all the hallmarks of risk-neutral theory, including a kernel that performs stochastic discounting and a price measure that is equivalent to the expectation measure, but applies more generally. Entropic risk optimisation uniquely identifies both these ingredients from the optimal strategy in the underlying market, and does not need the additional assumptions of market completeness and the existence of risk-free funding.

When $\alpha\ne0$ the investor reserves against residual risk from the hedging strategy and the derivative price is a discounted log-martingale in the price measure:
\begin{equation}
p=\Exp^\phi[\alpha][P-\delta\cdot dq]
\end{equation}
Over the discrete schedule $t_0<t_1<\cdots<t_n$ the price condition is compounded to derive:
\begin{equation}
p_i=\Exp_i^\phi[\alpha]\!\left[p_n-\sum_{j=i}^{n-1}\delta_j\cdot dq_j\right]
\end{equation}
The model adjusts the initial price to protect the hedging strategy from the negative consequences of unhedged risks. If the underlying market is complete for the derivative security, so that the return on the hedged portfolio has zero variance, then no buffer is required and the price simplifies to the risk-neutral solution.

Hedging identifies the price measure $\Exp^{\phi\delta}:=\Exp^{(\phi-\alpha\delta)\oplus\alpha}$ equivalent to the price measure $\Exp^\phi$ with Radon-Nikodym weight:
\begin{equation}
\bfrac{d\Exp^{\phi\delta}}{d\Exp^\phi\hfill}:=\bfrac{\exp[-\alpha(P-\delta\cdot dq)]}{\Exp^\phi[\exp[-\alpha(P-\delta\cdot dq)]]}
\end{equation}
where the hedge portfolio $\delta$ and implied funding spread $s$ are solved from the hedge and self-funding conditions:
\begin{align}
q(r+\alpha s)\,dt&=\Exp^{\phi\delta}[dq] \\
\delta\cdot q&=\Exp^\phi[\alpha][P-\delta\cdot dq] \notag
\end{align}
The implied funding spread accounts for the incremental funding cost or benefit arising from the inclusion of the derivative security. Given the hedge portfolio $\delta$, the initial derivative price $p=\delta\cdot q$ decomposes as the weighted sum of its component initial underlying prices. The hedge portfolio of underlying securities thus simultaneously funds and hedges the derivative security, and is optimally selected for both objectives.

While there is no explicit solution to the hedge condition, the hedge portfolio and implied funding spread can be identified in some scenarios, and more generally are efficiently discovered by iterating the Newton-Raphson scheme. In each of these scenarios, the expectations in the hedge and self-funding conditions are simplified or approximated, which facilitates their solution.
\begin{description}[leftmargin=0\parindent]
\item[Normal model]When the underlying and derivative prices are joint normally distributed in the expectation measure, their cumulant generating function is:
\begin{align}
\log\Exp\exp[\beta\,dp+\psi\cdot dq]&= \\
&\hspace{-3cm}\beta\Exp[dp]+\psi\cdot\Exp[dq] \notag \\
&\hspace{-3cm}+\frac{1}{2}\beta^2\Var[dp]+\beta\psi\cdot\Var[dq,dp]+\frac{1}{2}\psi\cdot\Var[dq]\psi \notag \\
\log\Exp^\phi\exp[\beta\,dp+\psi\cdot dq]&= \notag \\
&\hspace{-3cm}\beta(\Exp[dp]-\phi\cdot\Var[dq,dp])+\psi\cdot(\Exp[dq]-\Var[dq]\phi) \notag \\
&\hspace{-3cm}+\frac{1}{2}\beta^2\Var[dp]+\beta\psi\cdot\Var[dq,dp]+\frac{1}{2}\psi\cdot\Var[dq]\psi \notag
\end{align}
With this assumption, the calibration condition is solved with unit optimal portfolio $\phi=\Var[dq]^{-1}(\Exp[dq]-qr\,dt)$ where the implied funding rate $r$ is tuned to the condition $\phi\cdot q=0$. The hedge condition becomes:
\begin{equation}
q(r+\alpha s)\,dt=qr\,dt-\alpha\Var[dq,P]+\alpha\Var[dq]\delta
\end{equation}
which is solved with the hedge portfolio:
\begin{equation}
\delta=\Var[dq]^{-1}(\Var[dq,P]+qs\,dt)
\end{equation}
The self-funding condition becomes:
\begin{align}
&\delta\cdot q= \\
&\quad\Exp[P]-\Var[dq]^{-1}(\Exp[dq]-qr\,dt)\cdot\Var[dq,P]-\frac{1}{2}\alpha\Var[P] \notag \\
&\quad-\delta\cdot(qr\,dt-\alpha\Var[dq,P])-\frac{1}{2}\alpha\delta\cdot\Var[dq]\delta \notag
\end{align}
which is solved with the implied funding spread:
\begin{align}
s\,dt={}&\frac{1}{\alpha}(1+r\,dt)\times \\
&\left(\sqrt{1+2\alpha\bfrac{\Exp[\alpha][P-\Var[dq]^{-1}\Var[dq,P]\cdot Q]}{\Var[dq]^{-1}q\cdot q(1+r\,dt)^2}}-1\right) \notag
\end{align}
The initial derivative price is then given by $p=\delta\cdot q$. This solution applies when the underlying and derivative prices are joint normally distributed, and is a useful approximation in near-normal scenarios.
\item[Continuous model]When the underlying and derivative prices diffuse continuously in the expectation measure, their cumulant generating function is:
\begin{align}
\log\Exp\exp[\beta\,dp+\psi\cdot dq]&= \\
&\hspace{-3cm}\beta\Exp[dp]+\psi\cdot\Exp[dq] \notag \\
&\hspace{-3cm}+\frac{1}{2}\beta^2\Var[dp]+\beta\psi\cdot\Var[dq,dp]+\frac{1}{2}\psi\cdot\Var[dq]\psi+O[dt^2] \notag \\
\log\Exp^\phi\exp[\beta\,dp+\psi\cdot dq]&= \notag \\
&\hspace{-3cm}\beta(\Exp[dp]-\phi\cdot\Var[dq,dp])+\psi\cdot(\Exp[dq]-\Var[dq]\phi) \notag \\
&\hspace{-3cm}+\frac{1}{2}\beta^2\Var[dp]+\beta\psi\cdot\Var[dq,dp]+\frac{1}{2}\psi\cdot\Var[dq]\psi+O[dt^2] \notag
\end{align}
where the first-order cumulants are $O[1]$ or higher, the second-order cumulants are $O[dt]$ and the higher-order cumulants are $O[dt^2]$ or higher. With this assumption, the calibration condition is solved with unit optimal portfolio $\phi=\Var[dq]^{-1}(\Exp[dq]-qr\,dt)+O[dt]$ where the implied funding rate $r$ is tuned to the condition $\phi\cdot q=0$. The hedge condition becomes:
\begin{equation}
q(r+\alpha s)\,dt=qr\,dt-\alpha\Var[dq,P]+\alpha\Var[dq]\delta+O[dt^2]
\end{equation}
which is solved with the hedge portfolio:
\begin{equation}
\delta=\Var[dq]^{-1}(\Var[dq,P]+qs\,dt)+O[dt]
\end{equation}
The self-funding condition becomes:
\begin{equation}
\delta\cdot q=\Exp[P]-\Var[dq]^{-1}\Exp[dq]\cdot\Var[dq,P]+O[dt]
\end{equation}
which is solved with the implied funding spread:
\begin{equation}
s\,dt=\bfrac{\Exp[P]-\Var[dq]^{-1}\Var[dq,P]\cdot\Exp[Q]}{\Var[dq]^{-1}q\cdot q}+O[dt^2]
\end{equation}
The initial derivative price is then given by:
\begin{align}
p={}&\frac{1}{1+r\,dt}\Exp^\phi[P] \\
&-\frac{1}{2}\alpha\Var[P-\Var[dq]^{-1}\Var[dq,P]\cdot dq]+O[dt^2] \notag
\end{align}
In addition to the assumption that the underlying and derivative prices diffuse continuously, this solution is applicable only when the hedge portfolio is rebalanced with sufficient frequency so that the $O[dt]$ corrections to the strategy can be neglected.
\item[Numerical scheme]At each step the Newton-Raphson scheme improves the approximate hedge portfolio $\delta$ by the correction $\partial\delta$ which solves the hedge and self-funding conditions to $O[\partial\delta]$. Neglecting $O[\partial\delta^2]$ and higher terms, the hedge condition becomes:
\begin{equation}
q(r+\alpha s)\,dt=\Exp^{\phi\delta}[dq]+\alpha\Var^{\phi\delta}[dq]\,\partial\delta
\end{equation}
which is solved with the correction:
\begin{equation}
\partial\delta=-\frac{1}{\alpha}\Var^{\phi\delta}[dq]^{-1}(\Exp^{\phi\delta}[dq]-q(r+\alpha s)\,dt)
\end{equation}
Neglecting $O[\partial\delta^2]$ and higher terms, the self-funding condition becomes:
\begin{equation}
(\delta+\partial\delta)\cdot q=\Exp^\phi[\alpha][P-\delta\cdot dq]-\partial\delta\cdot\Exp^{\phi\delta}[dq]
\end{equation}
which is solved with the implied funding spread:
\begin{align}
(r+\alpha s)\,dt={}& \\
&\hspace{-1cm}\bfrac{\alpha\Exp^\phi[\alpha][P-\delta\cdot Q]+\Var^{\phi\delta}[dq]^{-1}\Exp^{\phi\delta}[dq]\cdot\Exp^{\phi\delta}[Q]}{\Var^{\phi\delta}[dq]^{-1}q\cdot\Exp^{\phi\delta}[Q]} \notag
\end{align}
The Newton-Raphson scheme is initialised with the choice $\delta=0$ and repeats these steps until the hedge condition is satisfied to the required level of accuracy. The initial derivative price is then given by $p=\delta\cdot q$.
\end{description}

\begin{figure*}[!pt]
\centering
\begin{tabular}{@{}C{0.33\textwidth-\tabcolsep}C{0.33\textwidth-\tabcolsep}C{0.33\textwidth-\tabcolsep}@{}}
\includegraphics[width=0.33\textwidth-\tabcolsep]{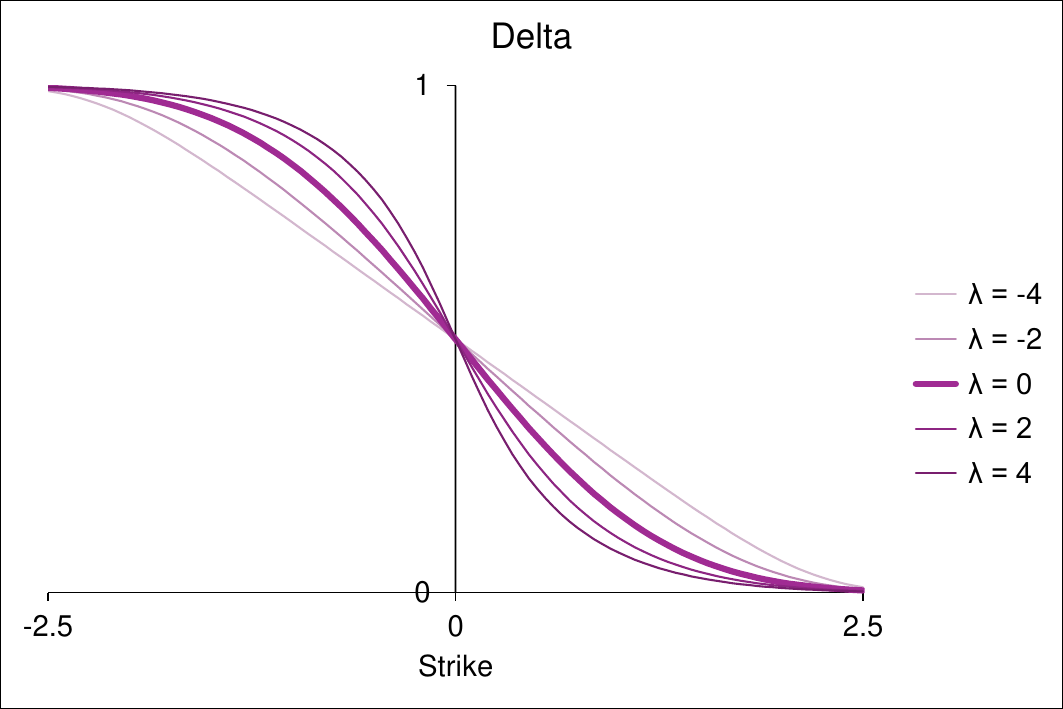}&
\includegraphics[width=0.33\textwidth-\tabcolsep]{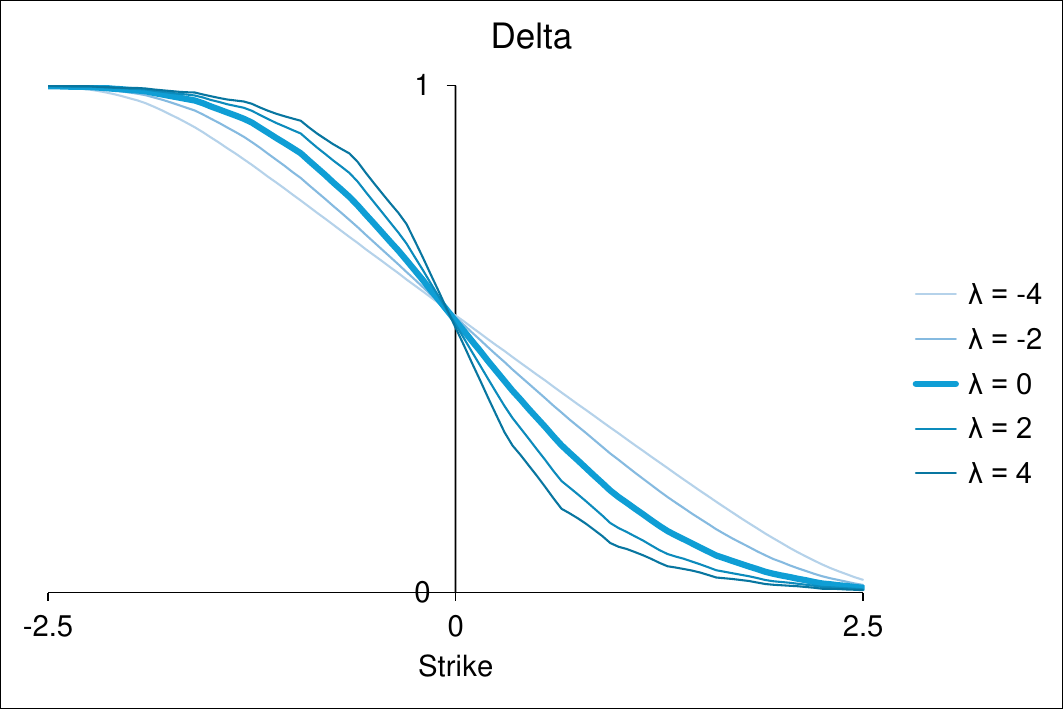}&
\includegraphics[width=0.33\textwidth-\tabcolsep]{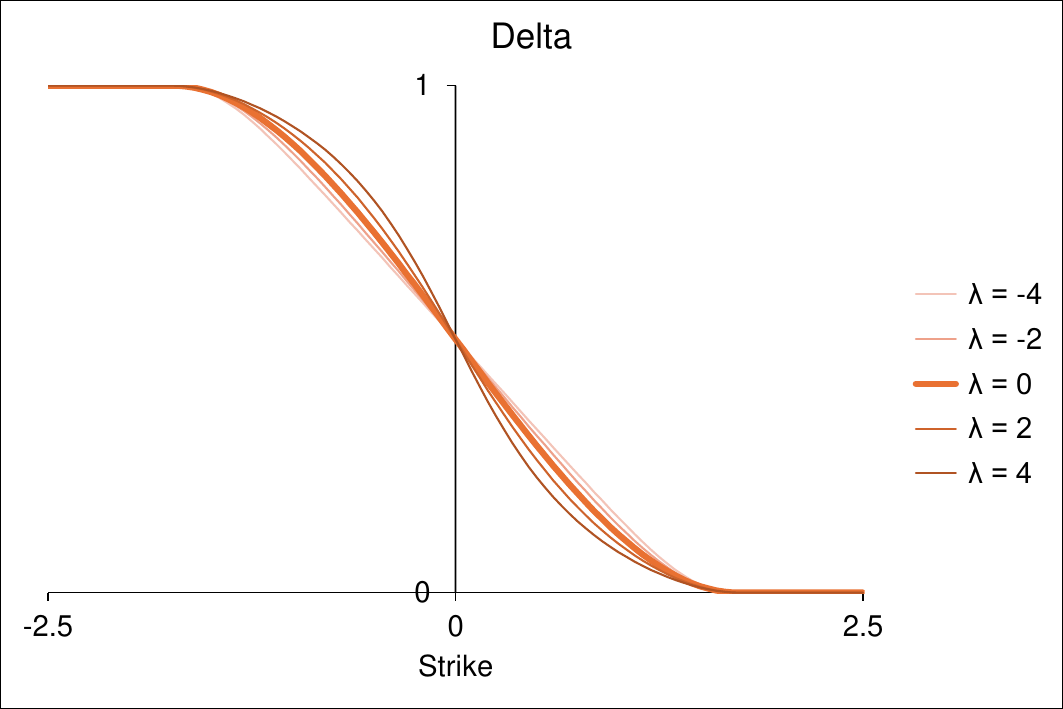}\\
\includegraphics[width=0.33\textwidth-\tabcolsep]{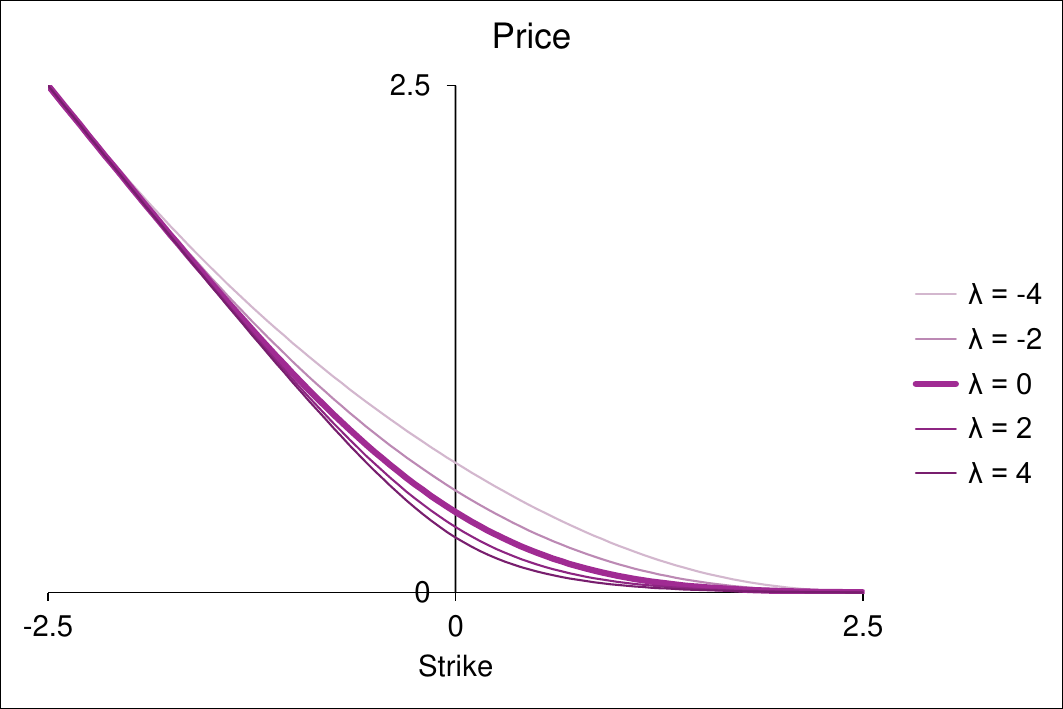}&
\includegraphics[width=0.33\textwidth-\tabcolsep]{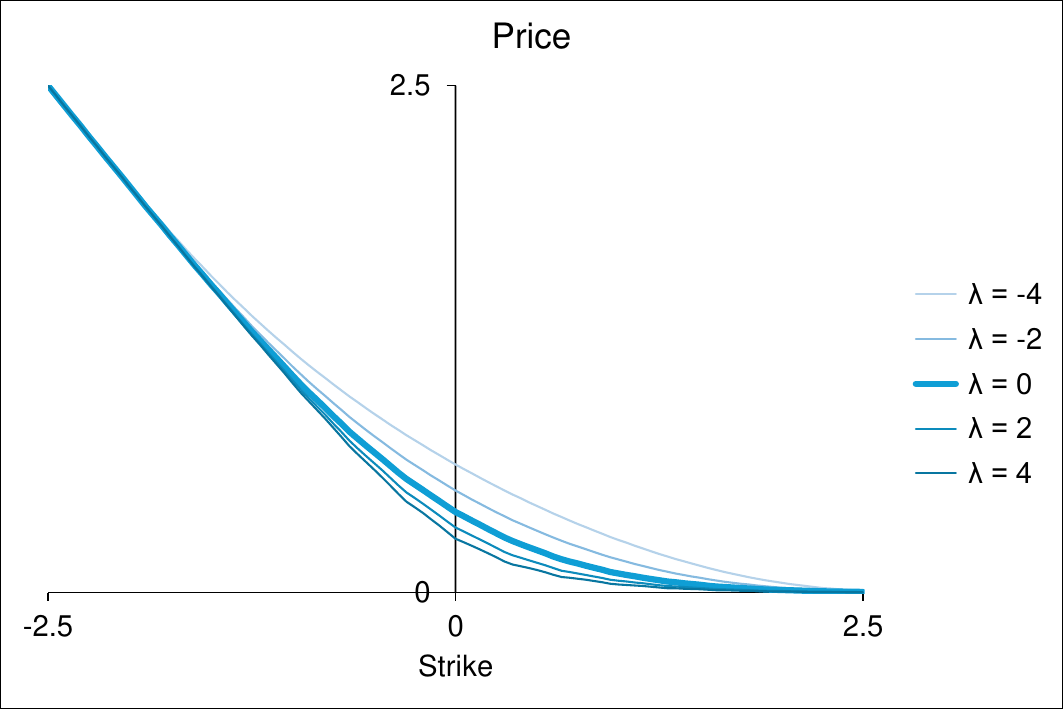}&
\includegraphics[width=0.33\textwidth-\tabcolsep]{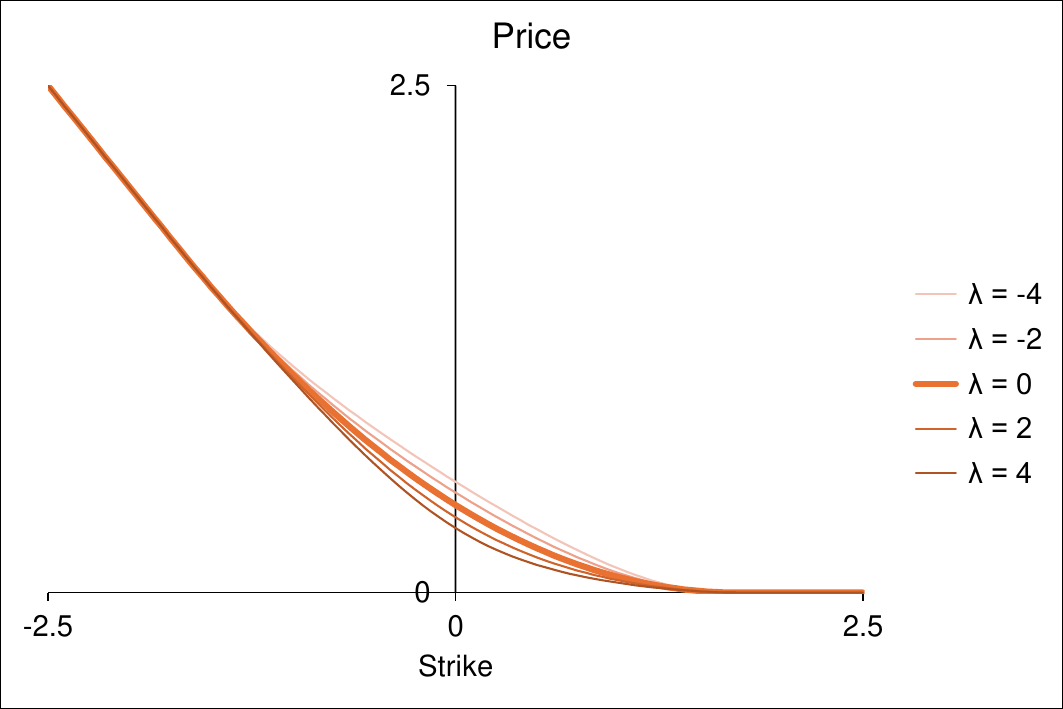}\\
\includegraphics[width=0.33\textwidth-\tabcolsep]{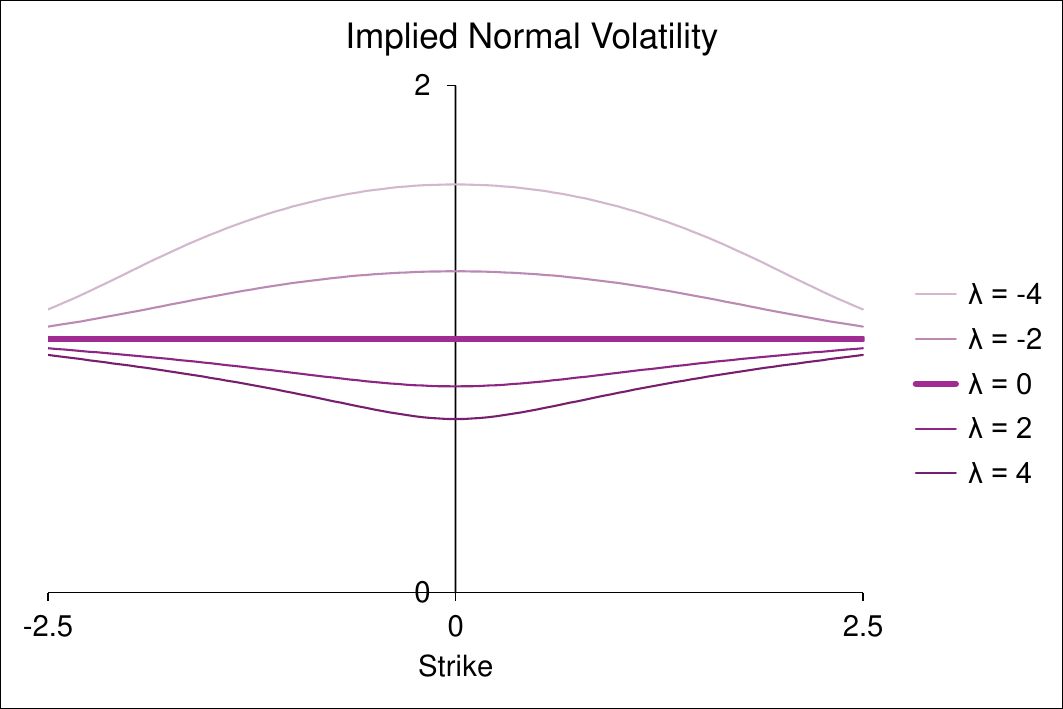}&
\includegraphics[width=0.33\textwidth-\tabcolsep]{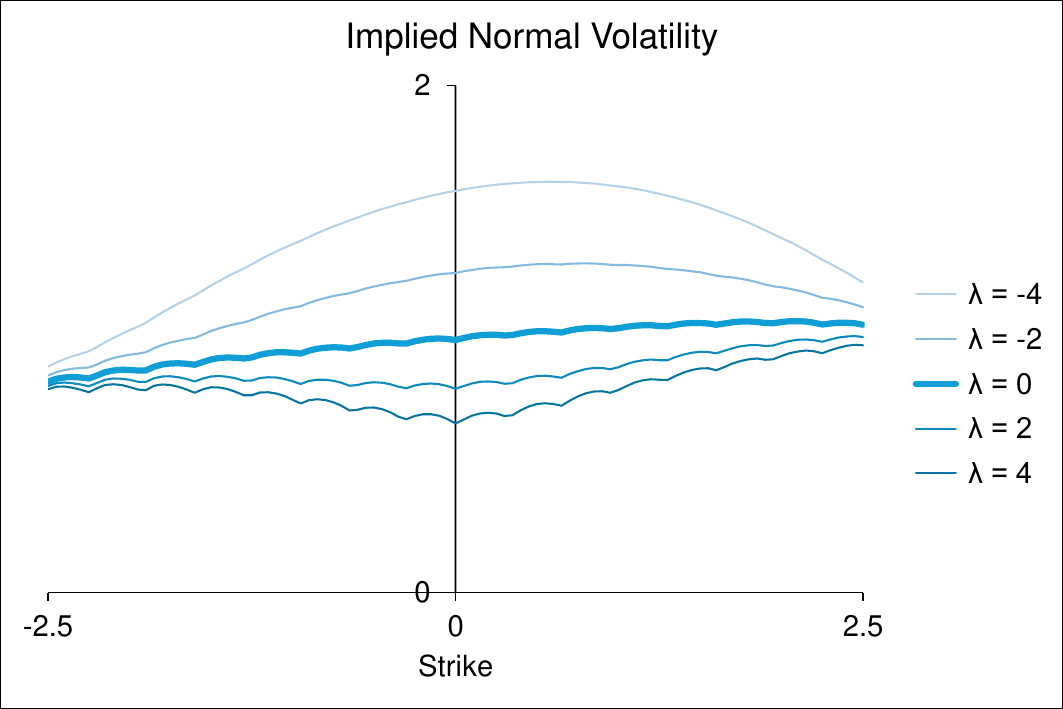}&
\includegraphics[width=0.33\textwidth-\tabcolsep]{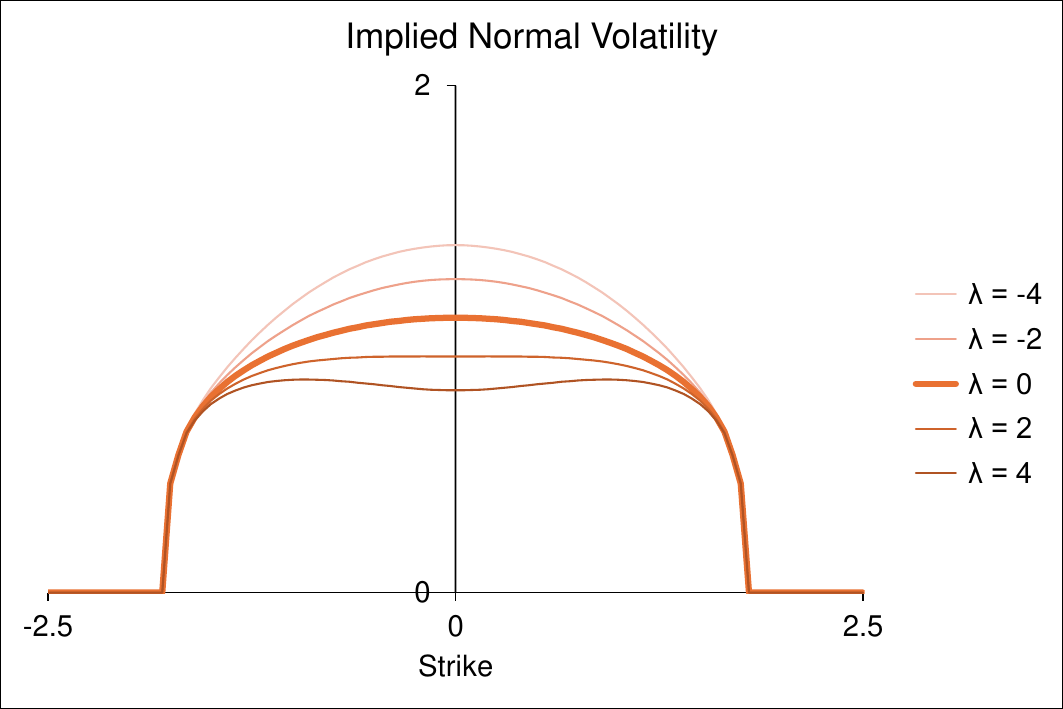}\\
Call Option (Normal)&Call Option (Poisson)&Call Option (Uniform)\\
\includegraphics[width=0.33\textwidth-\tabcolsep]{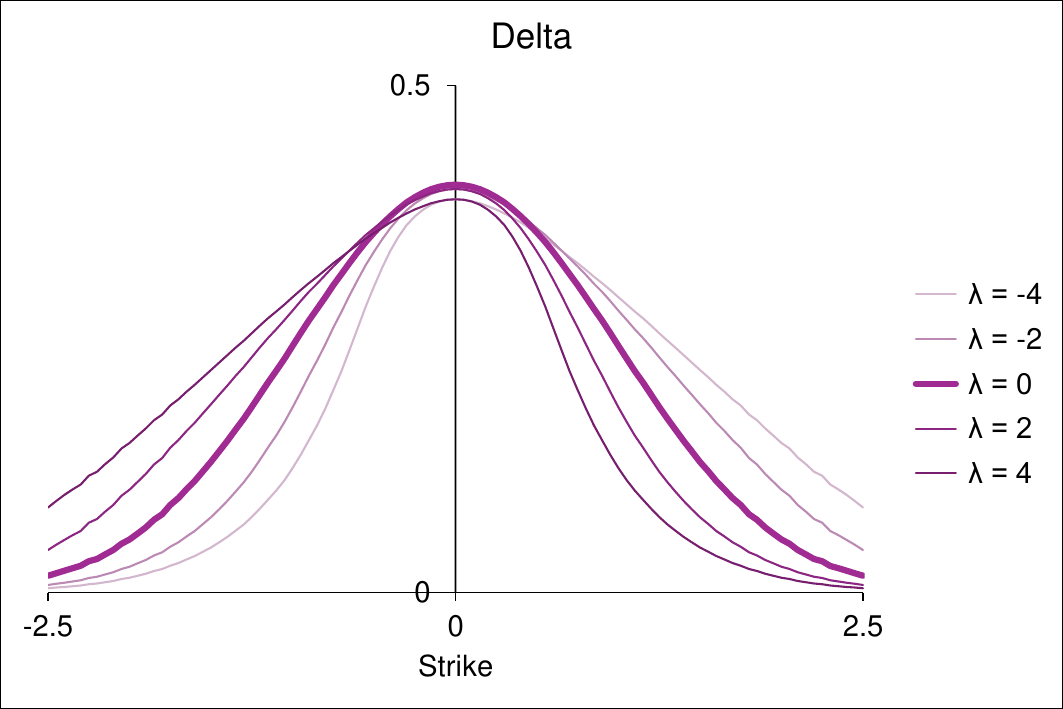}&
\includegraphics[width=0.33\textwidth-\tabcolsep]{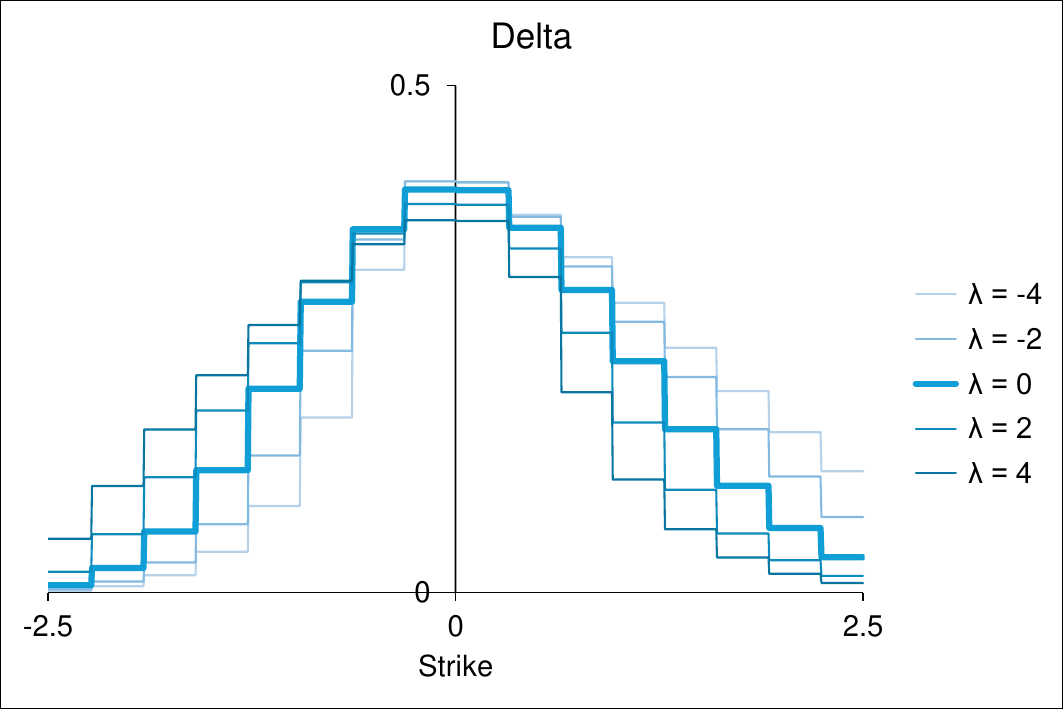}&
\includegraphics[width=0.33\textwidth-\tabcolsep]{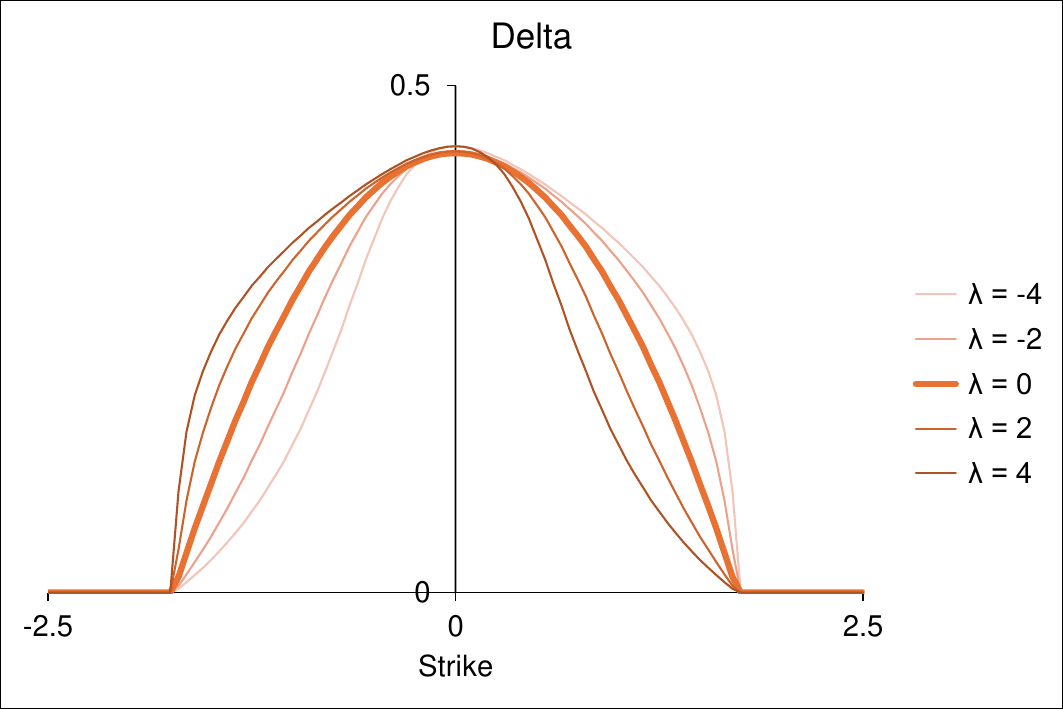}\\
\includegraphics[width=0.33\textwidth-\tabcolsep]{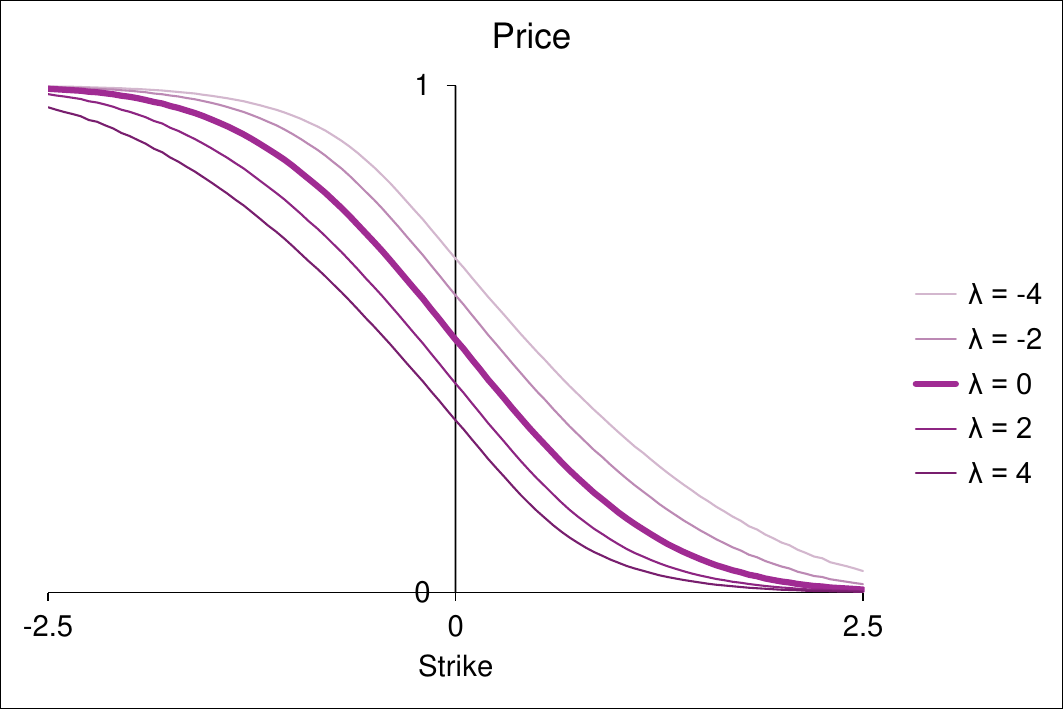}&
\includegraphics[width=0.33\textwidth-\tabcolsep]{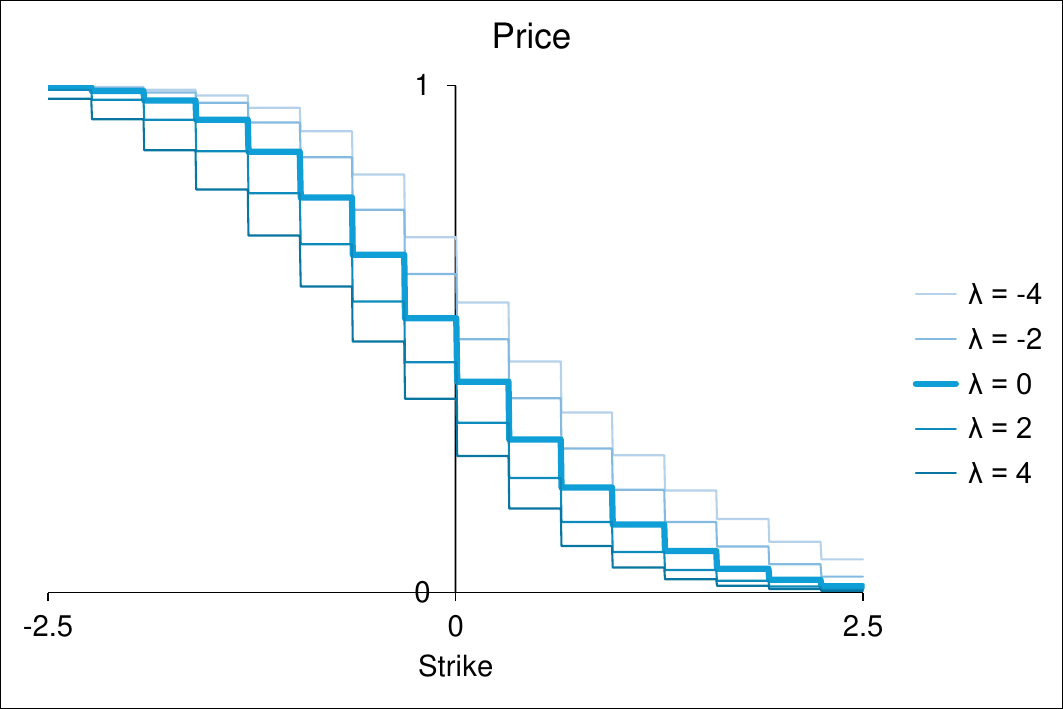}&
\includegraphics[width=0.33\textwidth-\tabcolsep]{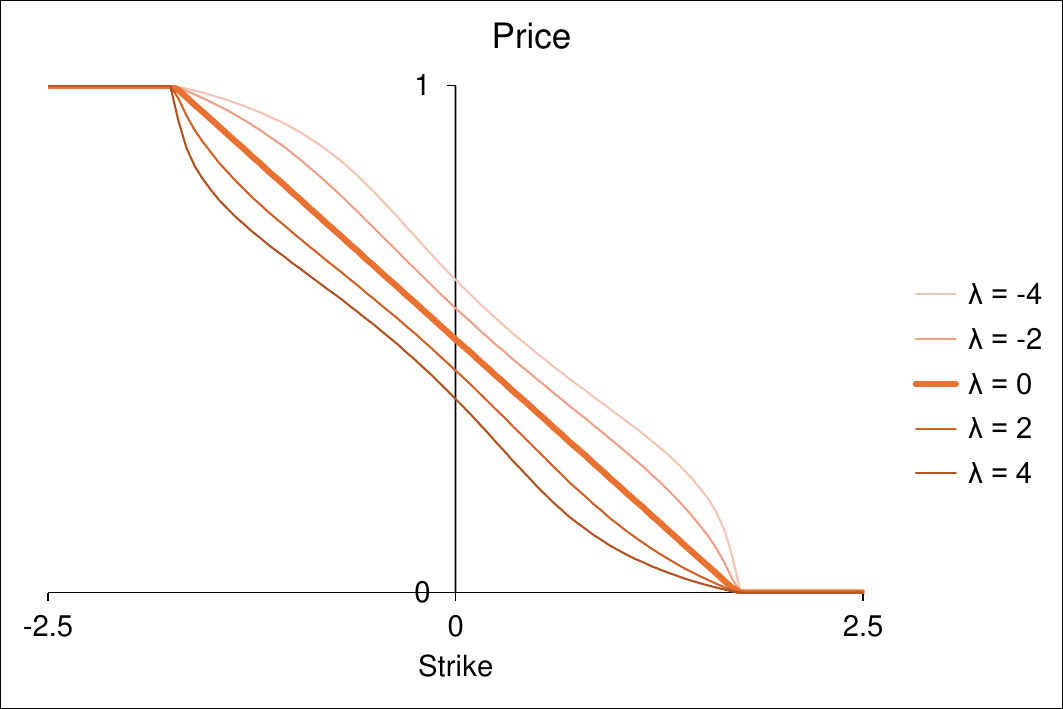}\\
Digital Option (Normal)&Digital Option (Poisson)&Digital Option (Uniform)
\end{tabular}
\caption{Three studies of entropic hedging and pricing for a call option (top three rows) with final price $P=(Q-k)^+$ and a digital option (bottom two rows) with final price $P=1_{Q>k}$ at strike $k$ on an underlying security with final price $Q$ modelled as a normal variable, a Poisson variable with ten expected jumps, and a uniform variable, all standardised by linear transformation to have zero mean and unit standard deviation. Risk aversion is fixed at one, $\alpha=1$, and the implied funding rate is set to zero, $r=0$. The graphs plot the delta and price per unit notional as a function of the strike for a range of notionals $\lambda$, highlighting the central plot for the mid-pricing case $\lambda=0$. Residual risk after hedging creates a bid-offer around the mid price and skews the optimal hedge ratio for positions with large notional.}
\label{fig:entropicpricing}
\end{figure*}

\subsection{Limit order book}

Using the price measure, the hedge and price conditions are rewritten as:
\begin{align}
q(r+\alpha s)\,dt&=\bfrac{\Exp^\phi[dq\,\exp[-\alpha(P-\delta\cdot dq)]]}{\Exp^\phi[\exp[-\alpha(P-\delta\cdot dq)]]} \\
p&=\Exp^\phi[\alpha][P-\delta\cdot dq] \notag
\end{align}
These conditions express a non-linear relationship between the initial and final derivative prices. If the derivative security with notional $\lambda$ has final price $\lambda P$, then the corresponding initial price $p$, hedge portfolio $\delta$ and implied funding spread $s$, quoted per unit notional of the derivative security, are determined from the hedge and price conditions:
\begin{align}
q(r+\lambda\alpha s)\,dt&=\bfrac{\Exp^\phi[dq\,\exp[-\lambda\alpha(P-\delta\cdot dq)]]}{\Exp^\phi[\exp[-\lambda\alpha(P-\delta\cdot dq)]]} \\
p&=\Exp^\phi[\lambda\alpha][P-\delta\cdot dq] \notag
\end{align}
and the self-funding condition. Scaling the derivative notional is thus equivalent to scaling the risk aversion. This is the scale invariance property of entropic pricing.

When $\lambda\alpha$ is small, the hedge and price conditions are expanded to discover the solutions:
\begin{align}
s&=\bfrac{\Exp^\phi[P]/(1+r\,dt)-\Var^\phi[dq]^{-1}\Var^\phi[dq,P]\cdot q}{\Var^\phi[dq]^{-1}q\cdot q\,dt}+O[\lambda\alpha] \notag \\
\delta&=\Var^\phi[dq]^{-1}(\Var^\phi[dq,P]+qs\,dt)+O[\lambda\alpha] \\
p&=\frac{1}{1+r\,dt}\left(\Exp^\phi[P]-\frac{1}{2}\alpha\Var^\phi[P-\delta\cdot dq]\right)+O[(\lambda\alpha)^2] \notag
\end{align}
This expansion in terms of $\lambda\alpha$ approximates the price when the risk appetite of the investor is large or when the derivative security has small notional. In the marginal market, the model thus implies a limit order book for the derivative security with mid price and bid-offer:
\begin{align}
p_0&=\frac{1}{1+r\,dt}\Exp^\phi[P] \\
p_\pm&=\frac{\alpha}{1+r\,dt}\Var^\phi[P-\delta\cdot dq] \notag
\end{align}
The discounted mid price of the derivative security is a martingale in the price measure and the bid-offer prices incompleteness of the underlying market. Higher-order adjustments then accommodate the extended features of the price distributions.

This model is extensive in scope, consistently encompassing arbitrary economic assumptions, predictable and unpredictable funding, complete and incomplete hedge markets, continuous and discontinuous evolution, continuous and discrete hedging, and accounting for the risk appetite of the investor. The studies in \cref{fig:entropicpricing} demonstrate entropic hedging and pricing for call and digital options over a single investment horizon, and show the bid-offer around the mid model that prices the residual risk after hedging.

\section{Market and model risk}

Derivative price models support trading activity, and their risks manifest as poor decision making in the trade lifecycle. The model must provide entry and exit prices that have a reasonable chance of transacting, and a funding and hedging strategy that has a reasonable chance of retaining the initial premium for the trade. The market and model risks in these objectives are distinguished.
\begin{description}[leftmargin=0\parindent]
\item[Market risk]An optimal funding and hedging strategy is derived on the assumption of the expectation measure. Residual profit and loss is generated if this strategy is not faithfully executed.
\item[Model risk]If the expectation measure does not accurately capture realised market dynamics over the trade lifecycle, then the strategy is suboptimal and will also generate residual profit and loss.
\end{description}
Analysis of these risks considers two expectation measures: the before measure $\Exp$ assumed initially as the basis for decision making and used to quantify the predicted performance of the strategy; and the after measure $\bar{\Exp}$ fitted to realised market dynamics and used to quantify the actual performance of the strategy.

Starting with the before expectation measure $\Exp$, the calibration conditions:
\begin{align}
0&=\phi\cdot q \\
0&=\Exp^\phi[dq]-qr\,dt \notag
\end{align}
are solved to generate the implied funding rate $r$ and the unit optimal portfolio $\phi$ for the underlying securities, and the hedge and price conditions:
\begin{align}
p&=\delta\cdot q \\
0&=\Exp^\phi[(dq-q(r+\alpha s)\,dt)\exp[-\alpha(P-\delta\cdot dq)]] \notag \\
p&=\Exp^\phi[\alpha][P-\delta\cdot dq] \notag
\end{align}
are solved to generate the implied funding spread $s$, the hedge portfolio $\delta$ and the initial price $p$ for the derivative security. While the hedge neutralises the underlying component of the derivative risk, when the market is incomplete there are unhedged risks that are sensitive to the choice of expectation measure. This is the origin of model risk.

Suppose that the after expectation measure $\bar{\Exp}$ is equivalent to the before expectation measure $\Exp$ with Radon-Nikodym weight:
\begin{equation}
\bfrac{d\bar{\Exp}}{d\Exp}=\bfrac{\exp[\epsilon W]}{\Exp[\exp[\epsilon W]]}=1+\epsilon(W-\Exp[W])+O[\epsilon^2]
\end{equation}
for a bounded weight $W$ and a scale parameter $\epsilon$. With the benefit of hindsight, the implied funding rate and unit optimal portfolio are adjusted, $\bar{r}=r+\epsilon\,\partial r$ and $\bar{\phi}=\phi+\epsilon\,\partial\phi$, with:
\begin{align}
\partial r&=\bfrac{\Var^\phi[dq]^{-1}\Var^\phi[dq,W]\cdot q}{\Var^\phi[dq]^{-1}q\cdot q\,dt}+O[\epsilon] \\
\partial\phi&=\Var^\phi[dq]^{-1}(\Var^\phi[dq,W]-q\,\partial r\,dt)+O[\epsilon] \notag
\end{align}
Applying these adjustments, the before and after price measures are related by:
\begin{align}
&\bar{\Exp}^{\bar{\phi}}[X]-\Exp^\phi[X]= \\
&\qquad\epsilon\Var^\phi[W-\partial\phi\cdot dq,X]+O[\epsilon^2] \notag \\
&\bar{\Exp}^{\bar{\phi}}[\alpha][X]-\Exp^\phi[\alpha][X]= \notag \\
&\qquad\epsilon\Var^\phi\!\left[W-\partial\phi\cdot dq,-\frac{1}{\alpha}\bfrac{\exp[-\alpha X]}{\Exp^\phi[\exp[-\alpha X]]}\right]+O[\epsilon^2] \notag
\end{align}
for the indeterminate variable $X$.

Similar adjustments apply to the implied funding spread, hedge portfolio and initial price, derived from the hedge and price conditions in the after expectation measure:
\begin{align}
\bar{p}&=\bar{\delta}\cdot q \\
0&=\bar{\Exp}^{\bar{\phi}}[(dq-q(\bar{r}+\alpha\bar{s})\,dt)\exp[-\alpha(P-\bar{\delta}\cdot dq)]] \notag \\
\bar{p}&=\bar{\Exp}^{\bar{\phi}}[\alpha][P-\bar{\delta}\cdot dq] \notag
\end{align}
Focusing on the price condition, the adjustment to the initial price is $\bar{p}=p+\epsilon\,\partial p$ with:
\begin{align}
&\partial p=\frac{1}{1+(r+\alpha s)\,dt}\times \\
&\quad\Var^\phi\!\left[W-\partial\phi\cdot dq,-\frac{1}{\alpha}\bfrac{\exp[-\alpha(P-\delta\cdot dq)]}{\Exp^\phi[\exp[-\alpha(P-\delta\cdot dq)]]}\right]+O[\epsilon] \notag
\end{align}
The hedge portfolio is the metric of market risk in the entropic framework, and this adjustment is the metric of model risk as it measures the sensitivity of the initial price to the choice of expectation measure. If the underlying market is complete, the variance of the hedged return is zero and the sensitivity to the expectation measure is eliminated. More generally, model risk increases in proportion with the degree of market incompleteness.

\section{The Black-Scholes model}

As an illustration of the entropic method, consider an underlying market comprising a funding security with price $b>0$ and an underlying security with price $q$ satisfying the stochastic differential equations:
\begin{align}
db&=br\,dt \\
dq&=q(\mu\,dt+\sqrt{\nu}\,dy) \notag
\end{align}
where the increment $dy$ is Brownian in the expectation measure. The model is parametrised by the funding rate $r$ of the funding security and the mean rate $\mu$ and variance rate $\nu>0$ of the underlying security.

The self-funded portfolio in the underlying market comprises $\omega$ units of the underlying security funded by $-\omega(q/b)$ units of the funding security. The portfolio return has entropy-adjusted mean:
\begin{align}
\frac{1}{dt}\Exp[&\alpha][\omega(dq-(q/b)db)]= \\
&\frac{1}{2}\frac{(\mu-r)^2}{\alpha\nu}-\frac{1}{2}\alpha q^2\nu\left(\omega-\frac{\mu-r}{\alpha q\nu}\right)^{\!2}+O[dt] \notag
\end{align}
This is maximised by the optimal portfolio:
\begin{equation}
\omega=\frac{\mu-r}{\alpha q\nu}
\end{equation}

When the derivative security is added, the self-funded portfolio comprises one unit of the derivative security and $(\omega-\delta)$ units of the underlying security funded by $-(p/b)-(\omega-\delta)(q/b)$ units of the funding security. The underlying market is complete when the model parameters are fixed but incomplete when the model parameters are volatile. In the first scenario, risk can be neutralised by hedging and the derivative price is determined from the no-arbitrage condition. In the second scenario, risk cannot be neutralised by hedging and the no-arbitrage condition does not determine the derivative price. By replacing risk neutralisation with risk optimisation as the objective of hedging, the entropic method extends to this case. The derivative price is then resolved by replacing no-arbitrage with indifference as the equilibrium condition for pricing.

\begin{description}[leftmargin=0\parindent]
\item[Complete market]Assume the funding rate $r$, mean rate $\mu$ and variance rate $\nu$ are fixed. The portfolio return has entropy-adjusted mean:
\begin{align}
\frac{1}{dt}\Exp[&\alpha][(dp-(p/b)db)+(\omega-\delta)(dq-(q/b)db)]= \\
&\frac{\partial p}{\partial t}-pr+\frac{\partial p}{\partial q}qr+\frac{1}{2}\frac{\partial^2p}{\partial q^2}q^2\nu+\frac{1}{2}\frac{(\mu-r)^2}{\alpha\nu} \notag \\
&-\frac{1}{2}\alpha q^2\nu\left(\left(\omega-\frac{\mu-r}{\alpha q\nu}\right)-\left(\delta-\frac{\partial p}{\partial q}\right)\right)^{\!2}+O[dt] \notag
\end{align}
when the derivative security with price $p[t,q]$ is added. This is maximised by the hedge portfolio:
\begin{equation}
\delta=\frac{\partial p}{\partial q}
\end{equation}
The indifference condition dictates that the addition of the derivative security neither increases nor decreases the maximum entropy-adjusted mean of the portfolio, leading to the equation:
\begin{equation}
0=\frac{\partial p}{\partial t}-(p-\delta q)r+\frac{1}{2}\frac{\partial^2p}{\partial q^2}q^2\nu
\end{equation}
for the derivative price. The expression on the right is the sum of time decay, funding and gamma contributions to the profit and loss of the hedged derivative. Setting this to zero eliminates the possibility of arbitrage.
\item[Incomplete market]Assume the funding rate $r$ and mean rate $\mu$ are fixed and the variance rate $\nu$ satisfies the stochastic differential equation:
\begin{equation}
d\nu=\gamma[t,q,\nu]\,dt+\epsilon[t,q,\nu]\sqrt{\nu}\,dz
\end{equation}
where the increments $dy$ and $dz$ are joint Brownian with correlation $\rho[t,q,\nu]$ in the expectation measure. The portfolio return has entropy-adjusted mean:
\begin{align}
\frac{1}{dt}\Exp[&\alpha][(dp-(p/b)db)+(\omega-\delta)(dq-(q/b)db)]= \\
&\frac{\partial p}{\partial t}-pr+\frac{\partial p}{\partial q}qr+\frac{\partial p}{\partial\nu}(\gamma-(\mu-r)\rho\epsilon) \notag \\
&+\frac{1}{2}\frac{\partial^2p}{\partial q^2}q^2\nu+\frac{\partial^2p}{\partial q\partial\nu}q\nu\rho\epsilon+\frac{1}{2}\frac{\partial^2p}{\partial\nu^2}\nu\epsilon^2 \notag \\
&-\frac{1}{2}\alpha\left(\frac{\partial p}{\partial\nu}\right)^{\!2}\!\nu(1-\rho^2)\epsilon^2+\frac{1}{2}\frac{(\mu-r)^2}{\alpha\nu} \notag \\
&-\frac{1}{2}\alpha q^2\nu\left(\left(\omega-\frac{\mu-r}{\alpha q\nu}\right)-\left(\delta-\frac{\partial p}{\partial q}-\frac{\partial p}{\partial\nu}\frac{\rho\epsilon}{q}\right)\right)^{\!2} \notag \\
&+O[dt] \notag
\end{align}
when the derivative security with price $p[t,q,\nu]$ is added. This is maximised by the hedge portfolio:
\begin{equation}
\delta=\frac{\partial p}{\partial q}+\frac{\partial p}{\partial\nu}\frac{\rho\epsilon}{q}
\end{equation}
The indifference condition dictates that the addition of the derivative security neither increases nor decreases the maximum entropy-adjusted mean of the portfolio, leading to the equation:
\begin{align}
0={}&\frac{\partial p}{\partial t}-(p-\delta q)r+\frac{\partial p}{\partial\nu}(\gamma-\mu\rho\epsilon) \\
&+\frac{1}{2}\frac{\partial^2p}{\partial q^2}q^2\nu+\frac{\partial^2p}{\partial q\partial\nu}q\nu\rho\epsilon+\frac{1}{2}\frac{\partial^2p}{\partial\nu^2}\nu\epsilon^2 \notag \\
&-\frac{1}{2}\alpha\left(\frac{\partial p}{\partial\nu}\right)^{\!2}\!\nu(1-\rho^2)\epsilon^2 \notag
\end{align}
for the derivative price.  The expression on the right is the sum of time decay, funding, vega, gamma and value-at-risk contributions in a conservative estimate of the profit and loss of the hedged derivative. Setting this to zero generates the fair price of the derivative security, including a reserve for the residual risk that remains after hedging.
\end{description}

\section{Markovian diffusion}

Generalising the example of the Black-Scholes model, in this section assume that there is a vector $x$ of state variables whose cumulant generating function in the expectation measure $\Exp$ satisfies:
\begin{align}
\frac{1}{dt}\log{}&\Exp\exp[\eta\cdot dx]= \\
&\eta\cdot\mu+\frac{1}{2}\eta\cdot\nu\eta+\int_j(\exp[\eta\cdot j]-1)\,\theta[dj]+O[dt] \notag
\end{align}
where the vector $\mu\equiv\mu[t,x]$ and positive-definite matrix $\nu\equiv\nu[t,x]$ are the mean and covariance rate for the continuous component of the diffusion, and the positive measure $\theta[dj]\equiv\theta[t,x][dj]$ is the frequency of jumps in the discontinuous component of the diffusion. The underlying price $q\equiv q[t,x]$ and derivative price $p\equiv p[t,x]$ are then assumed to be functions of time $t$ and state $x$.

This L\'evy-Khintchine form for the cumulant generating function of the state implies its distribution is divisible, which enables the transition to the limit $dt\to0$ of continuous hedging. Repeated differentiation with respect to $\eta$ derives It\^o's lemma for the function $f\equiv f[t,x]$ of time $t$ and state $x$ in the expectation measure $\Exp$:
\begin{align}
\frac{1}{dt}\Exp[df]={}&\frac{\partial f}{\partial t}+\frac{\partial f}{\partial x}\cdot\mu+\frac{1}{2}\frac{\partial^2f}{\partial x^2}\cdot{}\!\cdot\nu \\
&+\int_j\partial_jf\,\theta[dj]+O[dt] \notag
\end{align}
where $\partial_jf[t,x]:=f[t,x+j]-f[t,x]$. In the price measure $\Exp^\phi$, It\^o's lemma is modified by the drift adjustments of Girsanov's theorem:
\begin{align}
\frac{1}{dt}\Exp^\phi[df]={}&\frac{\partial f}{\partial t}+\frac{\partial f}{\partial x}\cdot\mu^\phi+\frac{1}{2}\frac{\partial^2f}{\partial x^2}\cdot{}\!\cdot\nu \\
&+\int_j\partial_jf\,\theta^\phi[dj]+O[dt] \notag
\end{align}
where the mean rate and frequency of jumps are adjusted:
\begin{align}
\mu^\phi&:=\mu-\nu\left(\phi\cdot\frac{\partial q}{\partial x}\right) \\
\theta^\phi[dj]&:=\exp[-\phi\cdot\partial_jq]\,\theta[dj] \notag
\end{align}

With this assumption, the price model is established on the divisible process for the state vector $x$ and the underlying price vector $q[t,x]$ given as a function of time and state. Using the expression for It\^o's lemma in the price measure, the calibration, hedge and price conditions are transformed into partial differential-integral form. The calibration condition becomes:
\begin{align}
0={}&\frac{\partial q}{\partial t}-qr+\frac{\partial q}{\partial x}\cdot\mu^\phi+\frac{1}{2}\frac{\partial^2q}{\partial x^2}\cdot{}\!\cdot\nu \\
&+\int_j\partial_jq\,\theta^\phi[dj]+O[dt] \notag
\end{align}
which is solved simultaneously with the self-funding condition $\phi\cdot q=0$ for the implied funding rate $r$ and the unit optimal portfolio $\phi$. The hedge and price conditions become:
\begin{align}
0={}&\frac{\partial p}{\partial x}\cdot\nu\frac{\partial q}{\partial x}+qs-\left(\frac{\partial q}{\partial x}\cdot\nu\frac{\partial q}{\partial x}\right)\delta \notag \\
&\!\!\!\!+\int_j\frac{1}{\alpha}(1-\exp[-\alpha(\partial_jp-\delta\cdot\partial_jq)])\,\partial_jq\,\theta^\phi[dj]+O[dt] \notag \\
0={}&\frac{\partial p}{\partial t}-pr+\frac{\partial p}{\partial x}\cdot\mu^\phi+\frac{1}{2}\frac{\partial^2p}{\partial x^2}\cdot{}\!\cdot\nu \\
&-\frac{1}{2}\alpha\left(\frac{\partial p}{\partial x}-\delta\cdot\frac{\partial q}{\partial x}\right)\cdot\nu\left(\frac{\partial p}{\partial x}-\delta\cdot\frac{\partial q}{\partial x}\right) \notag \\
&+\int_j\frac{1}{\alpha}(1-\exp[-\alpha(\partial_jp-\delta\cdot\partial_jq)])\,\theta^\phi[dj]+O[dt] \notag
\end{align}
which are solved simultaneously with the self-funding condition $\delta\cdot q=p$ for the implied funding spread $s$, hedge portfolio $\delta$ and derivative price $p$. Incompleteness means that the hedge portfolio does not exactly offset the risks of the derivative security for all possible market movements over the hedging interval, and this residual risk is accounted for with a compensating adjustment to the derivative price.

Starting with its terminal contractual settlement, the numerical solution of these conditions for the derivative price repeats a Newton-Raphson scheme that determines the initial price $p$ from the final price $p+dp$ over each interval in a discretised schedule. The high dimensional complexity of this procedure is a consequence of the non-linearity of the hedge portfolio in the hedge condition. There are two useful scenarios where this condition is linear: the mid-pricing case $\alpha=0$; and the continuous diffusion case $\theta[dj]=0$. Pricing simplifies in these cases.

\begin{description}[leftmargin=0\parindent]
\item[Mid-pricing]When $\alpha=0$, the hedge and price conditions simplify:
\begin{align}
s={}&\bfrac{p-V^{-1}\left(\frac{\partial p}{\partial x}\cdot\nu\frac{\partial q}{\partial x}+\int_j\partial_jp\,\partial_jq\,\theta^\phi[dj]\right)\cdot q}{V^{-1}q\cdot q}+O[dt] \notag \\
\delta={}&V^{-1}\left(\frac{\partial p}{\partial x}\cdot\nu\frac{\partial q}{\partial x}+\int_j\partial_jp\,\partial_jq\,\theta^\phi[dj]+qs\right)+O[dt] \notag \\
0={}&\frac{\partial p}{\partial t}-pr+\frac{\partial p}{\partial x}\cdot\mu^\phi+\frac{1}{2}\frac{\partial^2p}{\partial x^2}\cdot{}\!\cdot\nu \\
&+\int_j(\partial_jp-\delta\cdot\partial_jq)\,\theta^\phi[dj]+O[dt] \notag
\end{align}
where:
\begin{align}
V&:=\frac{\partial q}{\partial x}\cdot\nu\frac{\partial q}{\partial x}+\int_j\partial_jq\,\partial_jq\,\theta^\phi[dj]
\end{align}
Embedding the expressions for the implied funding spread and hedge portfolio, the price condition becomes a linear parabolic differential-integral equation for the derivative price. This equation balances the contributions to the return from the continuous and discontinuous components of the diffusion.
\item[Continuous diffusion]When $\theta[dj]=0$, the hedge and price conditions simplify:
\begin{align}
s={}&\bfrac{p-V^{-1}\left(\frac{\partial p}{\partial x}\cdot\nu\frac{\partial q}{\partial x}\right)\cdot q}{V^{-1}q\cdot q}+O[dt] \notag \\
\delta={}&V^{-1}\left(\frac{\partial p}{\partial x}\cdot\nu\frac{\partial q}{\partial x}+qs\right)+O[dt] \notag \\
0={}&\frac{\partial p}{\partial t}-pr+\frac{\partial p}{\partial x}\cdot\mu^\phi+\frac{1}{2}\frac{\partial^2p}{\partial x^2}\cdot{}\!\cdot\nu \\
&-\frac{1}{2}\alpha\left(\frac{\partial p}{\partial x}-\delta\cdot\frac{\partial q}{\partial x}\right)\cdot\nu\left(\frac{\partial p}{\partial x}-\delta\cdot\frac{\partial q}{\partial x}\right)+O[dt] \notag
\end{align}
where:
\begin{align}
V&:=\frac{\partial q}{\partial x}\cdot\nu\frac{\partial q}{\partial x}
\end{align}
Embedding the expressions for the implied funding spread and hedge portfolio, the price condition becomes a non-linear parabolic differential equation for the derivative price. This equation modifies the expression for the price to adjust for residual risk.
\end{description}

As a concrete example of a continuous Markovian diffusion, the Heston model illustrates the key features of the more general framework. In this model, the state variables are the funding price $b$, the underlying price $q$ and the variance rate $\nu$ satisfying the stochastic differential equations:
\begin{align}
db&=br\,dt \\
dq&=q(\mu\,dt+\sqrt{\nu}\,dy) \notag \\
d\nu&=\kappa(\bar{\nu}-\nu)\,dt+\epsilon\sqrt{\nu}\,dz \notag
\end{align}
where $dy$ and $dz$ are joint-Brownian increments in the expectation measure with correlation $\rho$. This is equivalent to the cumulant generating function:
\begin{align}
\frac{1}{dt}\log{}&\Exp\exp[\beta\,db+\psi\,dq+\eta\,d\nu]= \\
&\beta\,br+\psi\,q\mu+\eta\,\kappa(\bar{\nu}-\nu) \notag \\
&+\frac{1}{2}\psi^2\,q^2\nu+\psi\eta\,q\nu\rho\epsilon+\frac{1}{2}\eta^2\,\nu\epsilon^2+O[dt] \notag
\end{align}
The funding price has predictable increment with mean $br\,dt$ where $r$ is the funding rate. The underlying price has increment with mean $q\mu\,dt$ and variance $q^2\nu\,dt$ where $\mu$ is the mean rate. The variance rate has increment with mean $\kappa(\bar{\nu}-\nu)\,dt$ and variance $\nu\epsilon^2\,dt$ where $\kappa$ is the mean reversion rate, $\bar{\nu}$ is the mean reversion level and $\epsilon$ is the volatility of variance. The final parameter $\rho$ is the correlation between the increments of the underlying price and the variance rate.

Since the funding security has predictable return, the implied funding rate is $r$ and the implied funding spread is $s=0$. The self-funded portfolio with underlying weight $\phi$ and funding weight $-\phi(q/b)$ creates the price measure $\Exp^\phi$ with cumulant generating function:
\begin{align}
\frac{1}{dt}\log{}&\Exp^\phi\exp[\beta\,db+\psi\,dq+\eta\,d\nu]= \\
&\beta\,br+\psi\,q\mu^\phi+\eta\,\kappa(\bar{\nu}^\phi-\nu) \notag \\
&+\frac{1}{2}\psi^2\,q^2\nu+\psi\eta\,q\nu\rho\epsilon+\frac{1}{2}\eta^2\,\nu\epsilon^2+O[dt] \notag
\end{align}
where the mean rate of the underlying price and the mean reversion level of the variance rate are adjusted:
\begin{align}
\mu^\phi&:=\mu-\phi q\nu \\
\bar{\nu}^\phi&:=\bar{\nu}-\phi q\nu\frac{\rho\epsilon}{\kappa} \notag
\end{align}
The calibration condition then implies that the unit optimal portfolio has $\phi=(\mu-r)/(q\nu)$, so that:
\begin{align}
\mu^\phi&=r \\
\bar{\nu}^\phi&=\bar{\nu}-(\mu-r)\frac{\rho\epsilon}{\kappa} \notag
\end{align}
The state is also a Heston model in the price measure, albeit with these two drift parameters adjusted to account for the change of measure. In this approach, the Heston parameters in the price measure can be estimated from statistical analysis of the state in the expectation measure, applying the measure adjustment post-estimation to calibrate to the underlying price.

The hedge and price conditions in the Heston model generate the expressions:
\begin{align}
\delta={}&\frac{\partial p}{\partial q}+\frac{\partial p}{\partial\nu}\frac{\rho\epsilon}{q}+O[dt] \\
0={}&\frac{\partial p}{\partial t}-pr+\frac{\partial p}{\partial q}qr+\frac{\partial p}{\partial\nu}(\kappa(\bar{\nu}-\nu)-(\mu-r)\rho\epsilon) \notag \\
&+\frac{1}{2}\frac{\partial^2p}{\partial q^2}\,q^2\nu+\frac{\partial^2p}{\partial q\,\partial\nu}\,q\nu\rho\epsilon+\frac{1}{2}\frac{\partial^2p}{\partial\nu^2}\,\nu\epsilon^2 \notag \\
&-\frac{1}{2}\alpha\left(\frac{\partial p}{\partial\nu}\right)^{\!2}\!\nu(1-\rho^2)\epsilon^2+O[dt] \notag
\end{align}
for the hedge portfolio $\delta[t,q,\nu]$ and the derivative price $p[t,q,\nu]$. In the first expression, the hedge ratio is adjusted when the increment of the variance rate is correlated with the increment of the underlying price. The derivative price then satisfies a linear parabolic differential equation with a non-linear correction term that prices the unhedgeable risks when the increments are not perfectly correlated.

\section{Discrete models}

As the example of the Heston model demonstrates, entropic risk optimisation can be consistently applied in theoretical models that include unhedgeable factors, naturally adjusting for market incompleteness in pricing and hedging. Data-driven models take a different approach, eschewing theoretical foundations for a statistical method that extracts information directly and exclusively from the data. In practice, this means that the model is supported on a finite set of eigenstates generated from a statistical model trained on historical data. The expectation measure is then assumed to be uniformly distributed on the eigenstates. With no theory to support development beyond the information expressed in the data, a more general foundation is needed that does not depend on unrealistic assumptions such as complete markets, continuous hedging or risk-free funding. Advocates of deep hedging address this by replacing risk neutralisation with risk optimisation, using convex risk metrics to assess strategic performance. While they suffer from lack of explainability, data-driven models can uncover complex relationships between economic variables that may be hard to capture theoretically.

Entropy is a universal concept in information theory that is ideally suited to a wide range of applications. In this section, the entropic risk metric is applied to risk optimisation in discrete models of classical and quantum information. The core premise of the discrete model is that the indeterminate variable has a spectrum of possible eigenvalues, which may include duplicates, that are uniformly distributed in the expectation measure. Classical or quantum models are then characterised by the existence or absence of a common basis of eigenstates supporting the eigenvalues of all the variables.

The indeterminate variable $X$ is associated with its spectrum $(X_1,\ldots,X_n)$ and the expectation measure assumes these outcomes are equally likely. Represent the variable as a diagonal matrix with its eigenvalues arrayed on the diagonal. The mean of $X$ is the average eigenvalue, expressed as the normalised trace of its matrix:
\begin{equation}
\Exp[X]:=\frac{1}{n}\tr[X]
\end{equation}
Significantly, this definition extends to arbitrary self-adjoint matrices, where the mean, defined to be the average eigenvalue, is equivalently evaluated as the normalised trace. Classical information is characterised by the commutativity of its indeterminate variables; without loss of generality, the variables are simultaneously diagonalised and the expectation measure is uniformly distributed on the common set of diagonalising eigenstates. This restriction is lifted for quantum information, where non-commutativity is embraced as an additional source of uncertainty with remarkable consequences for the quality of model fitting.

Functional calculus extends to self-adjoint matrices by applying the transformation directly to the eigenvalues in the diagonalised representation. The entropy-adjusted mean of $X$ is then defined:
\begin{equation}
\Exp[\alpha][X]=-\frac{1}{\alpha}\log\!\left[\frac{1}{n}\tr[\exp[-\alpha X]]\right]
\end{equation}
When optimising the strategy with this target function, care must be taken to manage the potential complications of non-commutativity. For example, the perturbation $\partial(X^2)=(\partial X)X+X(\partial X)$ cannot be simplified further when $X$ and $\partial X$ do not commute. Fortunately, the trace of this expression can be simplified, thanks to cyclicity, and in general the perturbation has the handy simplification:
\begin{equation}
\partial\tr[f[X]]=\tr[f'[X]\,\partial X]
\end{equation}
Using this result, the portfolios are optimised at the stationary points of the entropy-adjusted mean:
\begin{align}
0&=\frac{\partial}{\partial\omega}(\Exp[\alpha][\omega\cdot dq]-\omega\cdot qr\,dt) \\
&=\bfrac{\tr[dq\,\exp[-\alpha\omega\cdot dq]]}{\tr[\exp[-\alpha\omega\cdot dq]]}-qr\,dt \notag \\
0&=\frac{\partial}{\partial\delta}(\Exp[\alpha][dp+(\omega-\delta)\cdot dq]-(p-\delta\cdot q)(r+\alpha s)\,dt) \notag \\
&=-\bfrac{\tr[dq\,\exp[-\alpha(P+(\omega-\delta)\cdot dq)]]}{\tr[\exp[-\alpha(P+(\omega-\delta)\cdot dq)]]}+q(r+\alpha s)\,dt \notag
\end{align}
which generates the calibration and hedge conditions. The price condition for the initial derivative price is then obtained by equating the entropy-adjusted means. With the unit optimal portfolio defined by $\phi:=\alpha\omega$, these three conditions are expressed as:
\begin{align}
0&=\tr[(dq-qr\,dt)\exp[-\phi\cdot dq]] \\
0&=\tr[(dq-q(r+\alpha s)\,dt)\exp[-\phi\cdot dq-\alpha(P-\delta\cdot dq)]] \notag \\
p&=-\frac{1}{\alpha}\log\!\left[\frac{\tr[\exp[-\phi\cdot dq-\alpha(P-\delta\cdot dq)]]}{\tr[\exp[-\phi\cdot dq]]}\right] \notag
\end{align}
The calibration condition is solved together with the self-funding condition $\phi\cdot q=0$ for the implied funding rate $r$ and the unit optimal portfolio $\phi$, and the hedge and price conditions are solved together with the self-funding condition $\delta\cdot q=p$ for the implied funding spread $s$, the hedge portfolio $\delta$ and the initial derivative price $p$.

\subsection{Mid pricing}

There are, in general, no explicit solutions to the hedge and price conditions, though they are efficiently solved by the Newton-Raphson scheme. Mid pricing is derived in the limit $\alpha=0$, and for this special case the conditions are explicitly solved to derive the hedge portfolio and the initial derivative price. The solution depends on the small-$\alpha$ expansion of the conditions, which is complicated by the non-commutativity of the underlying returns. Fortunately, the expansion can be generated with the aid of Poincar\'e's formula for the differential of the matrix exponential.

Define the adjoint action $\ad_X[Y]:=XY-YX$ of the matrix $X$ on the matrix $Y$. Poincar\'e's formula is derived in the form:
\begin{align}
1-{}&\exp[-(X+\alpha Y)]\exp[X] \\
&=\int_{s=0}^1\frac{\partial}{\partial s}(1-\exp[-s(X+\alpha Y)]\exp[sX])\,ds \notag \\
&=\alpha\int_{s=0}^1\exp[-s(X+\alpha Y)]Y\exp[sX]\,ds \notag \\
&=\alpha\int_{s=0}^1\exp[-sX]Y\exp[sX]\,ds+O[\alpha^2] \notag \\
&=\alpha\left(\int_{s=0}^1\exp[-s\ad_X]\,ds\right)\![Y]+O[\alpha^2] \notag \\
&=\alpha\bfrac{1-\exp[-\ad_X]}{\ad_X}[Y]+O[\alpha^2] \notag
\end{align}
where the foundational relationship between Lie algebras and Lie groups:
\begin{equation}
\exp[\ad_X][Y]=\exp[X]Y\exp[-X]
\end{equation}
is applied at the fourth step. Combining this result with the elementary property $\tr[\ad_X[Y]Z]=-\tr[Y\ad_X[Z]]$ of the adjoint action derives the expansion:
\begin{align}
\tr[Z\exp&[-(X+\alpha Y)]]=\tr[Z\exp[-X]] \\
&-\alpha\tr\!\left[\bfrac{\exp[\ad_X]-1}{\ad_X}[Z]Y\exp[-X]\right]+O[\alpha^2] \notag
\end{align}

This result is used to solve the hedge and price conditions in the mid-pricing case. Define the price measure:
\begin{equation}
\Exp^\phi[X]:=\bfrac{\tr[X\exp[-\phi\cdot dq]]}{\tr[\exp[-\phi\cdot dq]]}
\end{equation}
Poincar\'e's formula derives the small-$\alpha$ expansions:
\begin{align}
&\frac{\tr[\exp[-\phi\cdot dq-\alpha(P-\delta\cdot dq)]]}{\tr[\exp[-\phi\cdot dq]]}= \\
&\qquad\qquad1-\alpha\Exp^\phi[P-\delta\cdot dq]+O[\alpha^2] \notag \\
&\frac{\tr[dq\,\exp[-\phi\cdot dq-\alpha(P-\delta\cdot dq)]]}{\tr[\exp[-\phi\cdot dq]]}= \notag \\
&\qquad\qquad qr\,dt-\alpha\Exp^\phi[d\hat{q}\,(P-\delta\cdot dq)]+O[\alpha^2] \notag
\end{align}
adjusting to account for non-commutativity in the underlying returns with the definition:
\begin{equation}
d\hat{q}:=\bfrac{\exp[\ad_{\phi\cdot dq}]-1}{\ad_{\phi\cdot dq}}[dq]
\end{equation}
The expansions generate the following solutions for the hedge and price conditions:
\begin{align}
p&=\frac{1}{1+r\,dt}\Exp^\phi[P]+O[\alpha] \\
s\,dt&=\bfrac{p-\Var^\phi[d\hat{q},dq]^{-1}\Var^\phi[d\hat{q},P]\cdot q}{\Var^\phi[d\hat{q},dq]^{-1}q\cdot q}+O[\alpha] \notag \\
\delta&=\Var^\phi[d\hat{q},dq]^{-1}(\Var^\phi[d\hat{q},P]+qs\,dt)+O[\alpha] \notag
\end{align}
Quantum solutions are similar to the corresponding classical solutions, but include a correction to the hedge portfolio when there is non-commutativity among the components of the underlying return vector.

\subsection{Option pricing}

Consider a market that contains a funding security whose final price is represented by the positive-definite matrix $B$ and an underlying security whose final price is represented by the self-adjoint matrix $Q$. For the spread portfolio with strike $k$ whose final price is represented by the self-adjoint matrix $Q-kB$, let $u$ be an eigenstate and let $\zeta$ be the corresponding eigenvalue satisfying:
\begin{equation}
(Q-kB)u=\zeta u
\end{equation}
The sensitivity of the eigenvalue to the strike is negative:
\begin{equation}
\frac{d\zeta}{dk}=-\bfrac{u^\dagger Bu}{u^\dagger u}
\end{equation}
The eigenstate may also be sensitive to the strike, and this creates higher-order convexity in the relationship between strike and eigenvalue. If the matrices $Q$ and $B$ commute then the eigenstates of the matrix $Q-kB$ are fixed as the strike varies and the corresponding eigenvalues are linear in the strike. If the matrices $Q$ and $B$ do not commute then the eigenstates of the matrix $Q-kB$ are not fixed as the strike varies and the corresponding eigenvalues are non-linear in the strike.

Convexity between strike and eigenvalue can be utilised in option pricing. Assume for convenience that, for each $i=1,\ldots,n$, the $i$th eigenvalue $\zeta_i$ is strictly decreasing as a function of the strike and is zero at the unique strike $x_i$. If the calibration condition is solved with unit optimal portfolio $\phi=0$ and implied funding rate $r=0$, so that funding adjustments can be ignored, then the mid price for the option is:
\begin{align}
&\frac{1}{n}\tr[(Q-kB)^+]=\frac{1}{n}\sum_{i=1}^n\zeta_i[k]^+ \\
&\qquad=\sum_{i=1}^n(x_i-k)^+w_i+\int_{x=-\infty}^\infty(x-k)^+w[x]\,dx \notag
\end{align}
where the second expression is derived from the first expression using integration by parts with:
\begin{align}
w_i&:=-\frac{1}{n}\zeta_i'[x_i] \\
w[x]&:=\frac{1}{n}\sum_{i=1}^n\zeta_i''[x](1_{x<x_i}) \notag
\end{align}
The option price is expressed as an integral of the payoff $(x-k)^+$ with discrete weight $w_i$ at the point $x_i$ and continuous density $w[x]$ at the point $x$. In the classical model with commuting matrices, the eigenvalues satisfy $\zeta_i''[x]=0$ and the implied distribution is discrete. In the quantum model with non-commuting matrices, the eigenvalues satisfy $\zeta_i''[x]\ne0$ and the implied distribution has discrete and continuous components.

This ability of the model to extract an implied distribution with continuous support from discrete variables is the source of its efficacy. Heisenberg uncertainty can be calibrated to realistic market volatility smiles with as few as two eigenstates, demonstrating convergence in the quantum model that is exponentially faster than classical Monte-Carlo estimation. The eigenvalue convexity this exploits is precisely the feature that enables violation of the classical Bell's inequality, and is a uniquely quantum property.

The quantum contribution to the model is captured in the unitary matrix $U$ that rotates from the eigenbasis of the matrix $Q$ to the eigenbasis of the matrix $B$. Starting with diagonal matrices for $Q$ and $B$, the option price is then expressed as:
\begin{equation}
\frac{1}{n}\tr[(UQU^{-1}-kB)^+]
\end{equation}
No rotation between the eigenbases is generated when $U$ is given by the identity:
\begin{equation}
U_{jk}=1_{j=k}
\end{equation}
For the simplest case $n=2$ this derives the option price:
\begin{equation}
\frac{1}{2}(Q_1-kB_1)^++\frac{1}{2}(Q_2-kB_2)^+
\end{equation}
Maximum rotation between the eigenbases is generated when $U$ is given by the discrete Fourier transform:
\begin{equation}
U_{jk}=\frac{1}{\sqrt{n}}\exp\!\left[2\pi i\frac{jk}{n}\right]
\end{equation}
For the simplest case $n=2$ this derives the option price:
\begin{align}
&\frac{1}{2}\!\left((\bar{Q}-k\bar{B})+\sqrt{\hat{Q}^2+k^2\hat{B}^2}\right)^{\!+} \\
&+\frac{1}{2}\!\left((\bar{Q}-k\bar{B})-\sqrt{\hat{Q}^2+k^2\hat{B}^2}\right)^{\!+} \notag
\end{align}
where:
\begin{alignat}{2}
\bar{B}&:=\frac{1}{2}(B_1+B_2) &\qquad \hat{B}&:=\frac{1}{2}(B_1-B_2) \\
\bar{Q}&:=\frac{1}{2}(Q_1+Q_2) &\qquad \hat{Q}&:=\frac{1}{2}(Q_1-Q_2) \notag
\end{alignat}
The additional convexity in the relationship between strike and eigenvalue creates an implied distribution with continuous support.

\begin{figure*}[!pt]
\centering
\begin{tabular}{@{}C{0.33\textwidth-\tabcolsep}C{0.33\textwidth-\tabcolsep}C{0.33\textwidth-\tabcolsep}@{}}
\includegraphics[width=0.33\textwidth-\tabcolsep]{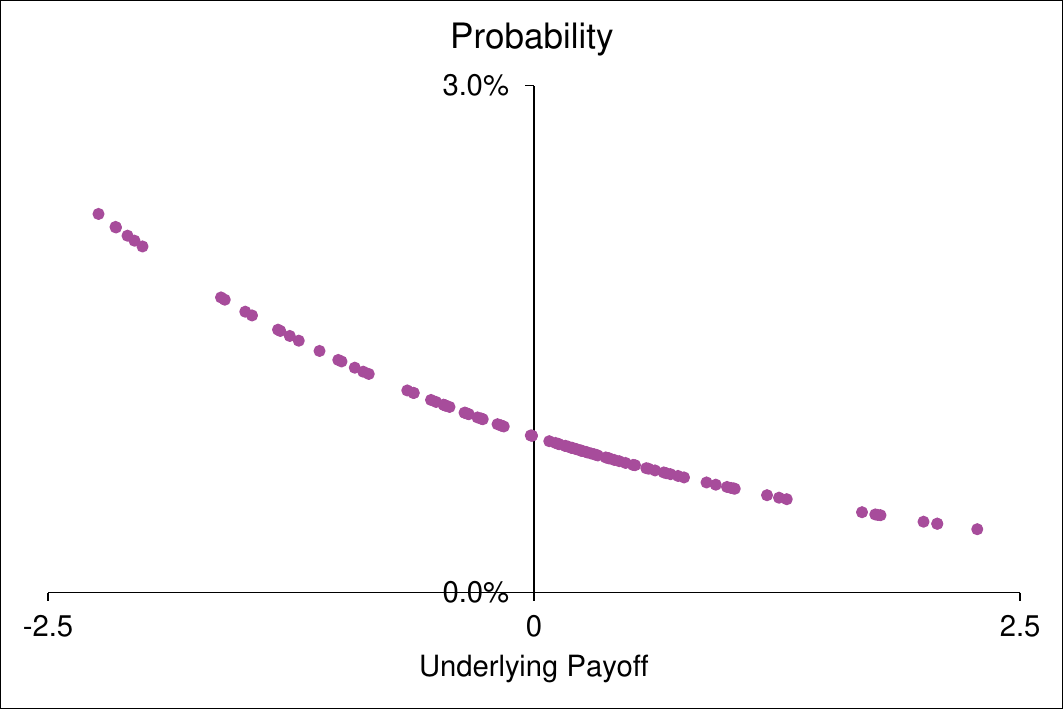}&
\includegraphics[width=0.33\textwidth-\tabcolsep]{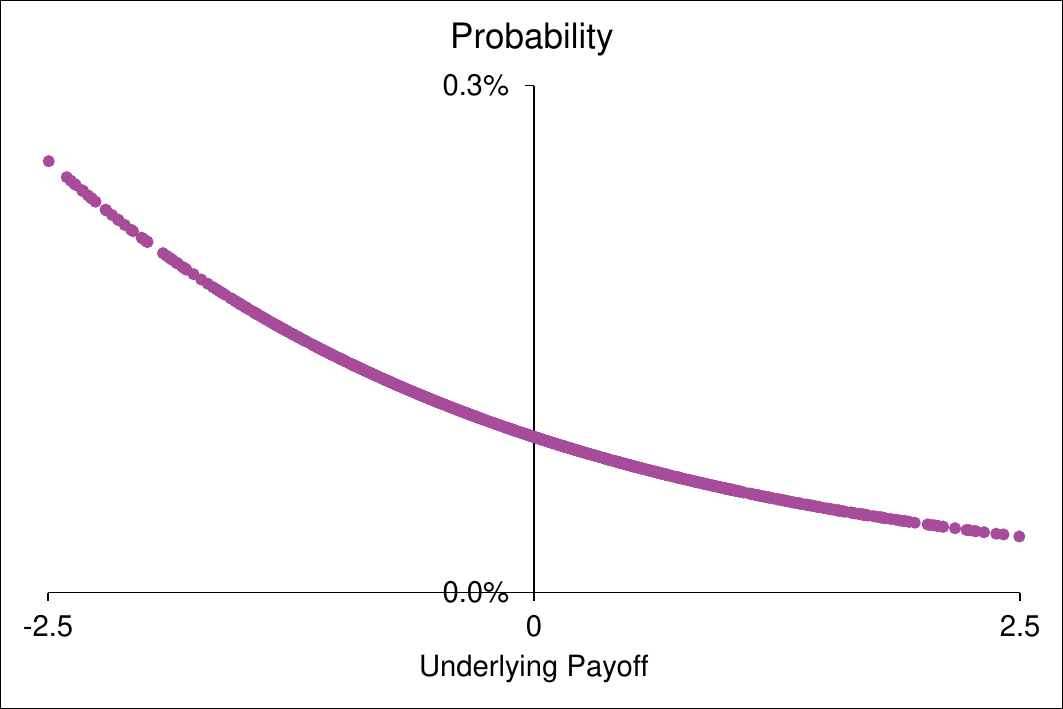}&
\includegraphics[width=0.33\textwidth-\tabcolsep]{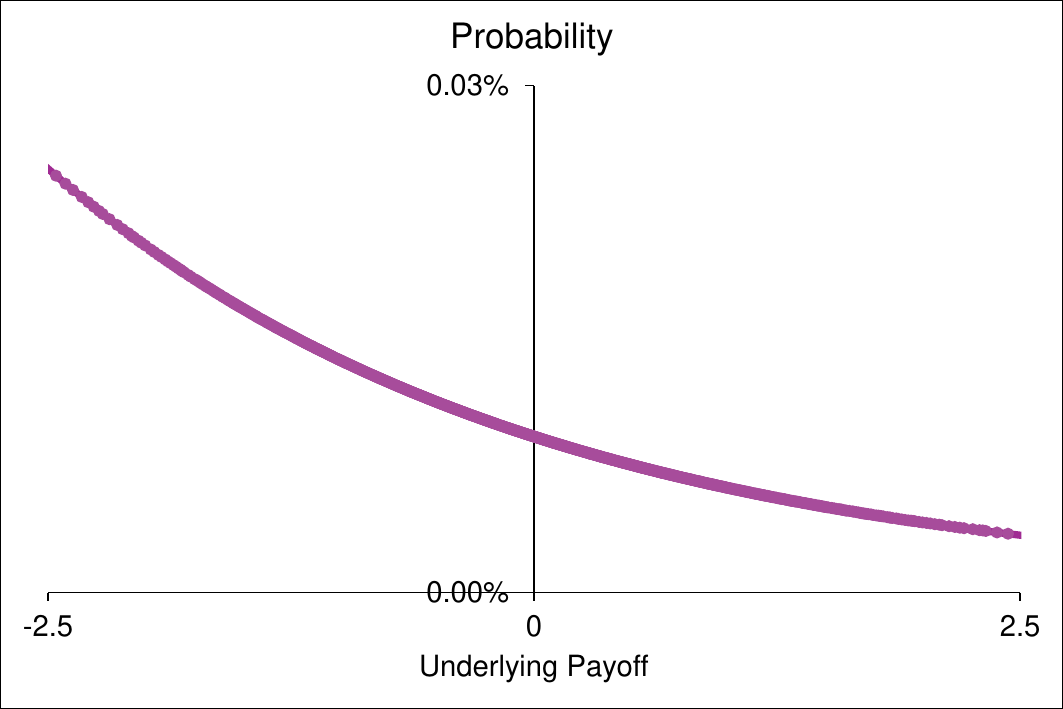}\\
\includegraphics[width=0.33\textwidth-\tabcolsep]{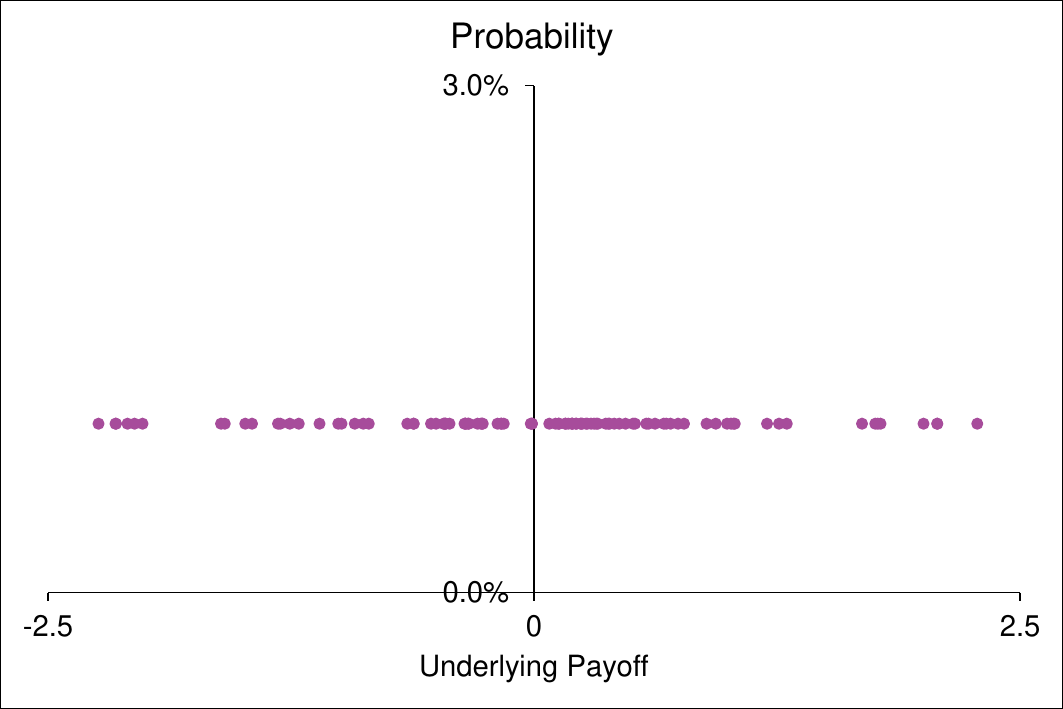}&
\includegraphics[width=0.33\textwidth-\tabcolsep]{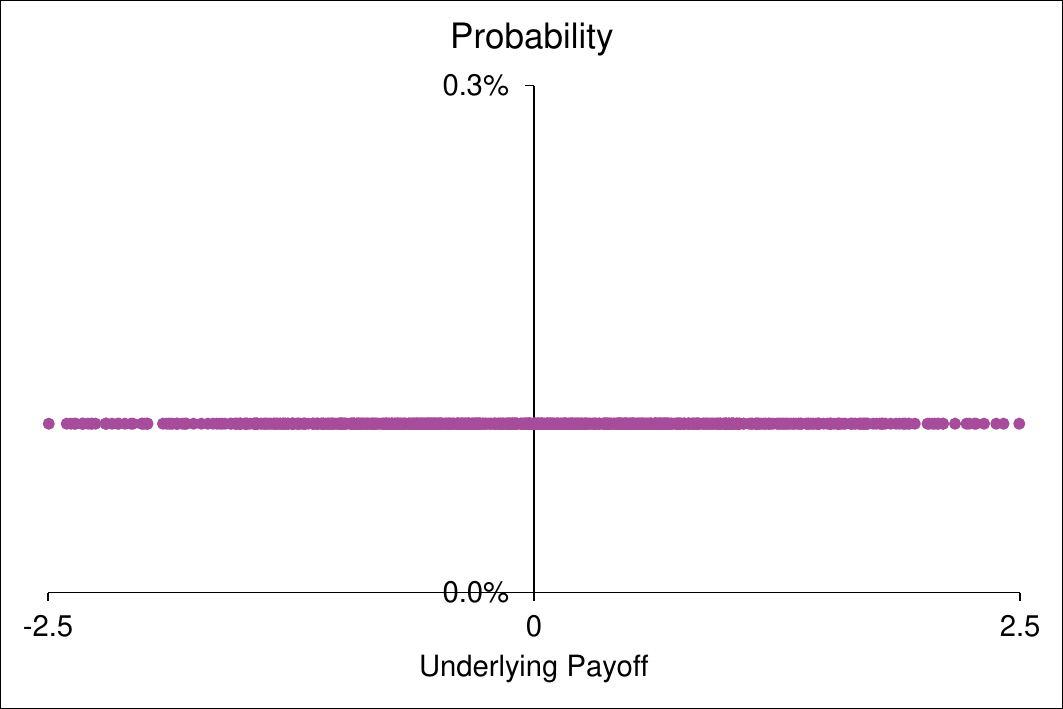}&
\includegraphics[width=0.33\textwidth-\tabcolsep]{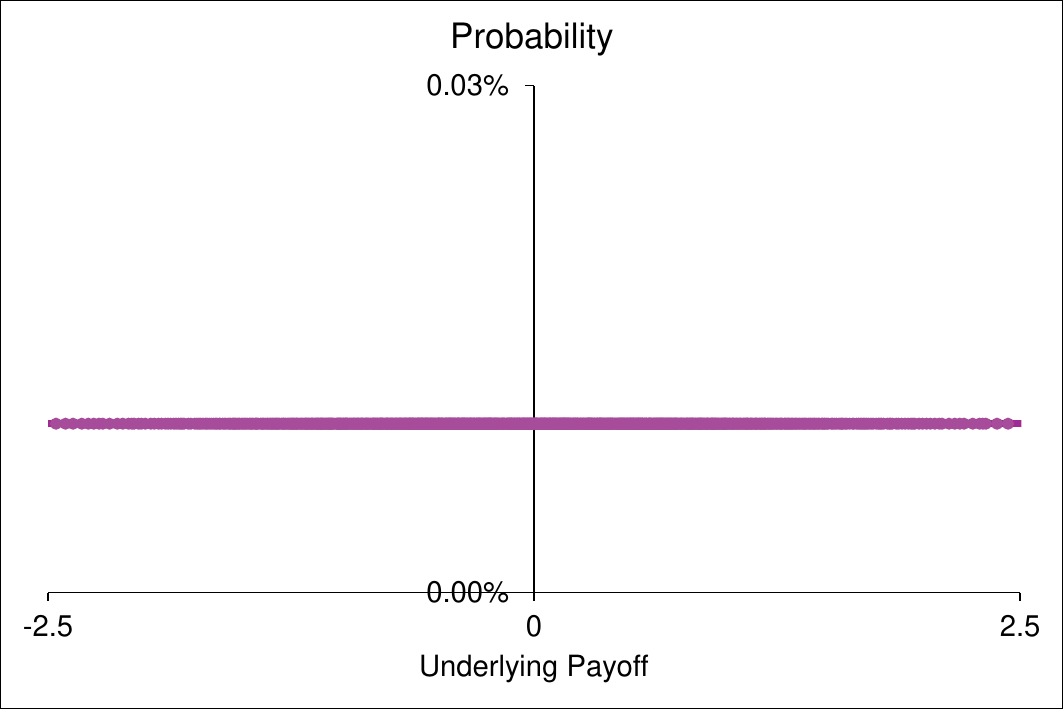}\\
\includegraphics[width=0.33\textwidth-\tabcolsep]{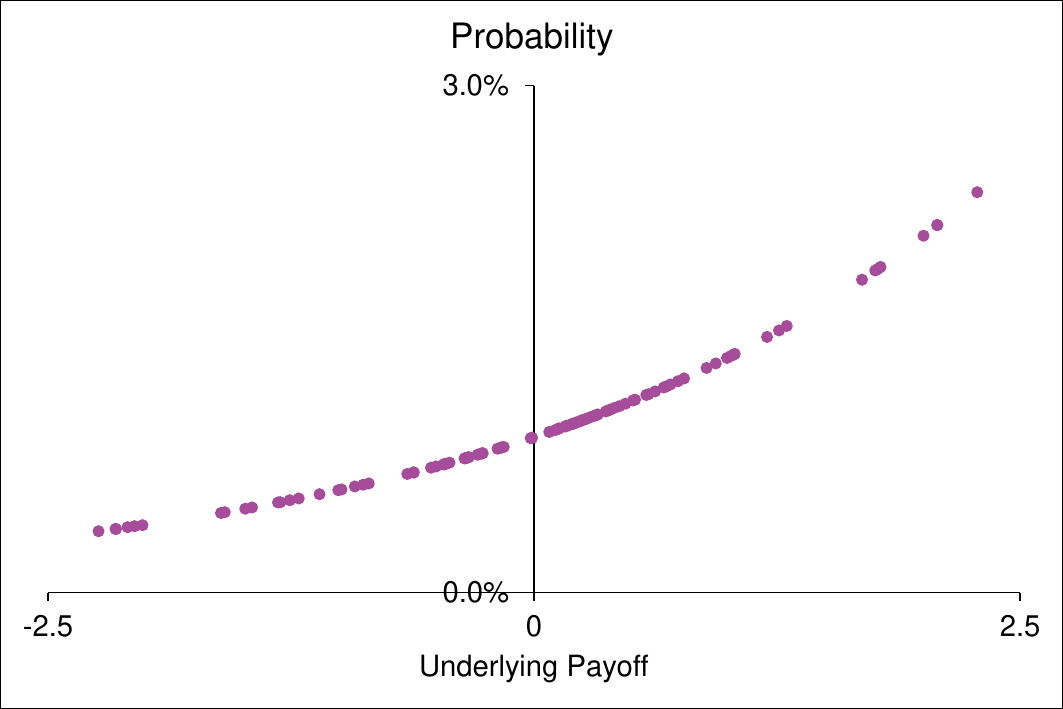}&
\includegraphics[width=0.33\textwidth-\tabcolsep]{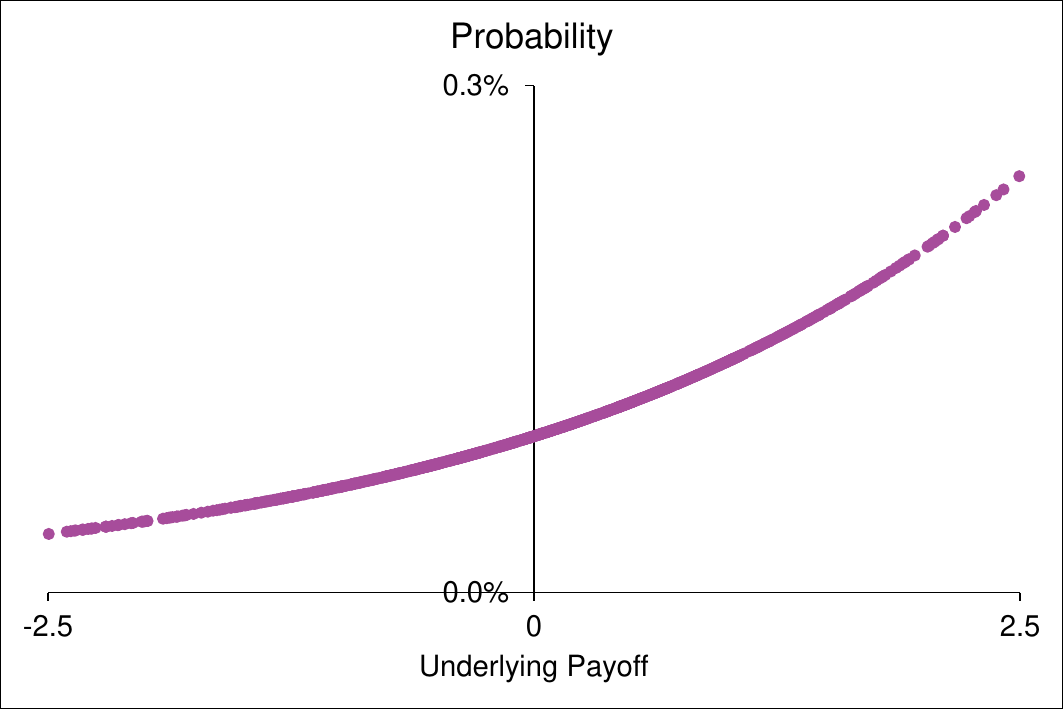}&
\includegraphics[width=0.33\textwidth-\tabcolsep]{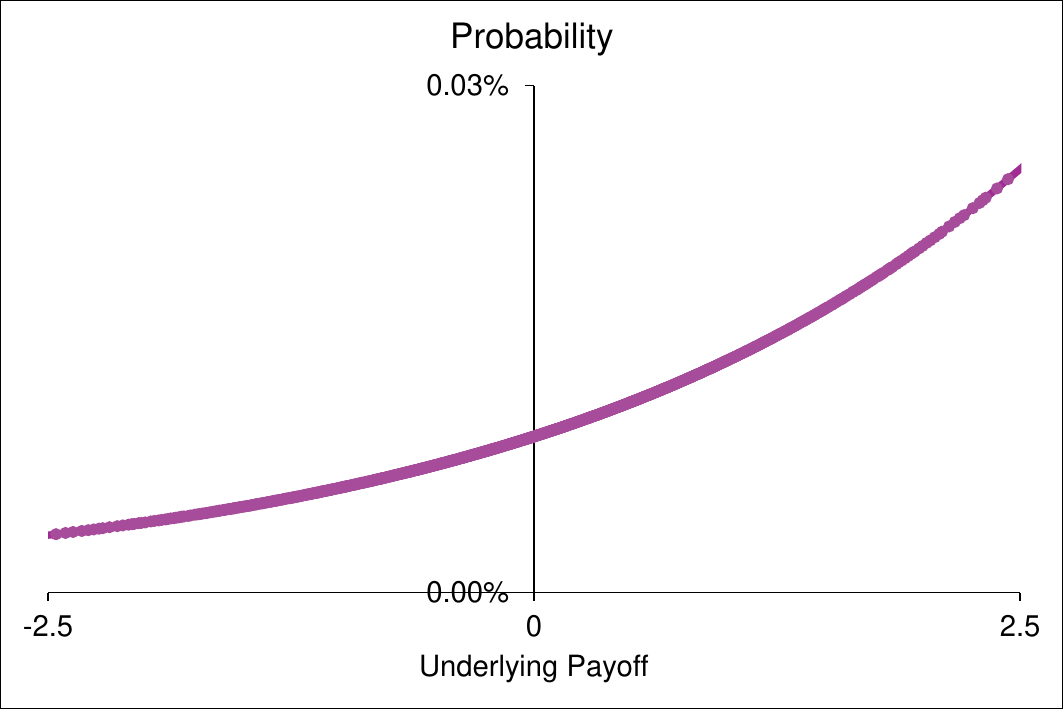}\\
Distribution ($n=\num{100}$)&Distribution ($n=\num{1000}$)&Distribution ($n=\num{10000}$)
\end{tabular}
\caption{In these studies of discrete models on classical information, the final underlying price is uniformly distributed in the expectation measure on $n=\num{100}$, $\num{1000}$ and $\num{10000}$ samples from a standardised normal variable, centralised to have zero mean. The price measure is supported on the same samples with probabilities adjusted to match the target initial underlying price $q=-0.4$ (top row), $q=0$ (middle row) and $q=0.4$ (bottom row). In the transformation from expectation to price measure, the uniform probabilities are biased using an exponential function of the final underlying price. Among equivalent measures that satisfy the calibration condition, the price measure thus has minimum entropy relative to the expectation measure.}
\label{fig:classicalprobs}
\end{figure*}

\begin{figure*}[!pt]
\centering
\begin{tabular}{@{}C{0.33\textwidth-\tabcolsep}C{0.33\textwidth-\tabcolsep}C{0.33\textwidth-\tabcolsep}@{}}
\includegraphics[width=0.33\textwidth-\tabcolsep]{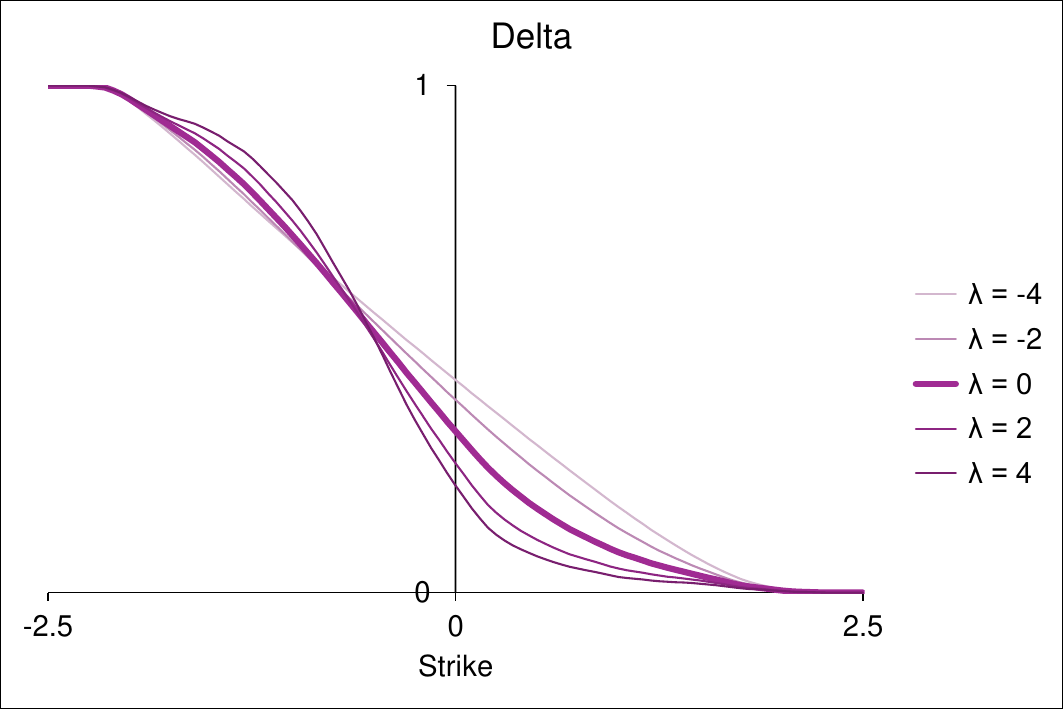}&
\includegraphics[width=0.33\textwidth-\tabcolsep]{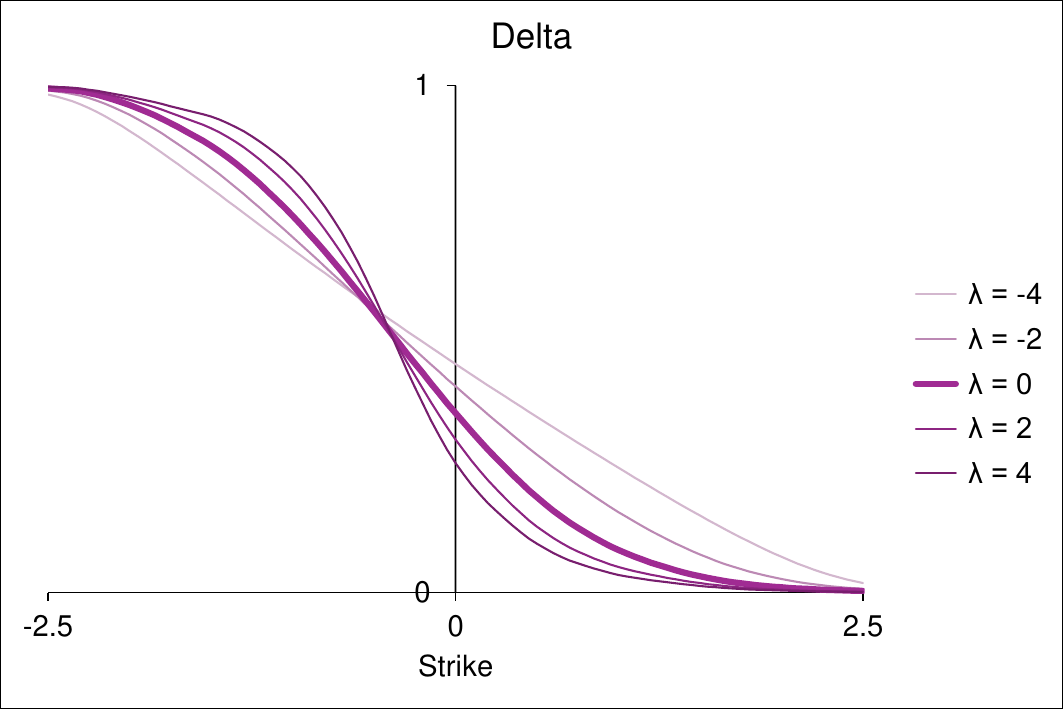}&
\includegraphics[width=0.33\textwidth-\tabcolsep]{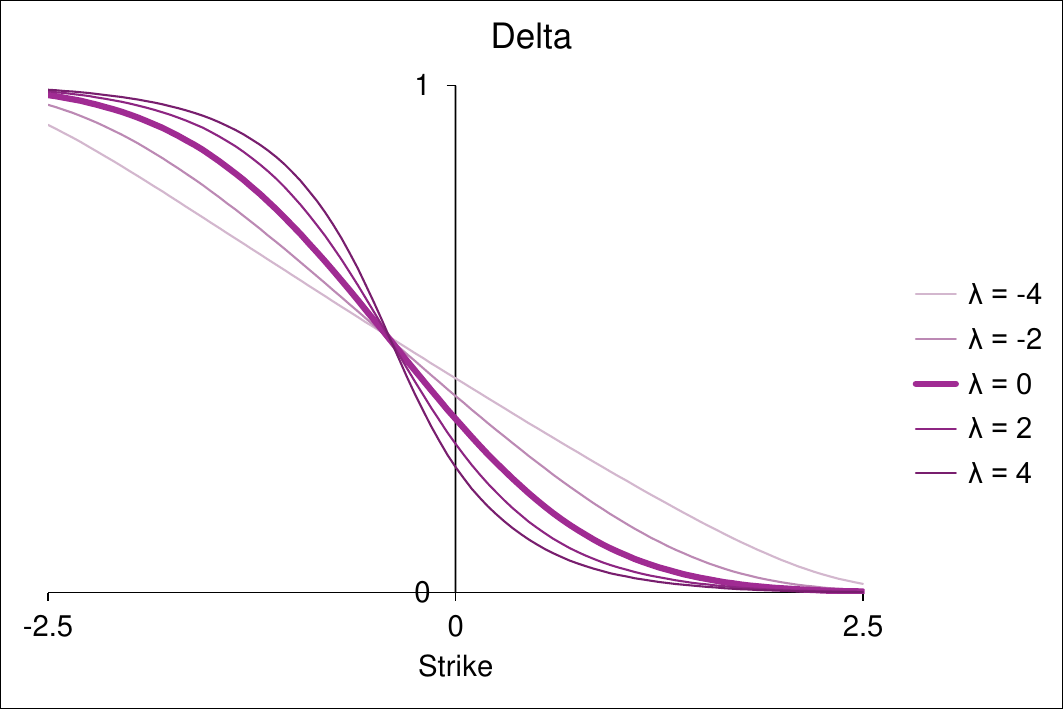}\\
\includegraphics[width=0.33\textwidth-\tabcolsep]{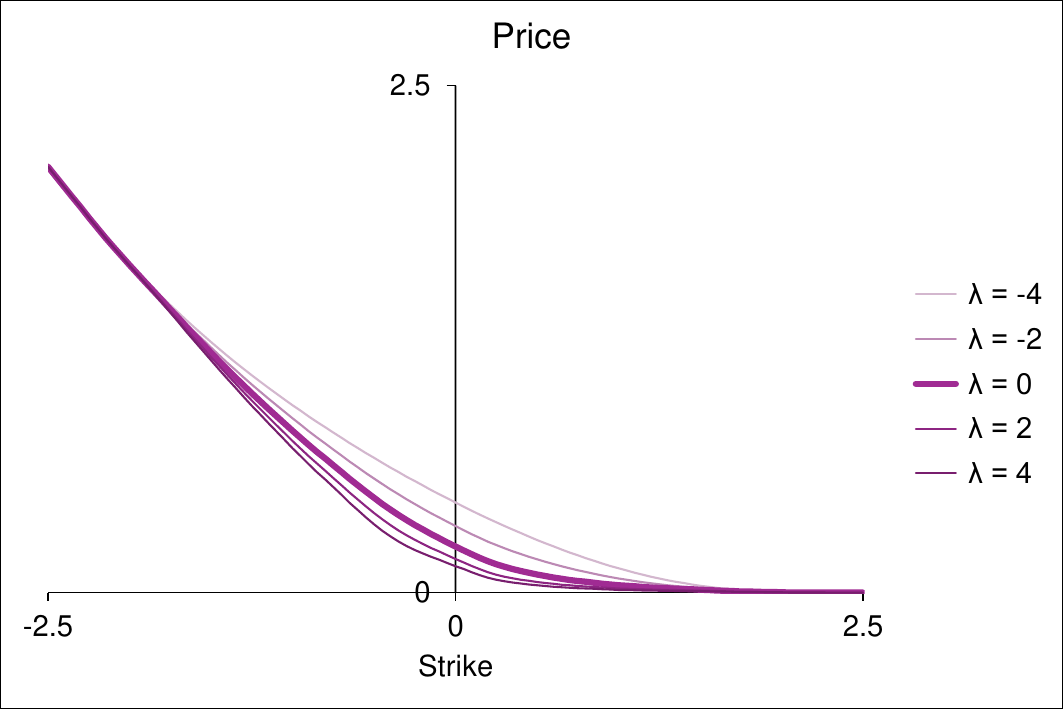}&
\includegraphics[width=0.33\textwidth-\tabcolsep]{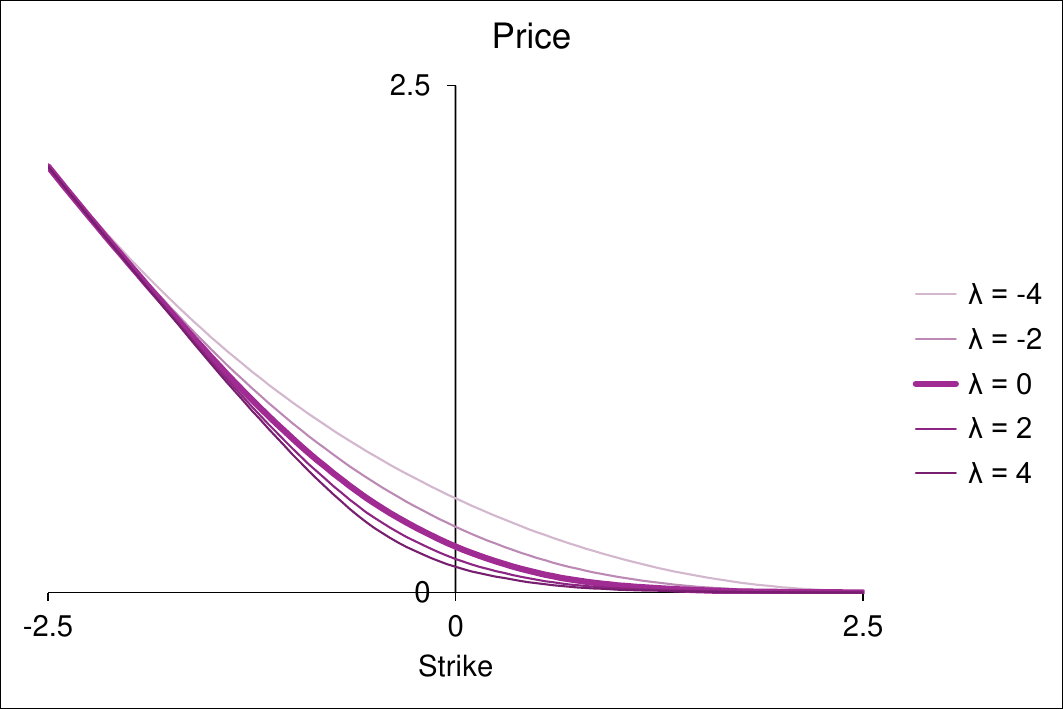}&
\includegraphics[width=0.33\textwidth-\tabcolsep]{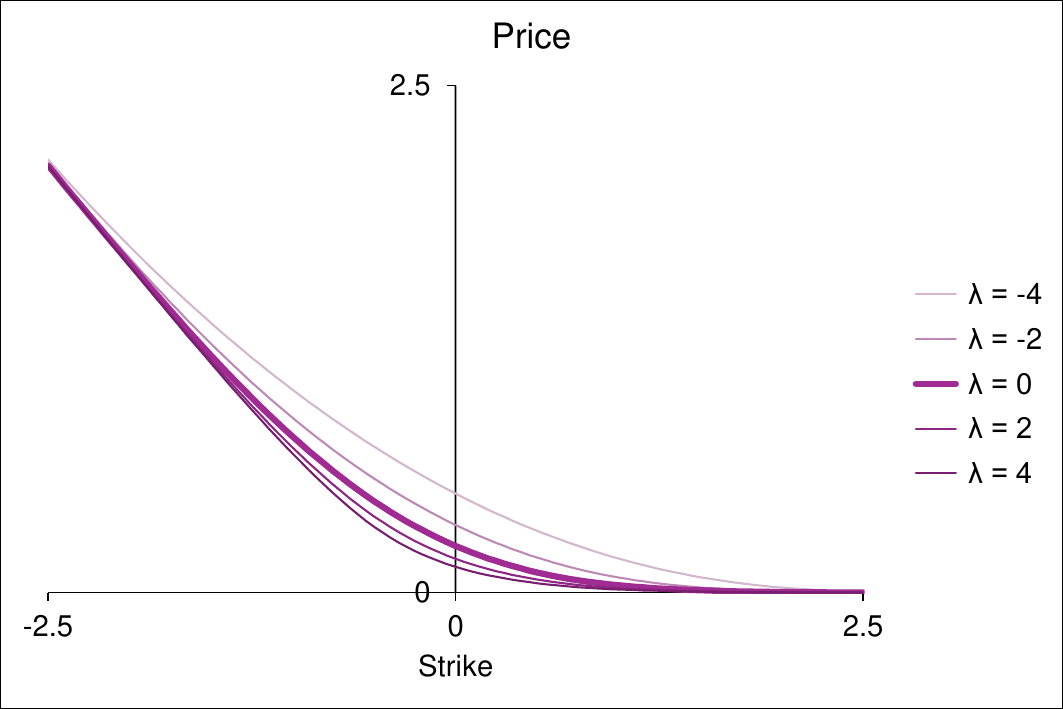}\\
\includegraphics[width=0.33\textwidth-\tabcolsep]{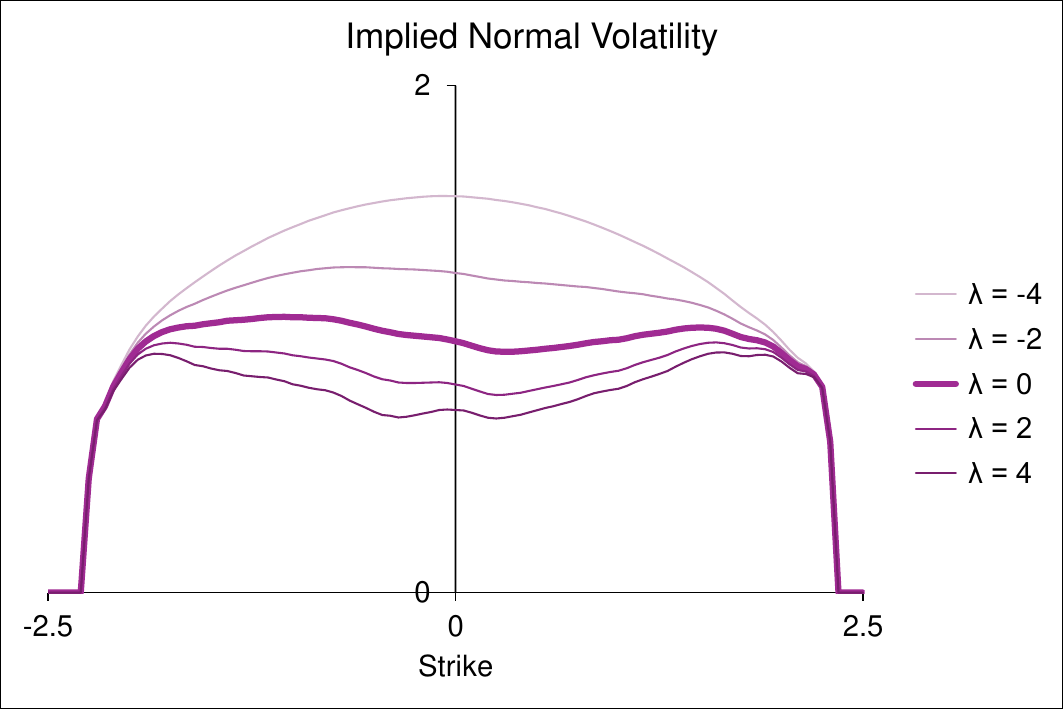}&
\includegraphics[width=0.33\textwidth-\tabcolsep]{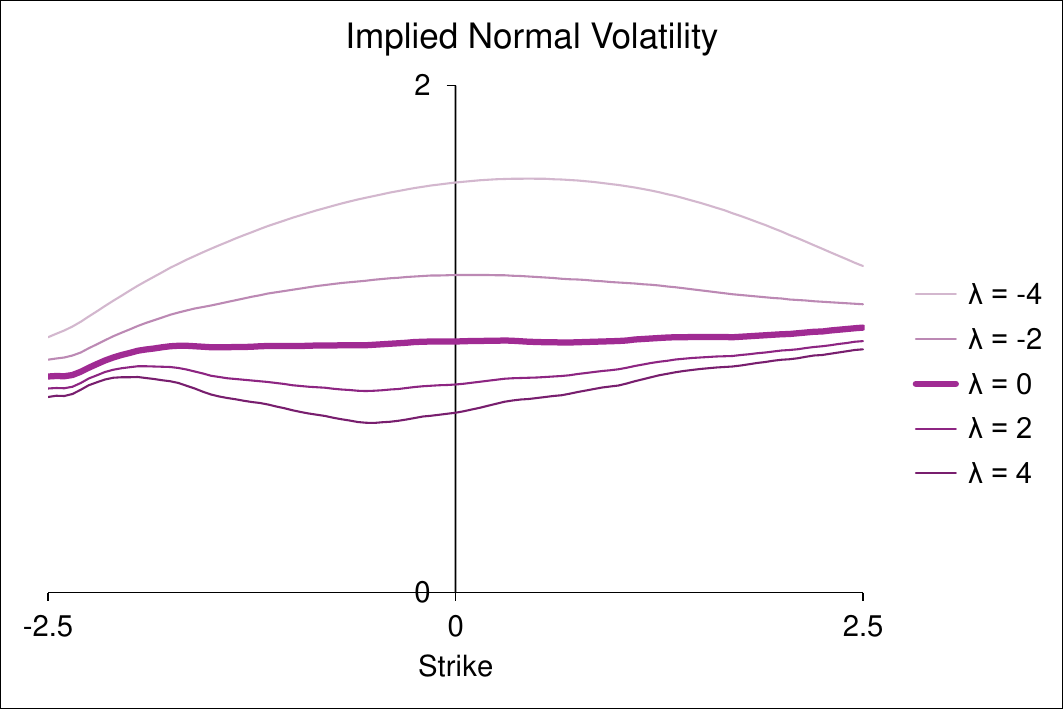}&
\includegraphics[width=0.33\textwidth-\tabcolsep]{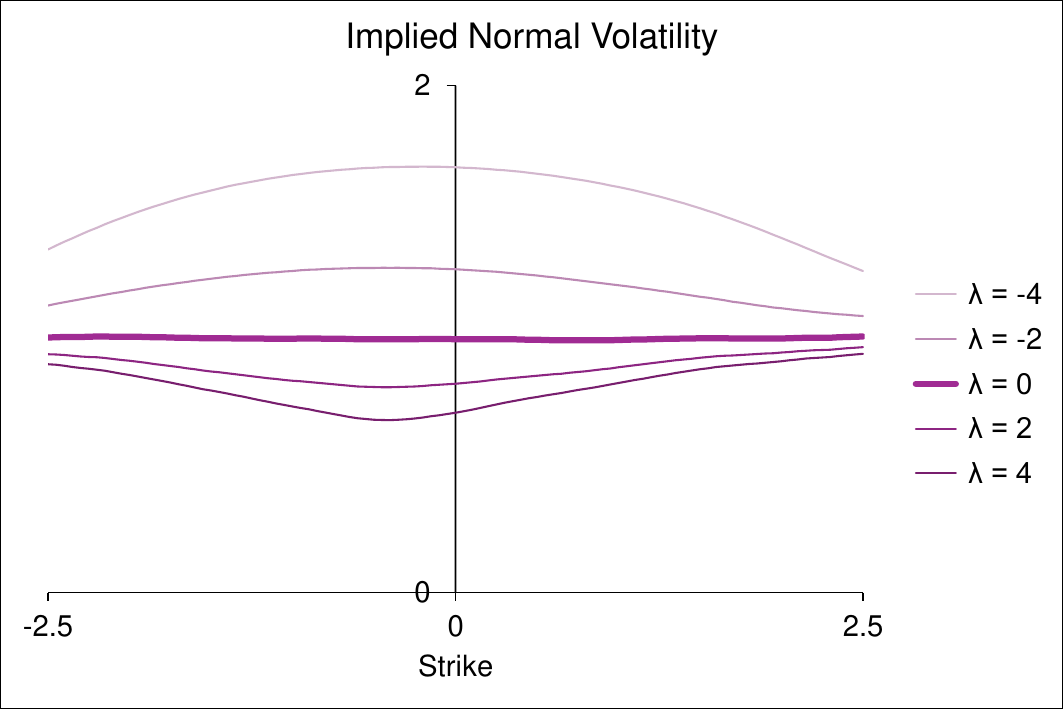}\\
Call Option ($n=\num{100}$)&Call Option ($n=\num{1000}$)&Call Option ($n=\num{10000}$)\\
\includegraphics[width=0.33\textwidth-\tabcolsep]{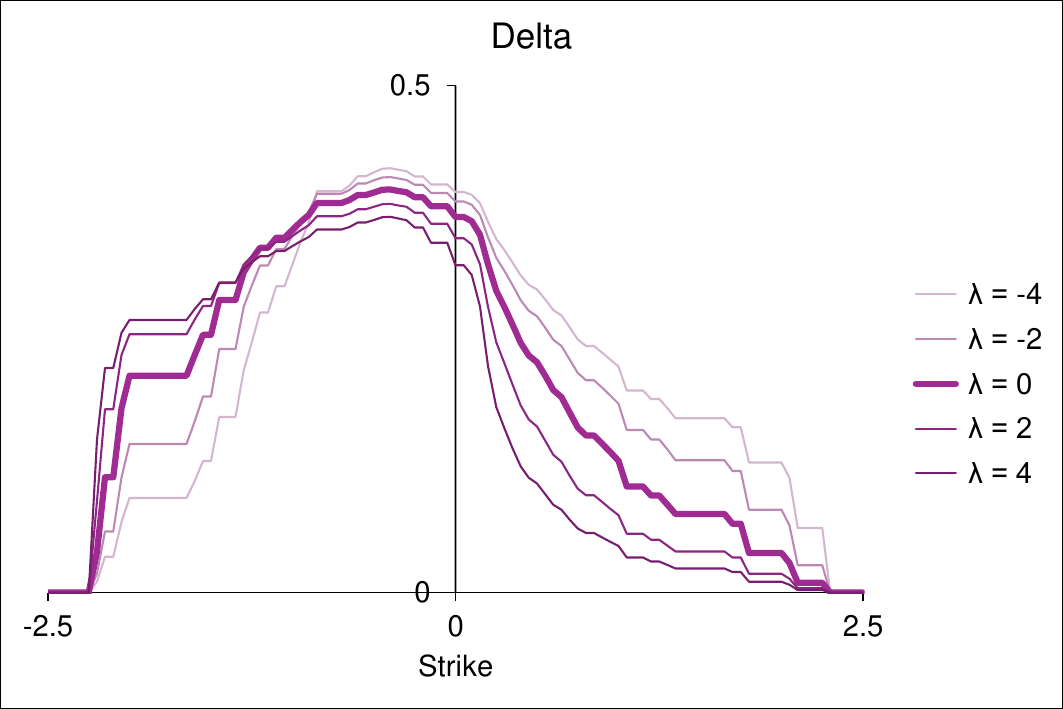}&
\includegraphics[width=0.33\textwidth-\tabcolsep]{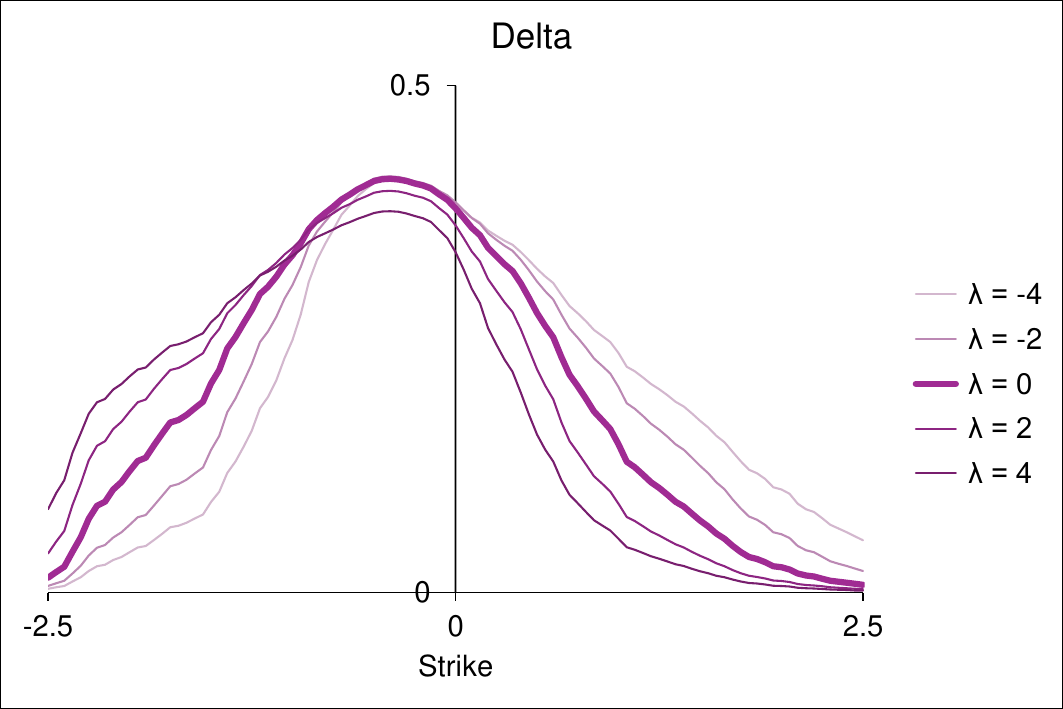}&
\includegraphics[width=0.33\textwidth-\tabcolsep]{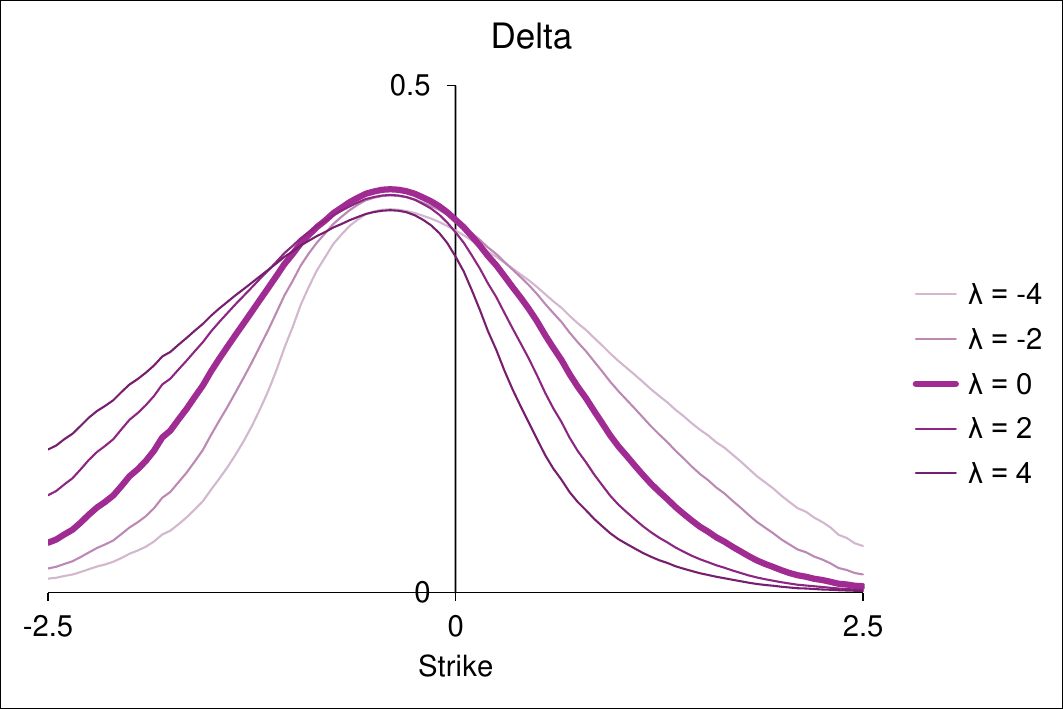}\\
\includegraphics[width=0.33\textwidth-\tabcolsep]{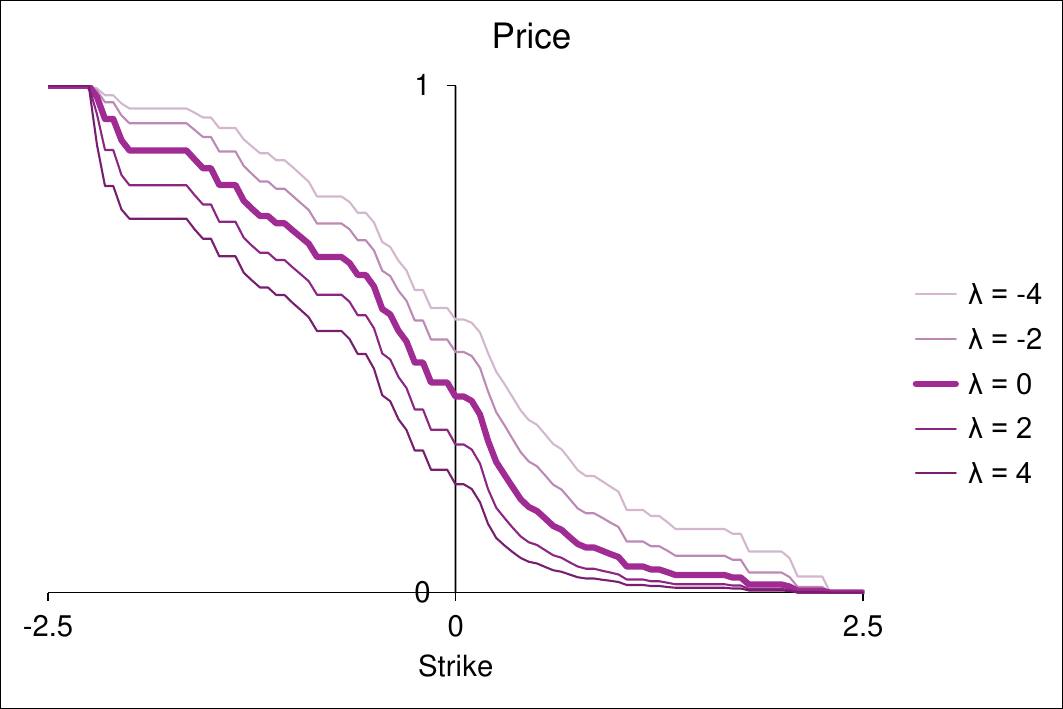}&
\includegraphics[width=0.33\textwidth-\tabcolsep]{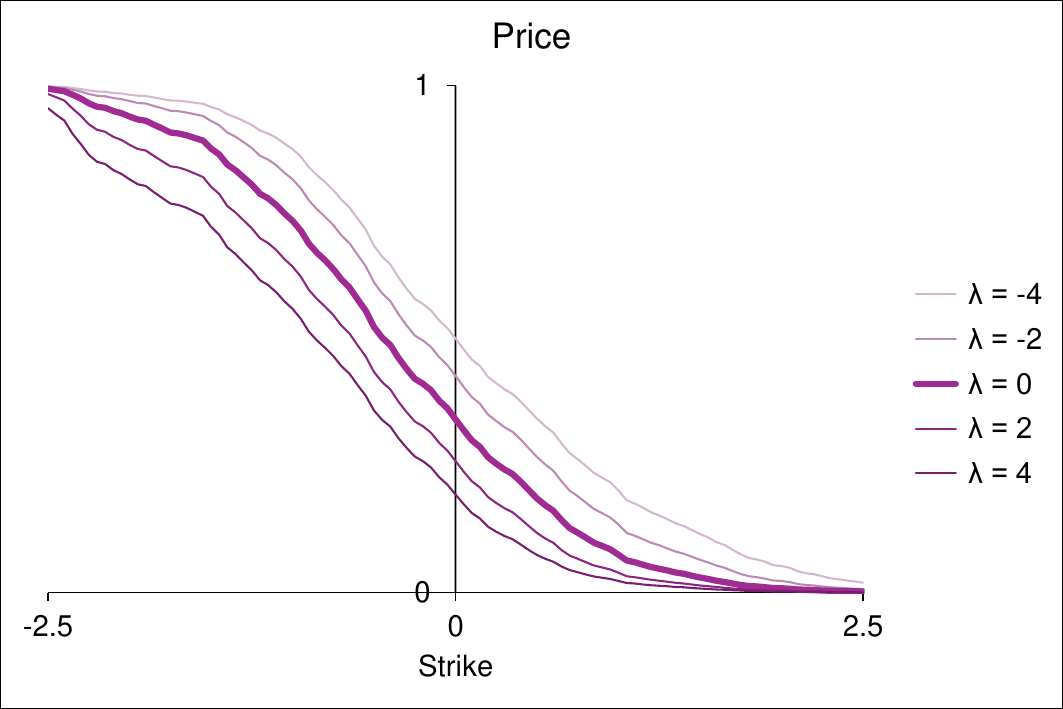}&
\includegraphics[width=0.33\textwidth-\tabcolsep]{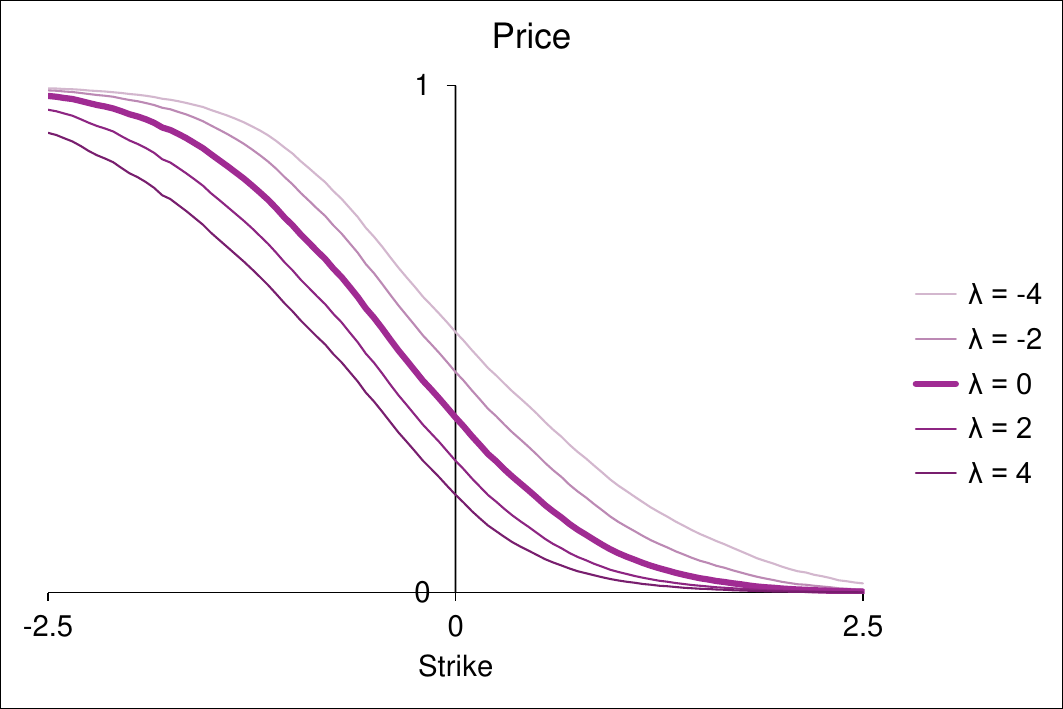}\\
Digital Option ($n=\num{100}$)&Digital Option ($n=\num{1000}$)&Digital Option ($n=\num{10000}$)
\end{tabular}
\caption{In these studies of discrete models on classical information, the final underlying price is uniformly distributed in the expectation measure on $n=\num{100}$, $\num{1000}$ and $\num{10000}$ samples from a standardised normal variable. The price measure is supported on the same samples with probabilities adjusted to match the target initial underlying price $q=-0.4$. Entropic risk optimisation identifies the mid and bid-offer for the price and the hedge ratio for a range of notionals $\lambda$.}
\label{fig:classicalM4}
\end{figure*}

\begin{figure*}[!pt]
\centering
\begin{tabular}{@{}C{0.33\textwidth-\tabcolsep}C{0.33\textwidth-\tabcolsep}C{0.33\textwidth-\tabcolsep}@{}}
\includegraphics[width=0.33\textwidth-\tabcolsep]{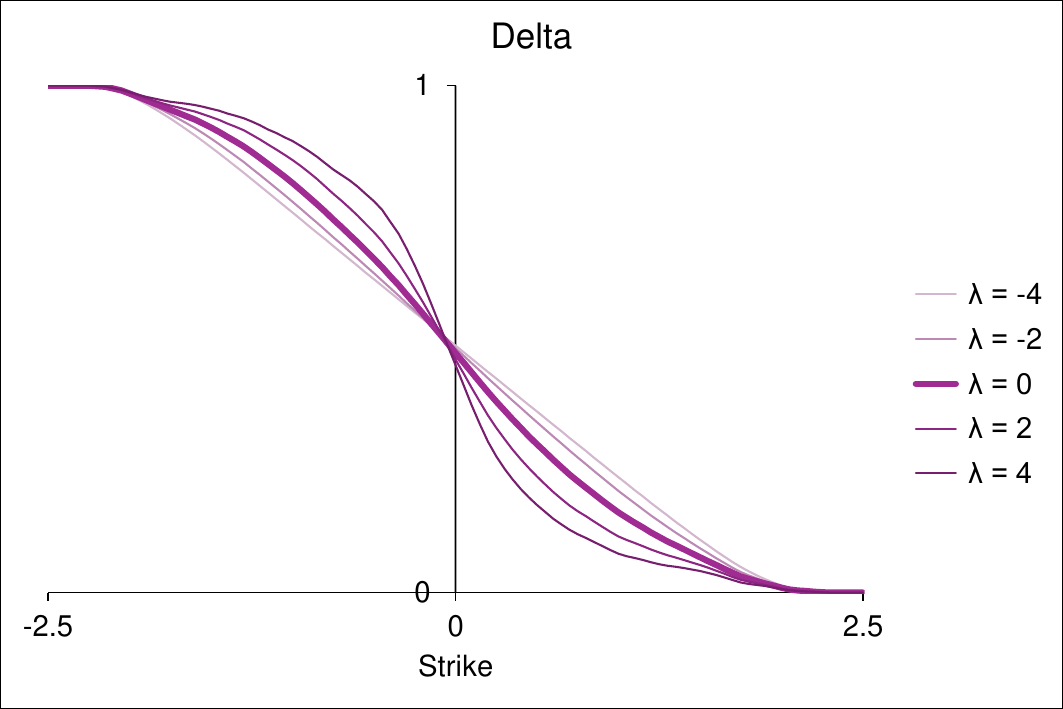}&
\includegraphics[width=0.33\textwidth-\tabcolsep]{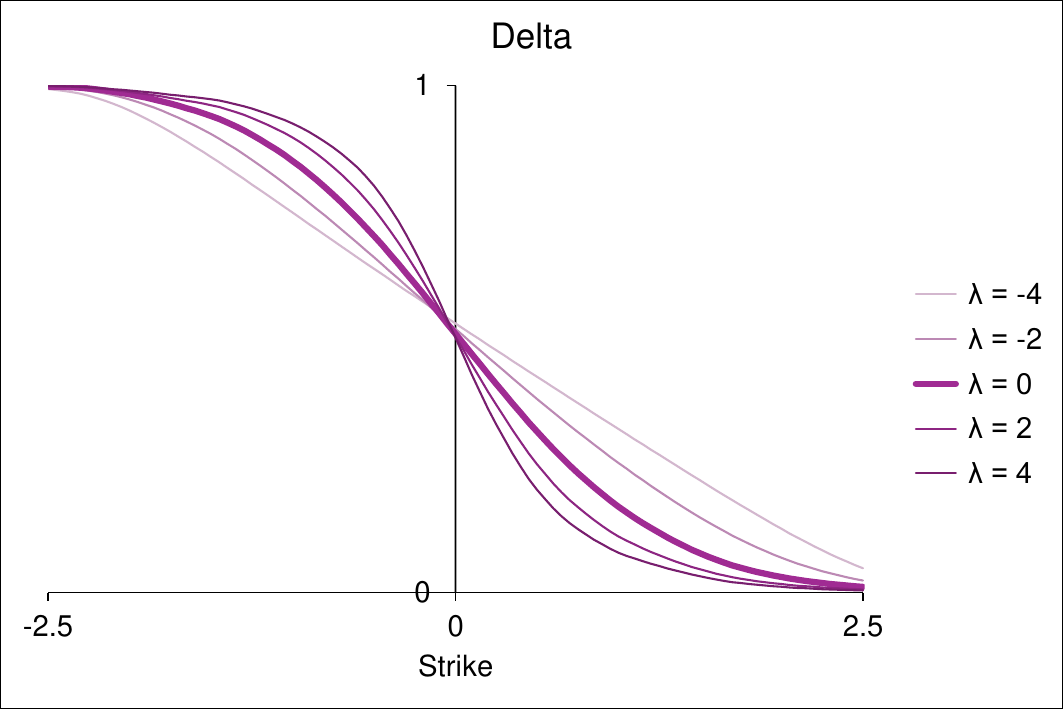}&
\includegraphics[width=0.33\textwidth-\tabcolsep]{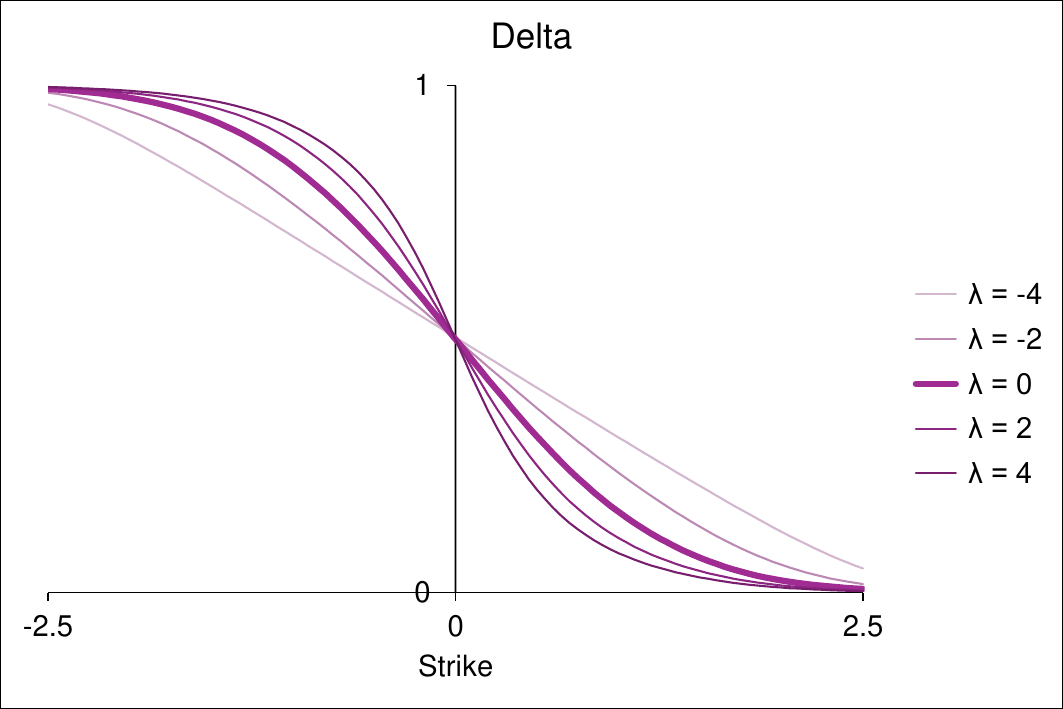}\\
\includegraphics[width=0.33\textwidth-\tabcolsep]{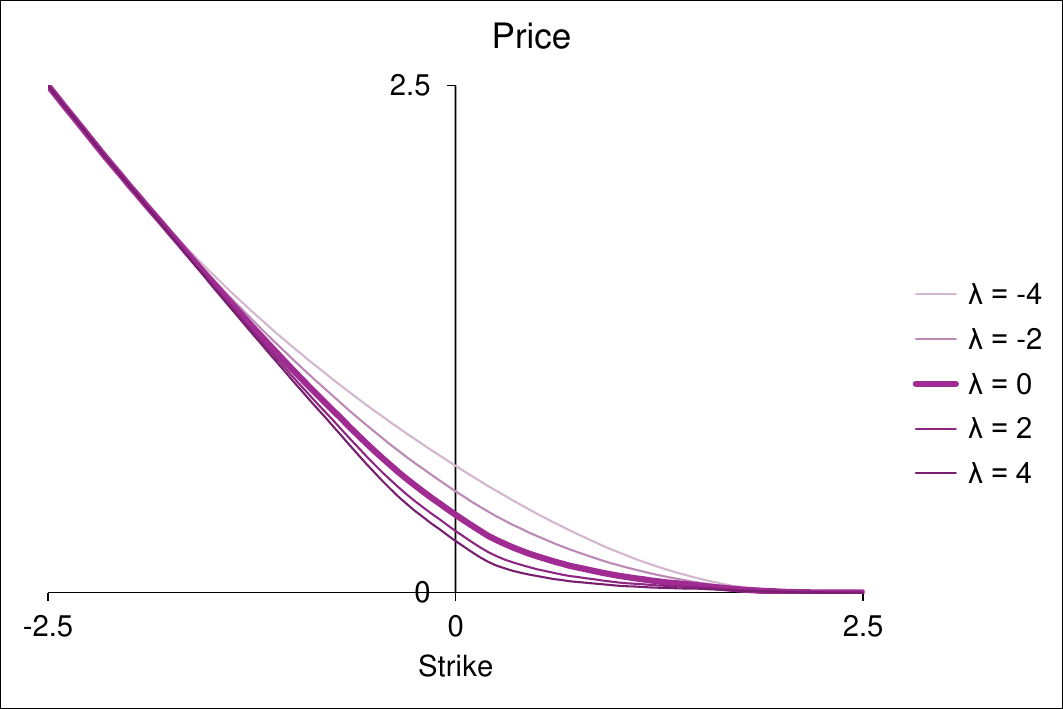}&
\includegraphics[width=0.33\textwidth-\tabcolsep]{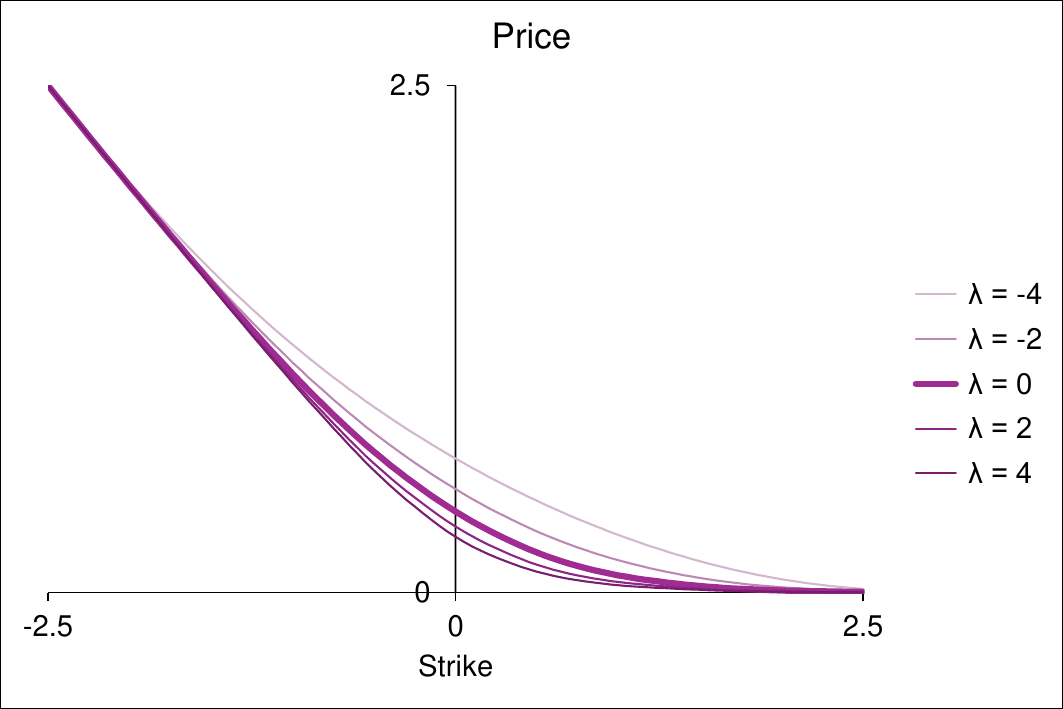}&
\includegraphics[width=0.33\textwidth-\tabcolsep]{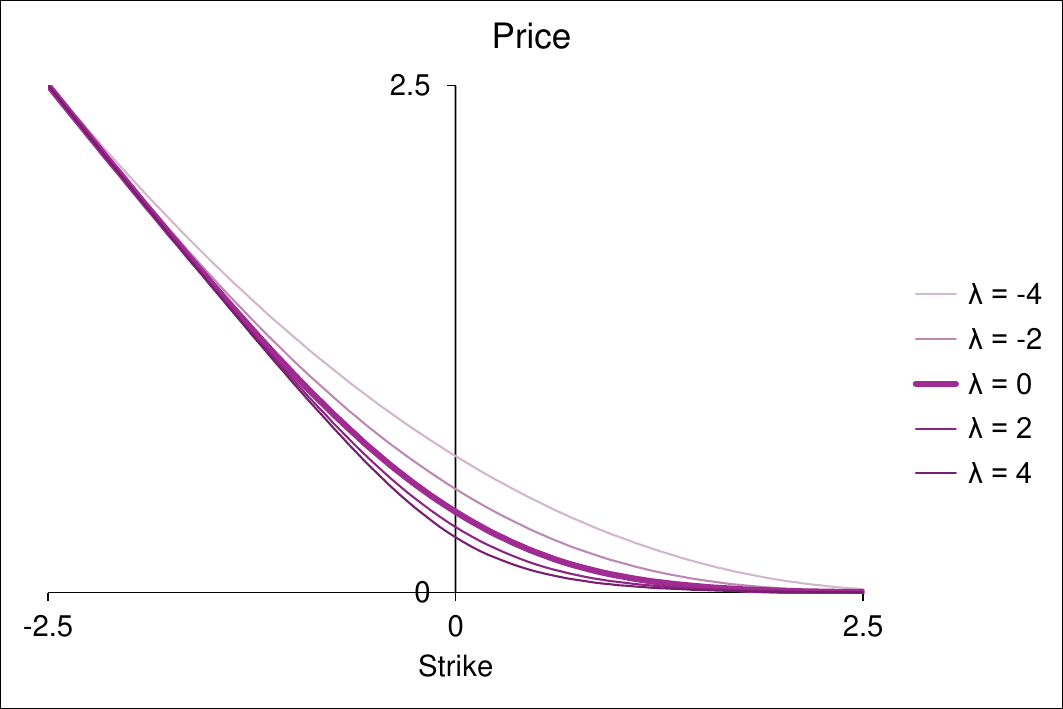}\\
\includegraphics[width=0.33\textwidth-\tabcolsep]{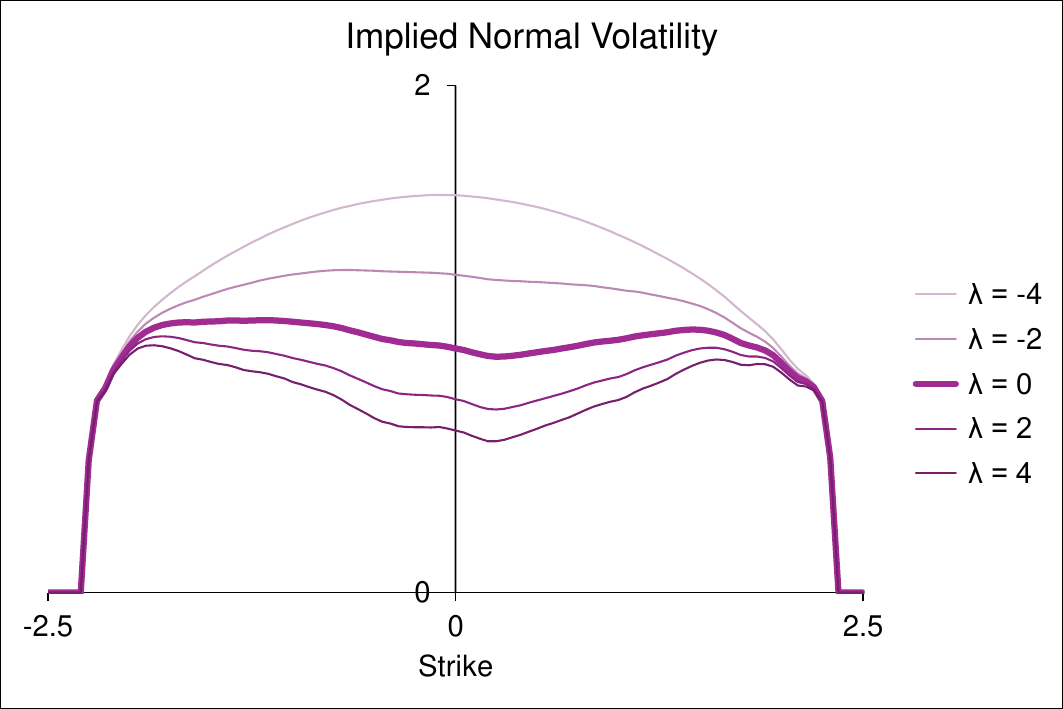}&
\includegraphics[width=0.33\textwidth-\tabcolsep]{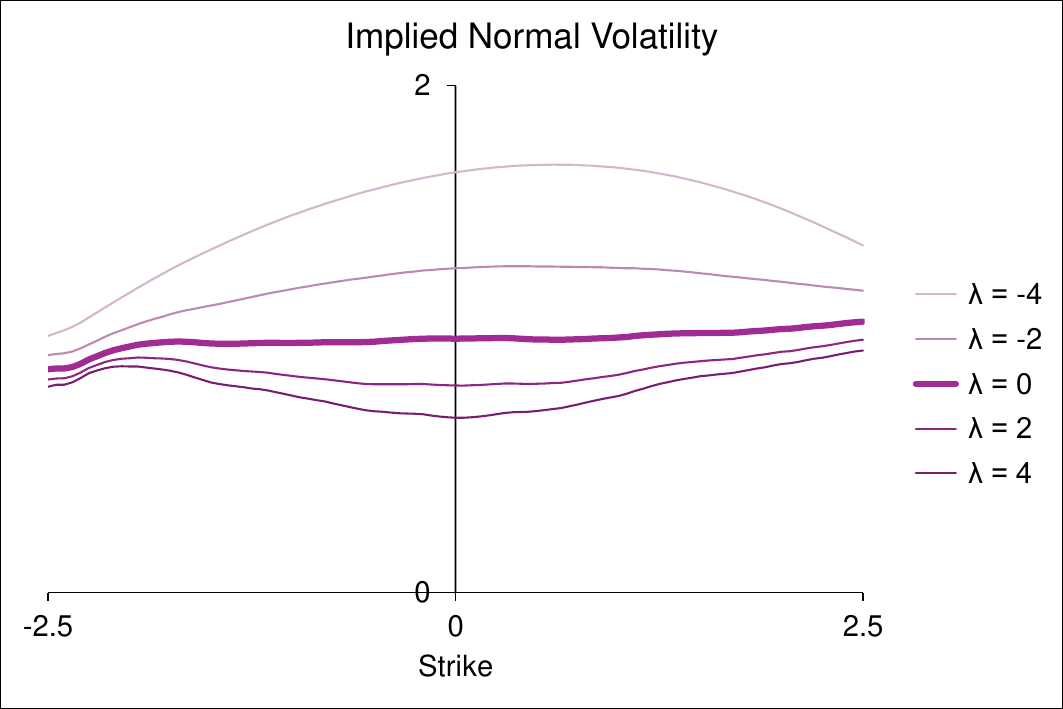}&
\includegraphics[width=0.33\textwidth-\tabcolsep]{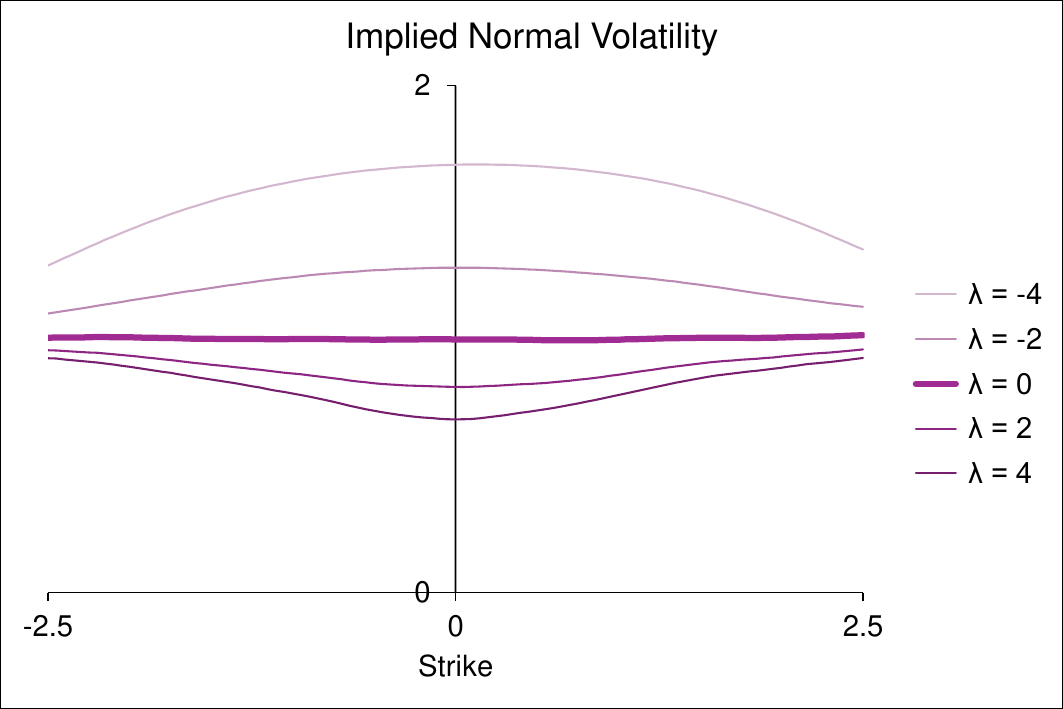}\\
Call Option ($n=\num{100}$)&Call Option ($n=\num{1000}$)&Call Option ($n=\num{10000}$)\\
\includegraphics[width=0.33\textwidth-\tabcolsep]{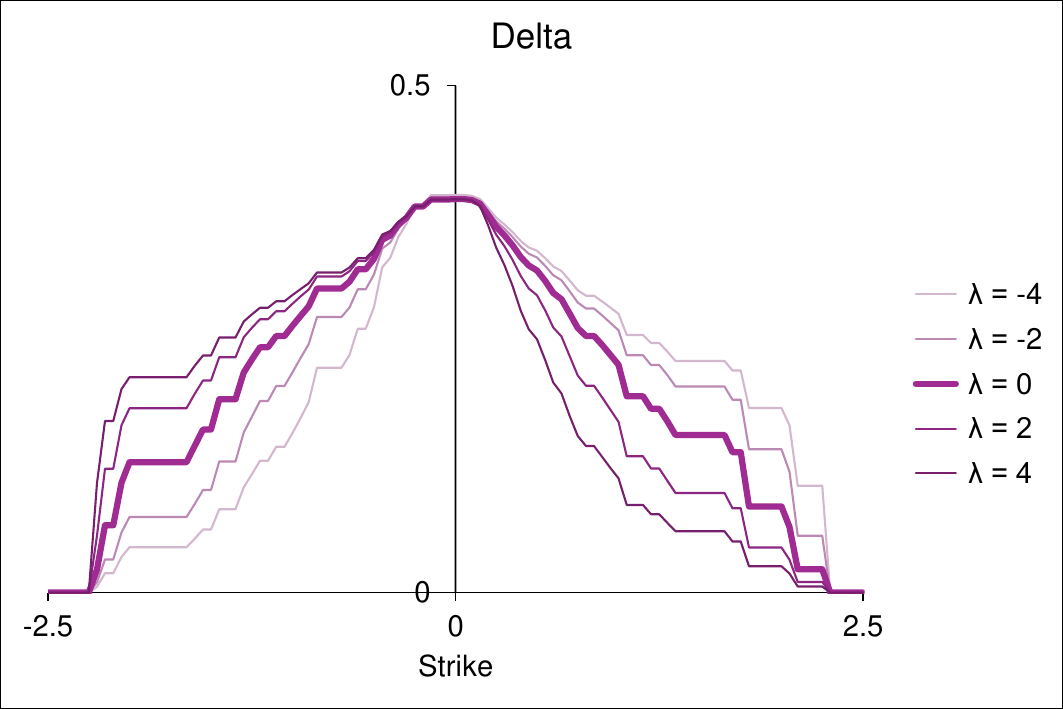}&
\includegraphics[width=0.33\textwidth-\tabcolsep]{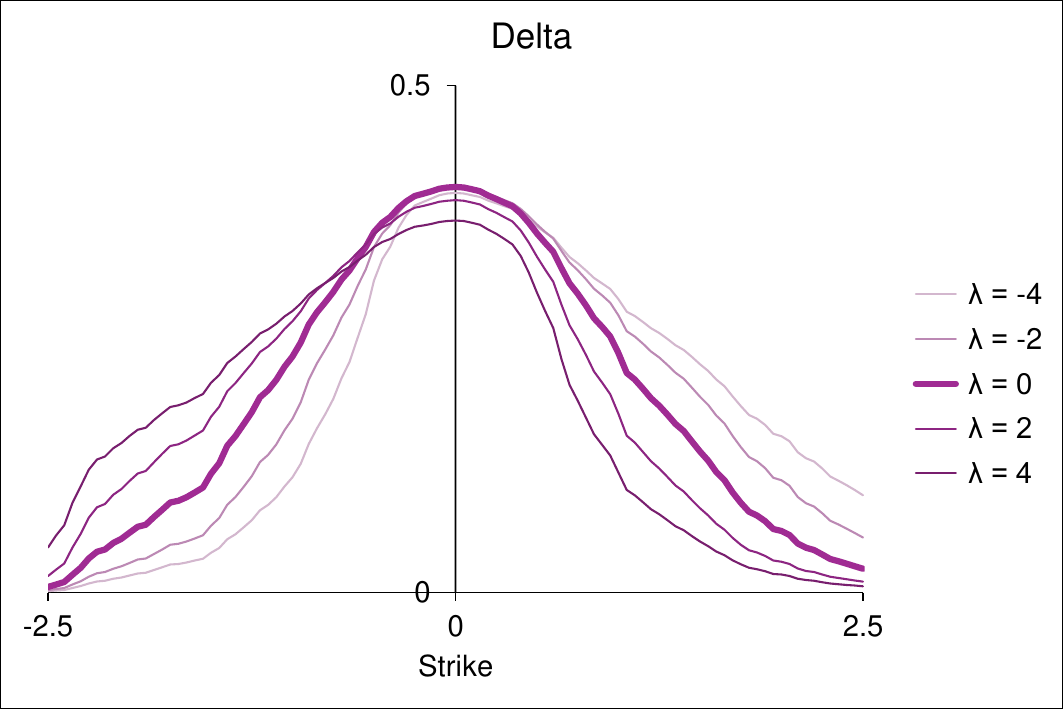}&
\includegraphics[width=0.33\textwidth-\tabcolsep]{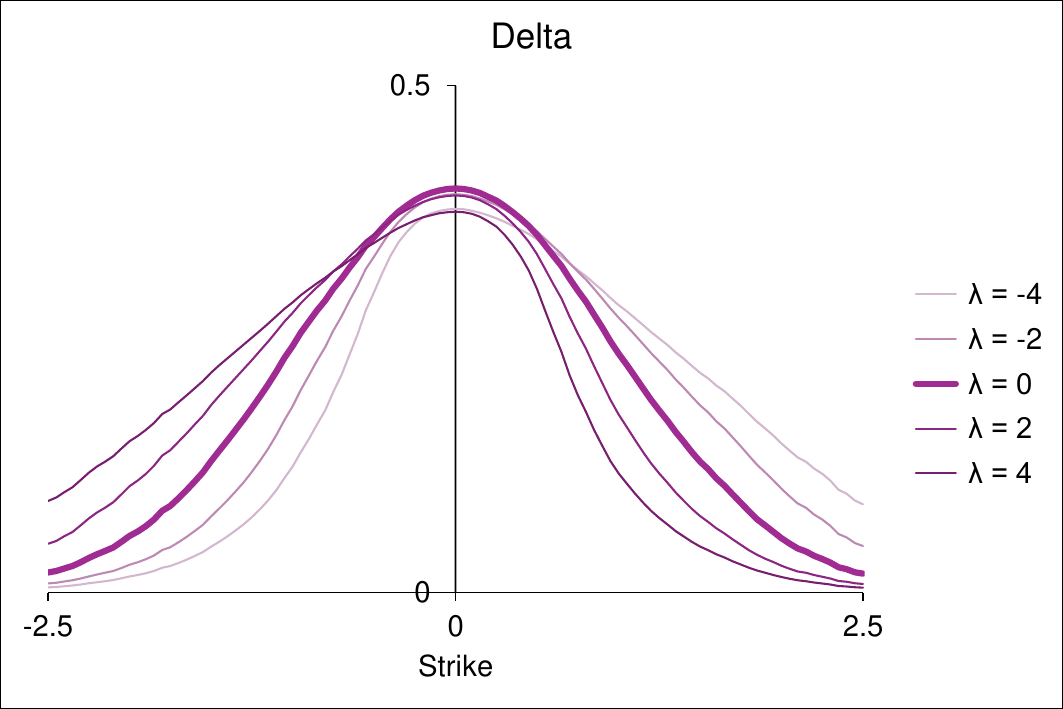}\\
\includegraphics[width=0.33\textwidth-\tabcolsep]{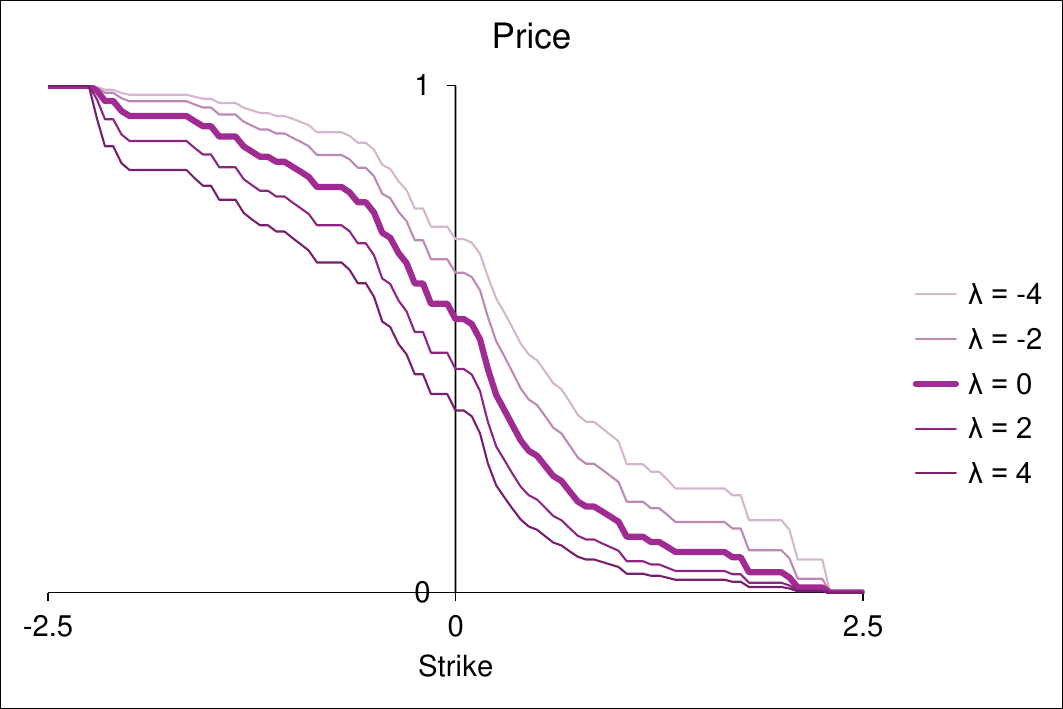}&
\includegraphics[width=0.33\textwidth-\tabcolsep]{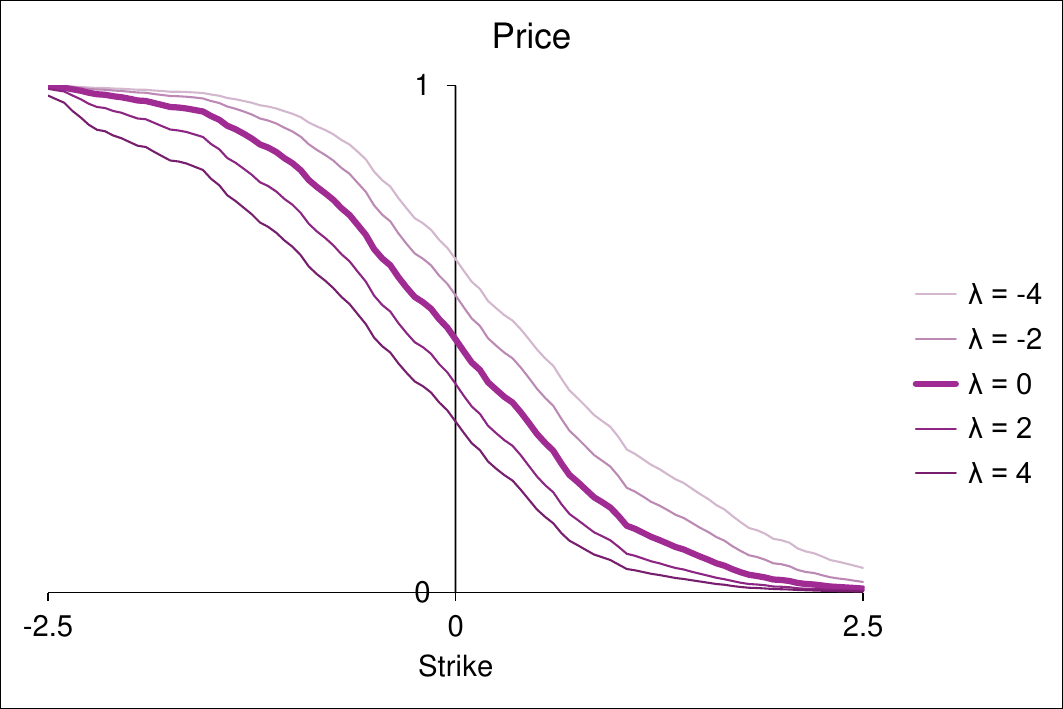}&
\includegraphics[width=0.33\textwidth-\tabcolsep]{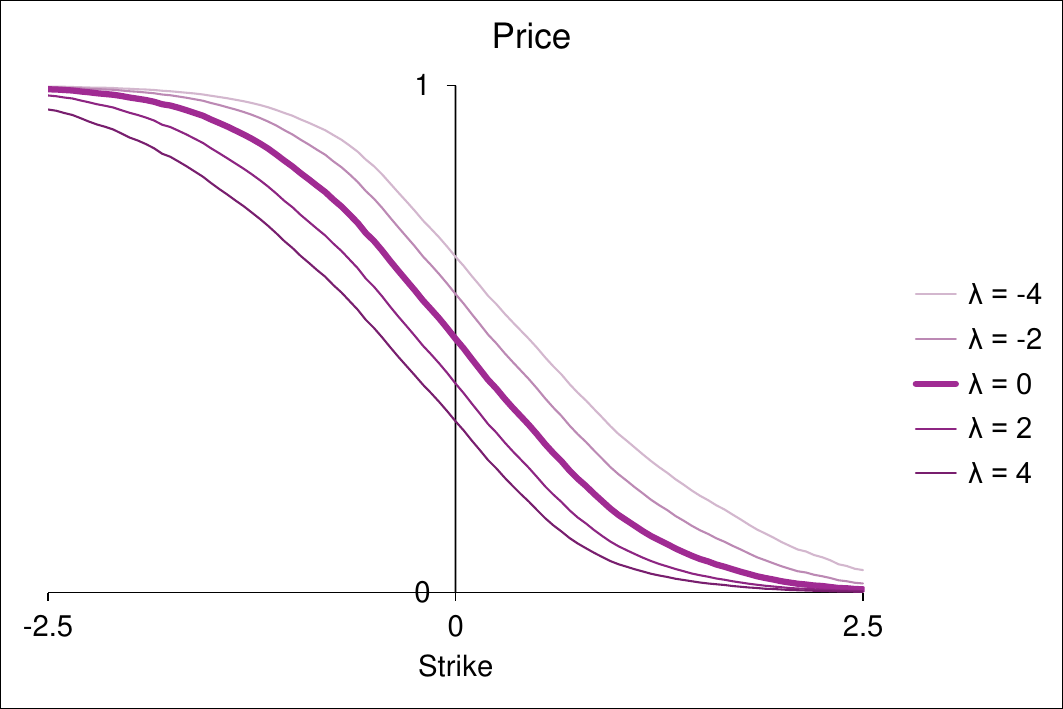}\\
Digital Option ($n=\num{100}$)&Digital Option ($n=\num{1000}$)&Digital Option ($n=\num{10000}$)
\end{tabular}
\caption{In these studies of discrete models on classical information, the final underlying price is uniformly distributed in the expectation measure on $n=\num{100}$, $\num{1000}$ and $\num{10000}$ samples from a standardised normal variable. The price measure is supported on the same samples with uniform probabilities that match the target initial underlying price $q=0$. Entropic risk optimisation identifies the mid and bid-offer for the price and the hedge ratio for a range of notionals $\lambda$.}
\label{fig:classical0}
\end{figure*}

\begin{figure*}[!pt]
\centering
\begin{tabular}{@{}C{0.33\textwidth-\tabcolsep}C{0.33\textwidth-\tabcolsep}C{0.33\textwidth-\tabcolsep}@{}}
\includegraphics[width=0.33\textwidth-\tabcolsep]{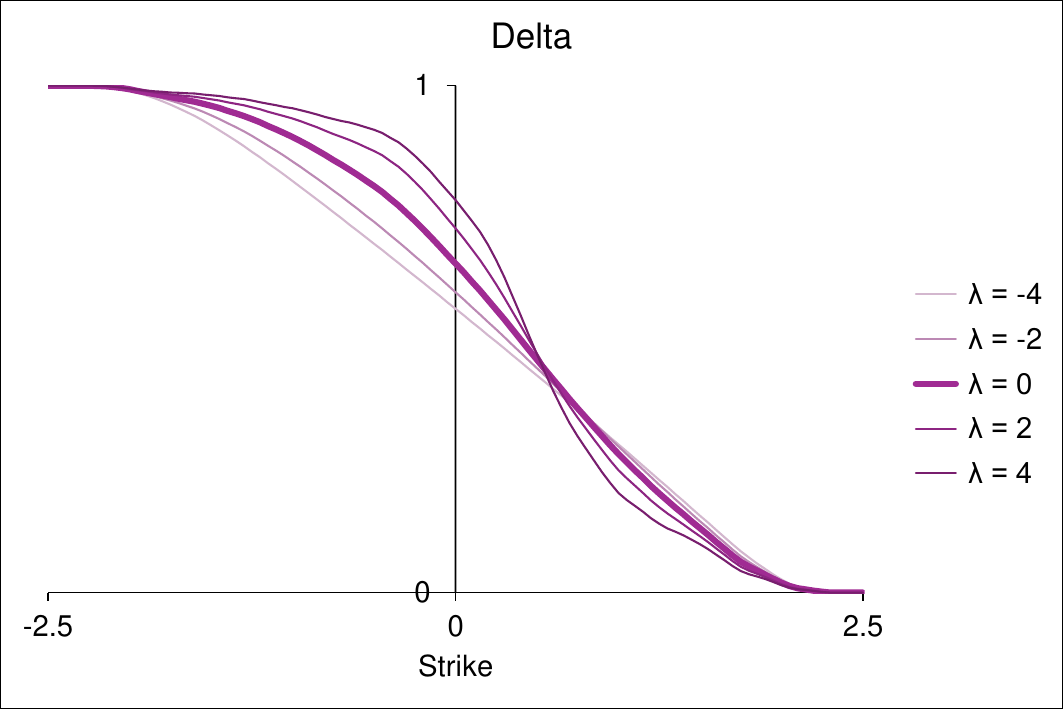}&
\includegraphics[width=0.33\textwidth-\tabcolsep]{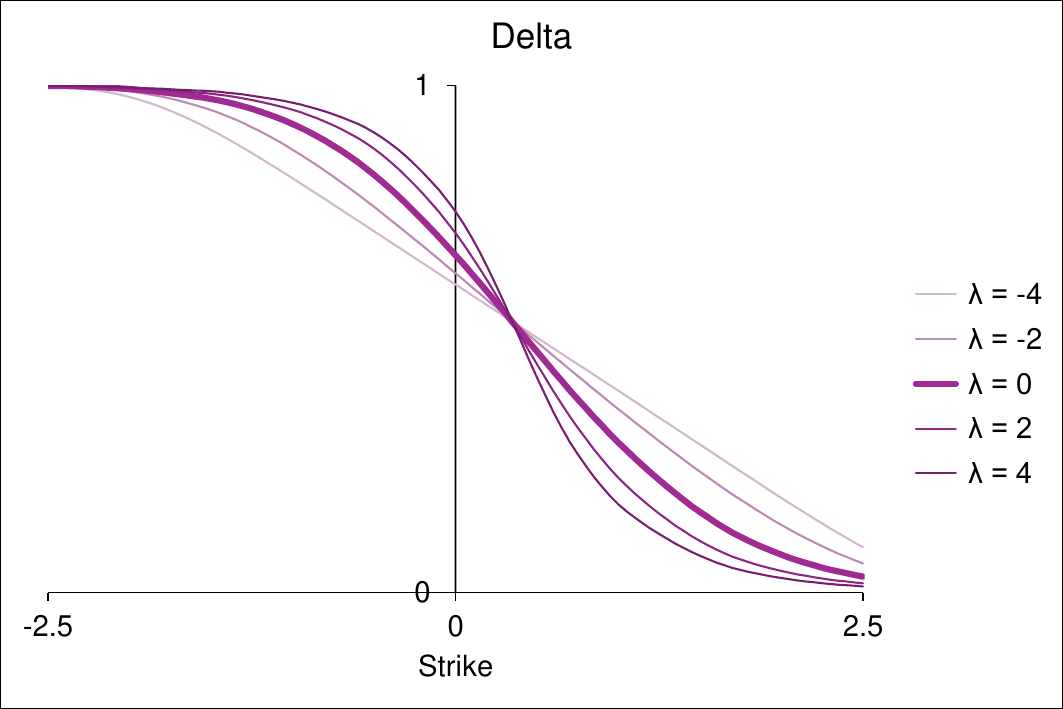}&
\includegraphics[width=0.33\textwidth-\tabcolsep]{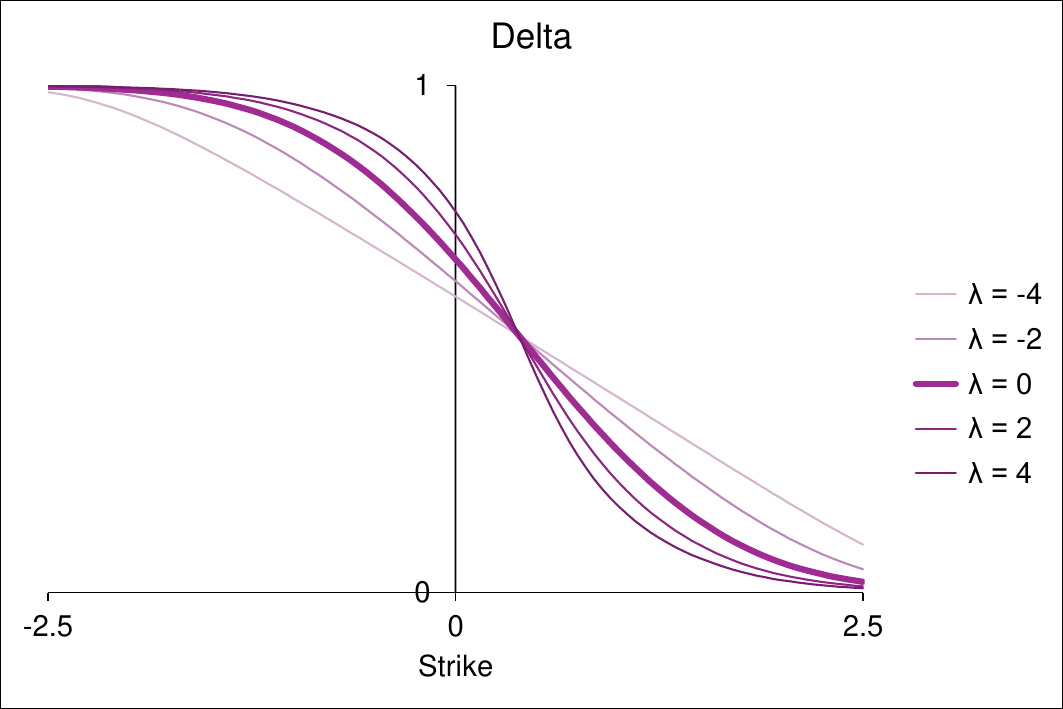}\\
\includegraphics[width=0.33\textwidth-\tabcolsep]{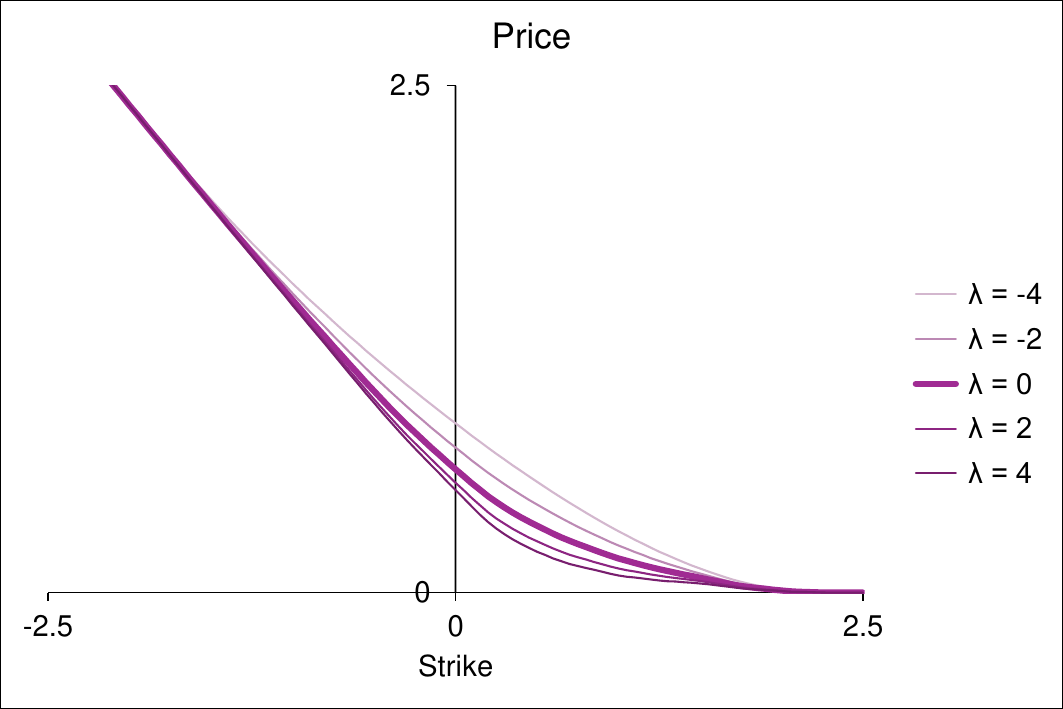}&
\includegraphics[width=0.33\textwidth-\tabcolsep]{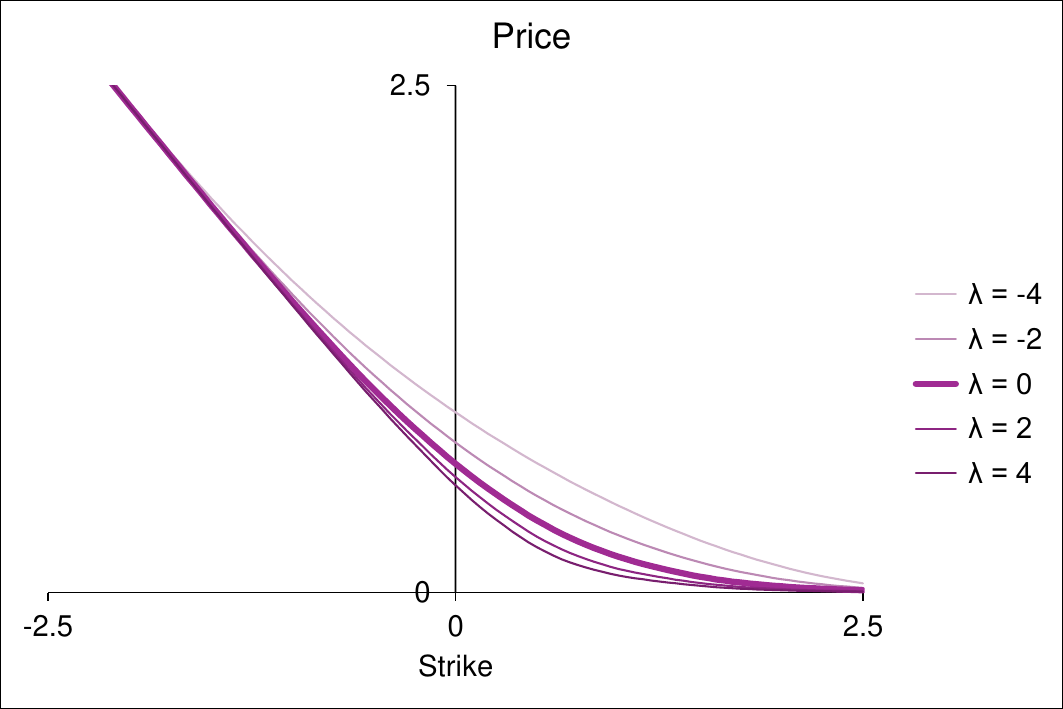}&
\includegraphics[width=0.33\textwidth-\tabcolsep]{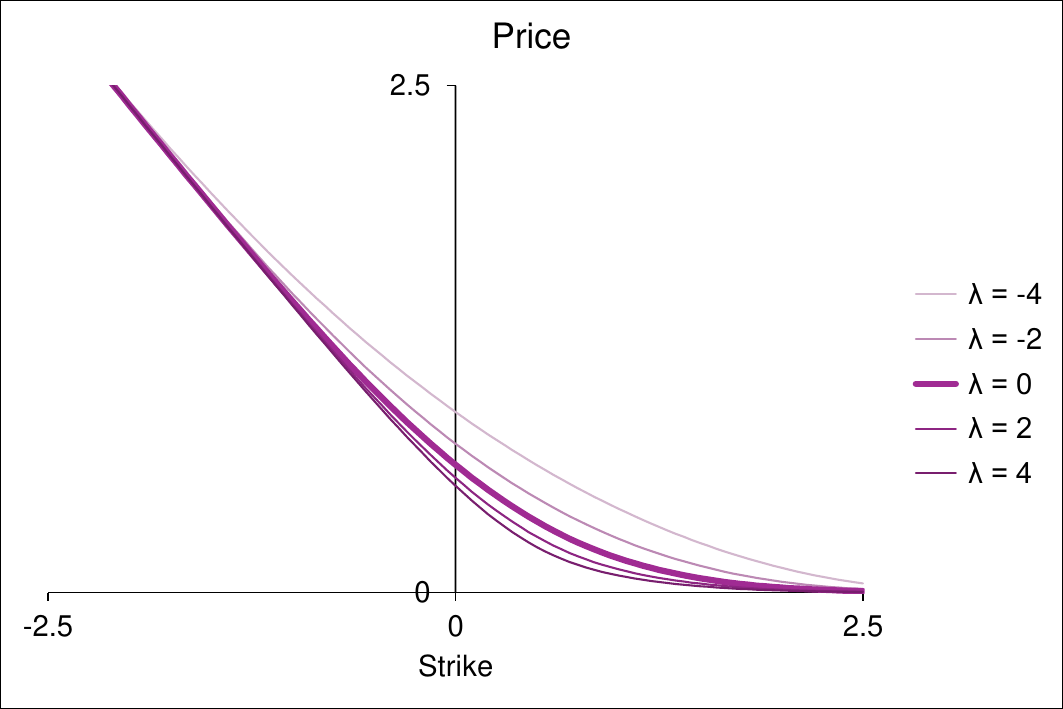}\\
\includegraphics[width=0.33\textwidth-\tabcolsep]{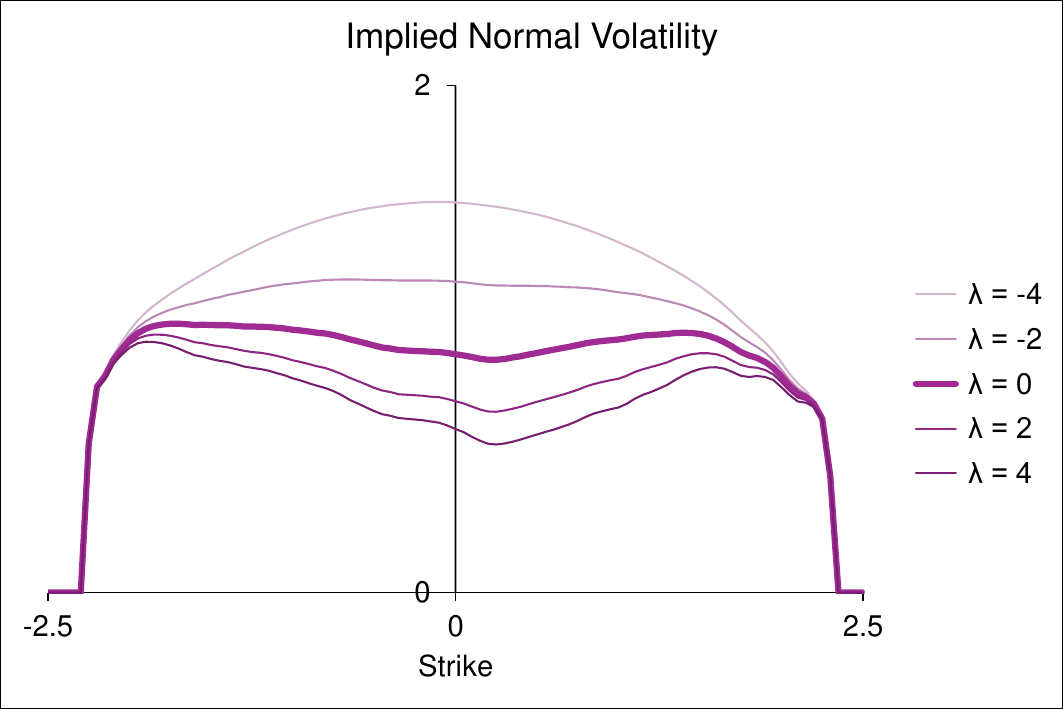}&
\includegraphics[width=0.33\textwidth-\tabcolsep]{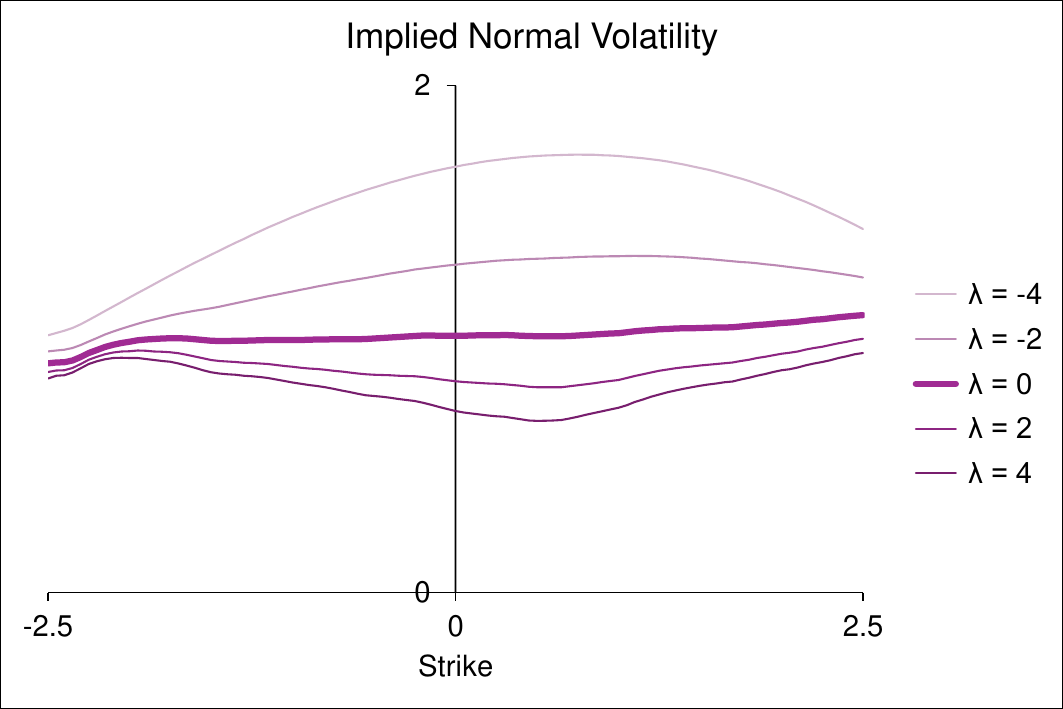}&
\includegraphics[width=0.33\textwidth-\tabcolsep]{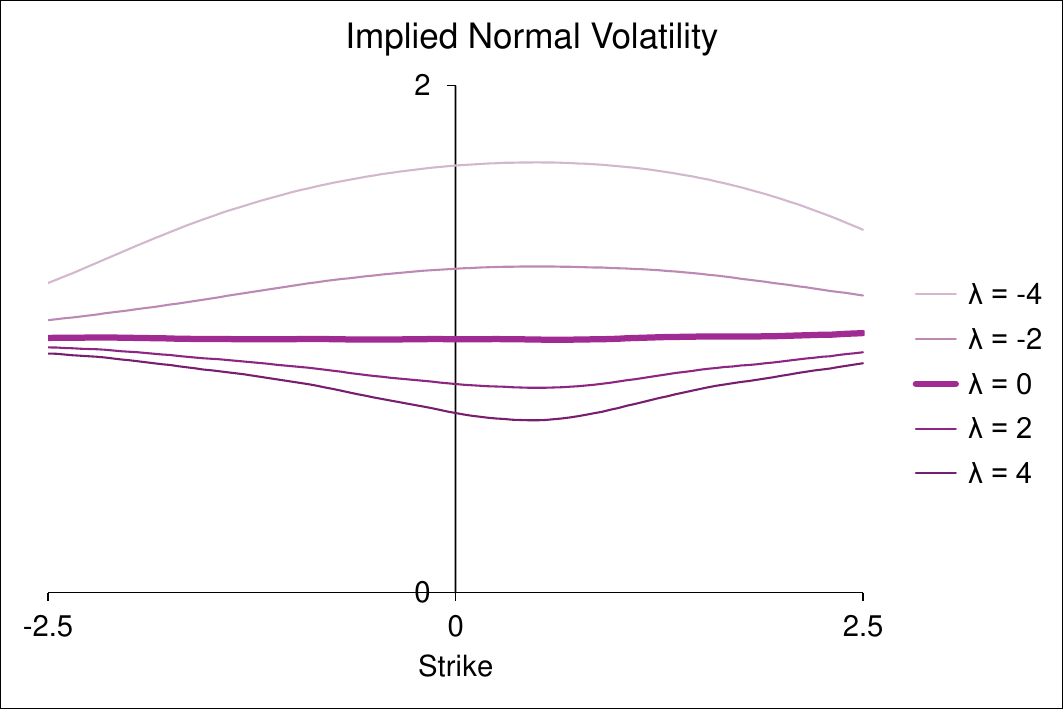}\\
Call Option ($n=\num{100}$)&Call Option ($n=\num{1000}$)&Call Option ($n=\num{10000}$)\\
\includegraphics[width=0.33\textwidth-\tabcolsep]{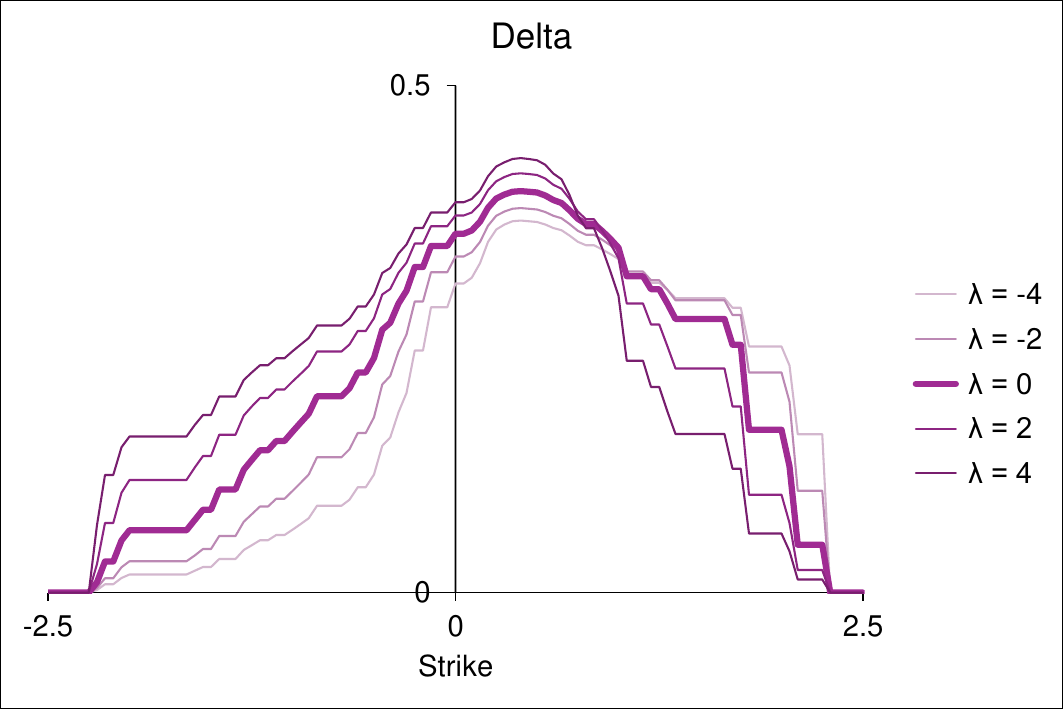}&
\includegraphics[width=0.33\textwidth-\tabcolsep]{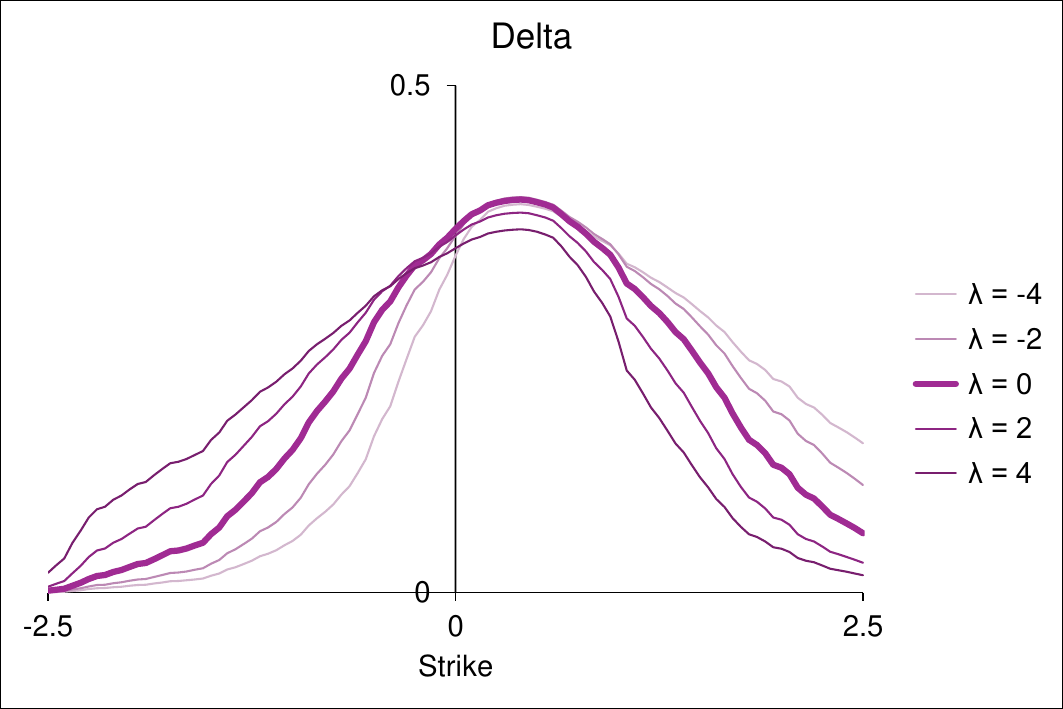}&
\includegraphics[width=0.33\textwidth-\tabcolsep]{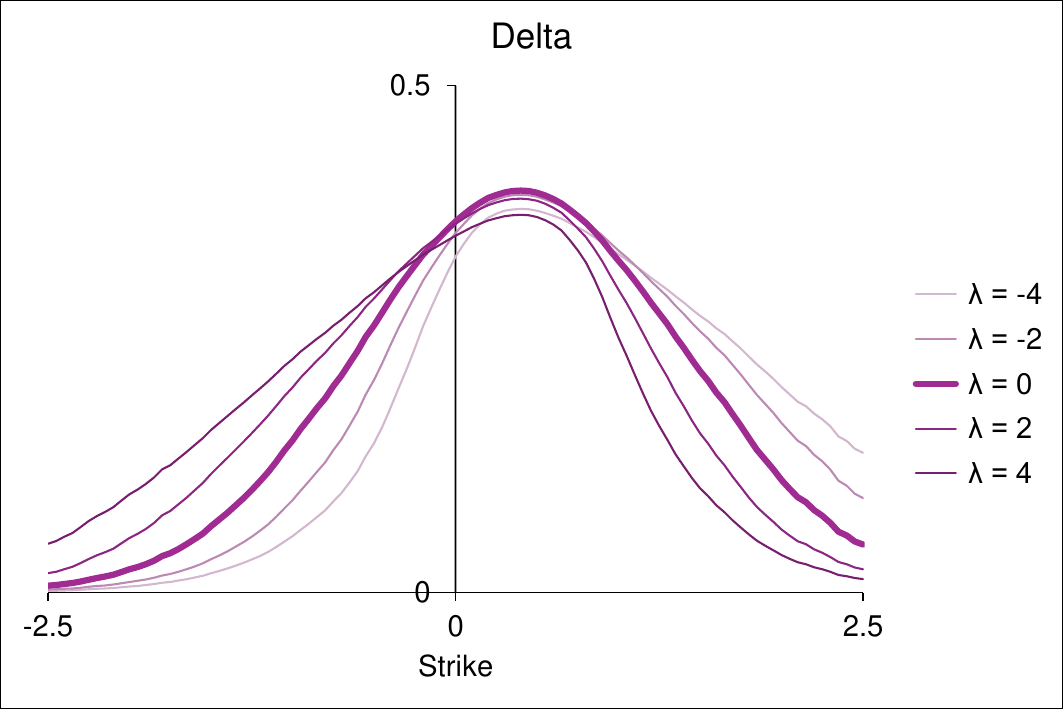}\\
\includegraphics[width=0.33\textwidth-\tabcolsep]{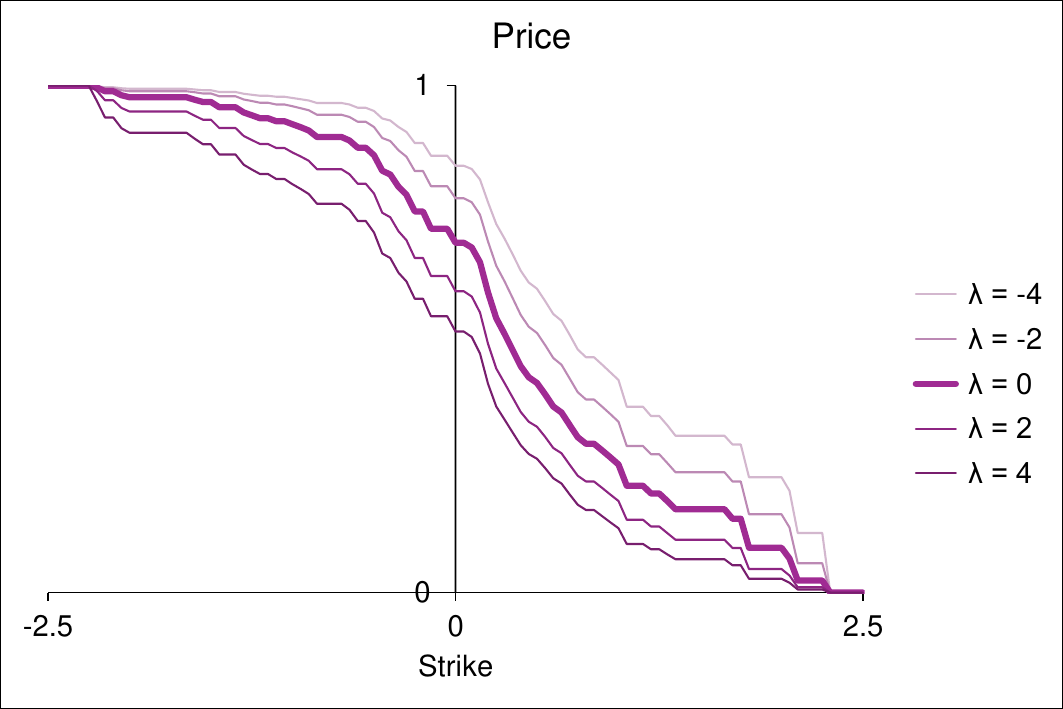}&
\includegraphics[width=0.33\textwidth-\tabcolsep]{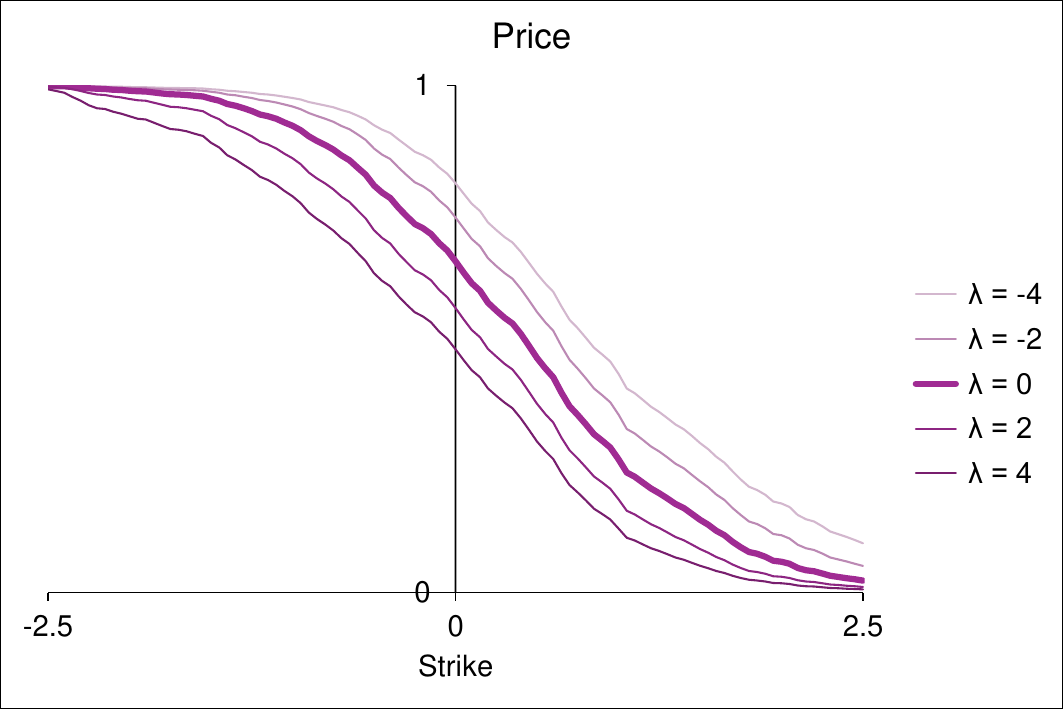}&
\includegraphics[width=0.33\textwidth-\tabcolsep]{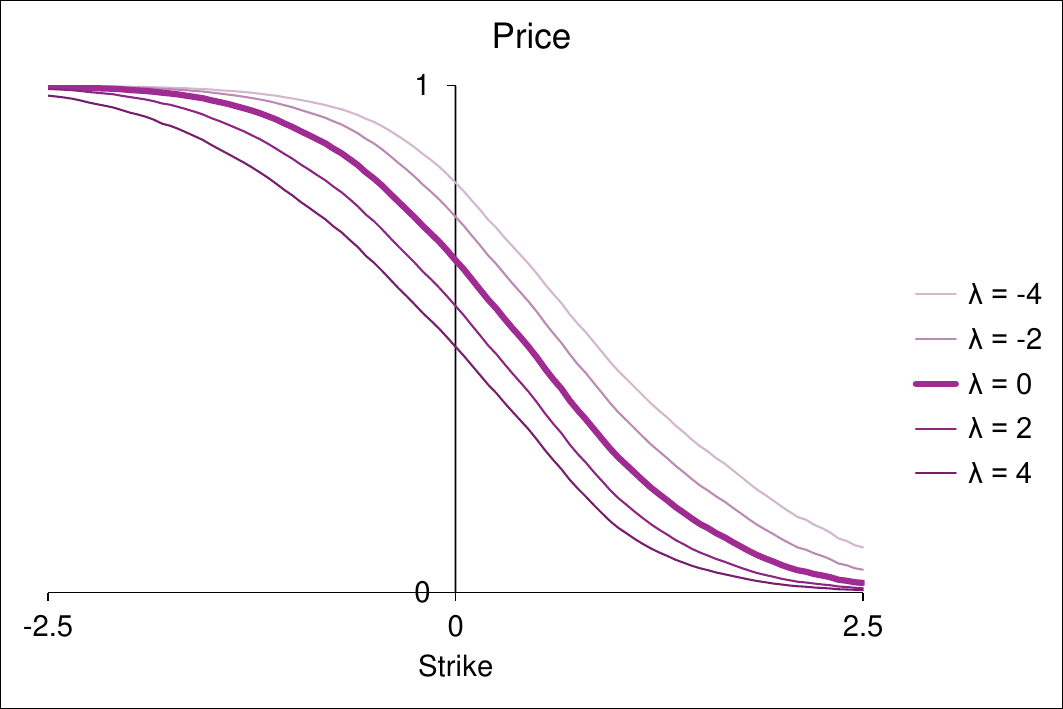}\\
Digital Option ($n=\num{100}$)&Digital Option ($n=\num{1000}$)&Digital Option ($n=\num{10000}$)
\end{tabular}
\caption{In these studies of discrete models on classical information, the final underlying price is uniformly distributed in the expectation measure on $n=\num{100}$, $\num{1000}$ and $\num{10000}$ samples from a standardised normal variable. The price measure is supported on the same samples with probabilities adjusted to match the target initial underlying price $q=0.4$. Entropic risk optimisation identifies the mid and bid-offer for the price and the hedge ratio for a range of notionals $\lambda$.}
\label{fig:classicalP4}
\end{figure*}

\begin{figure*}[!pt]
\centering
\begin{tabular}{@{}C{0.33\textwidth-\tabcolsep}C{0.33\textwidth-\tabcolsep}C{0.33\textwidth-\tabcolsep}@{}}
\includegraphics[width=0.33\textwidth-\tabcolsep]{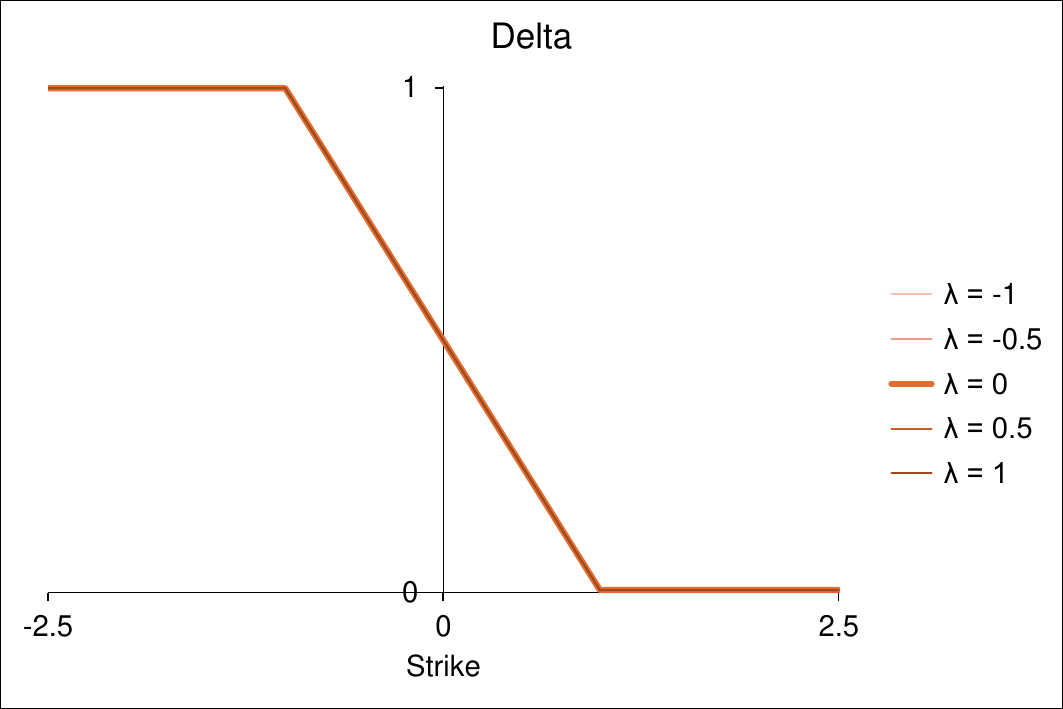}&
\includegraphics[width=0.33\textwidth-\tabcolsep]{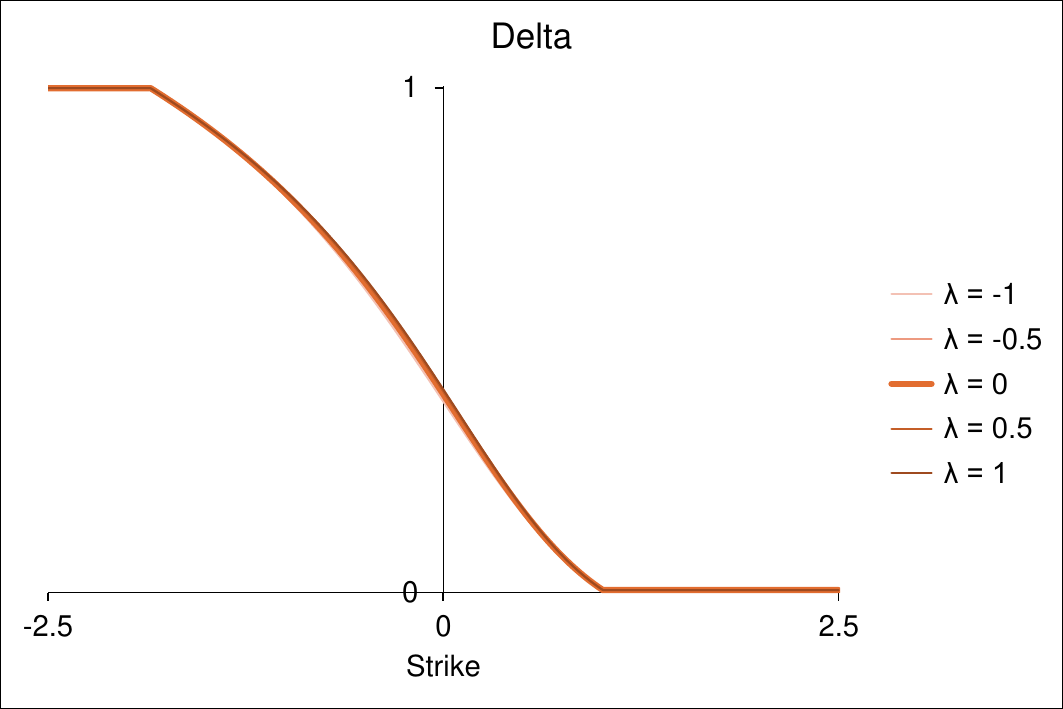}&
\includegraphics[width=0.33\textwidth-\tabcolsep]{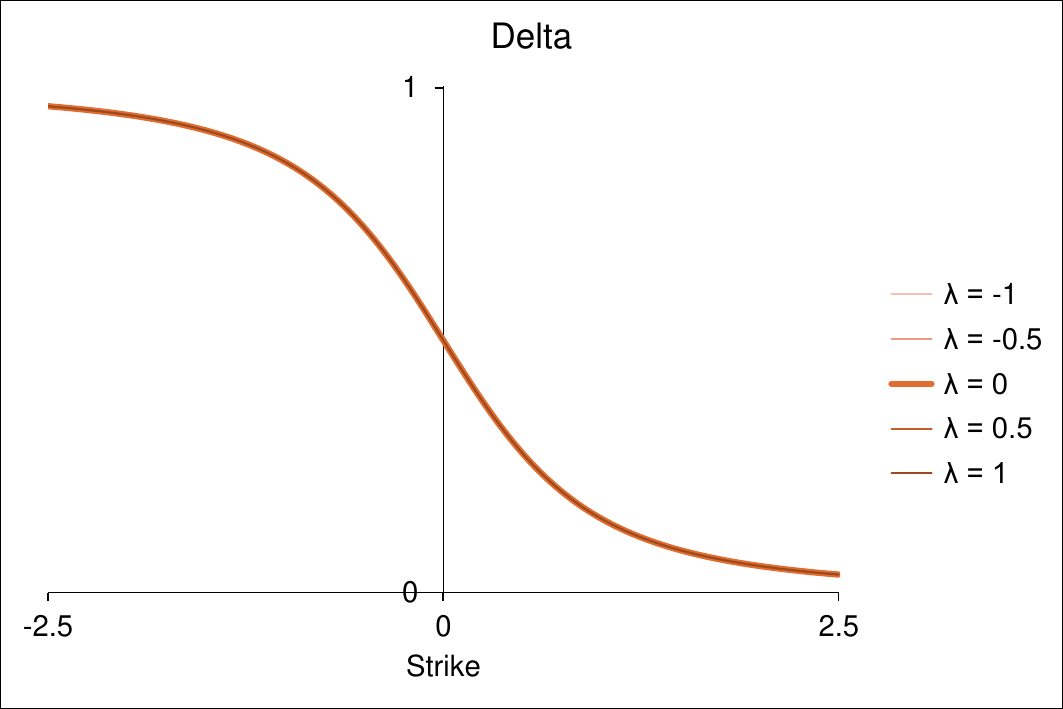}\\
\includegraphics[width=0.33\textwidth-\tabcolsep]{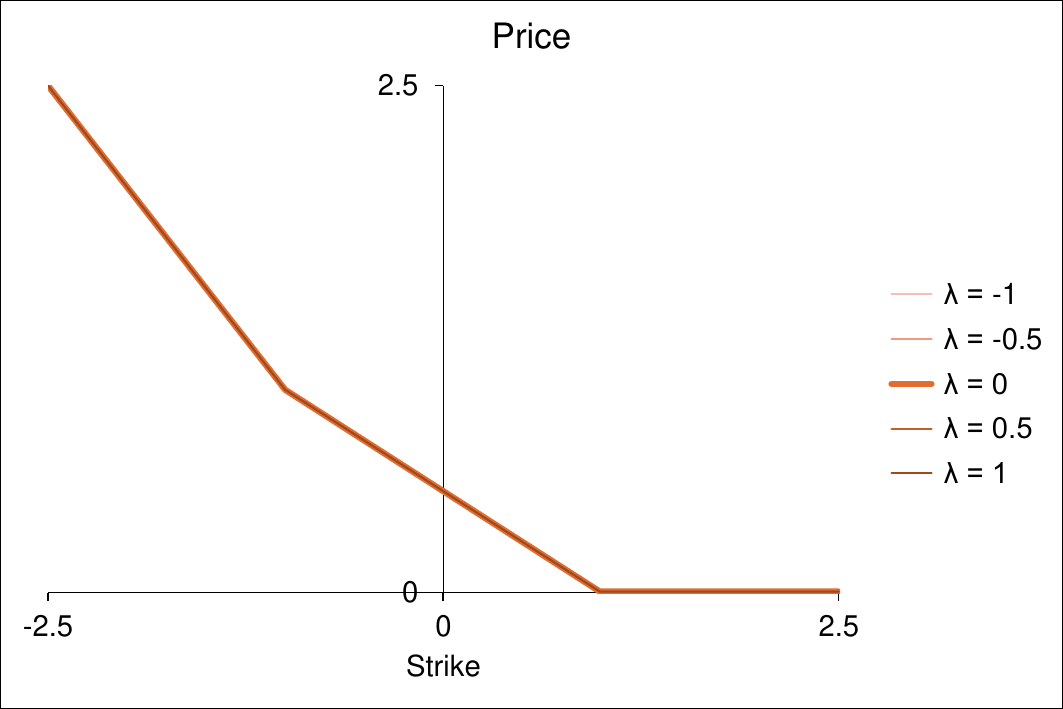}&
\includegraphics[width=0.33\textwidth-\tabcolsep]{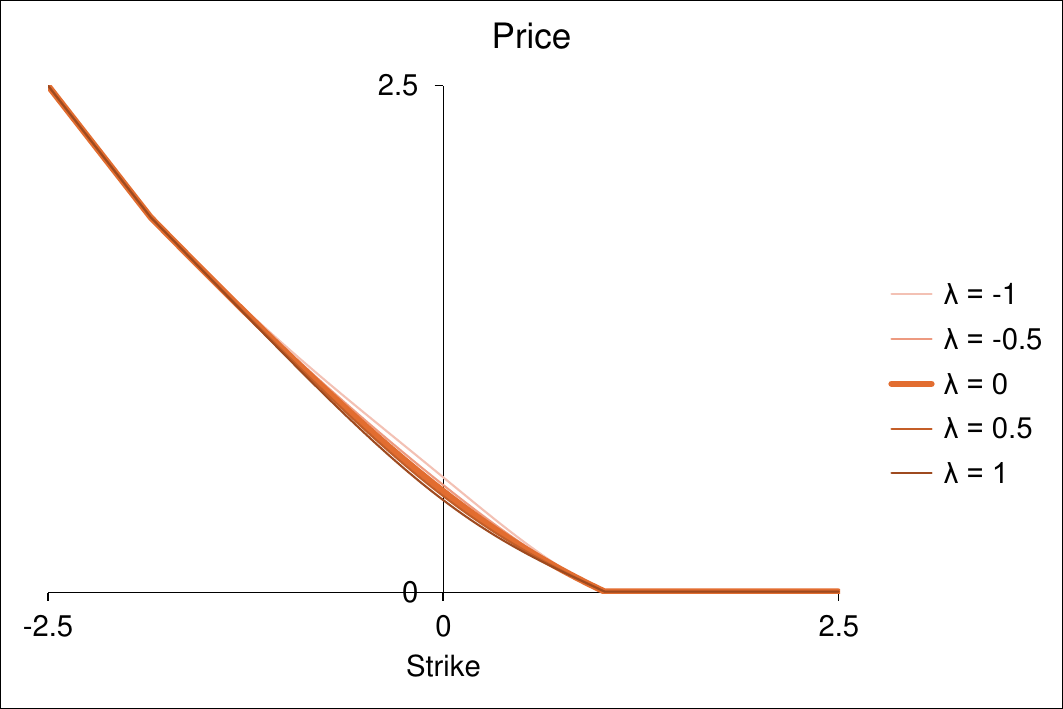}&
\includegraphics[width=0.33\textwidth-\tabcolsep]{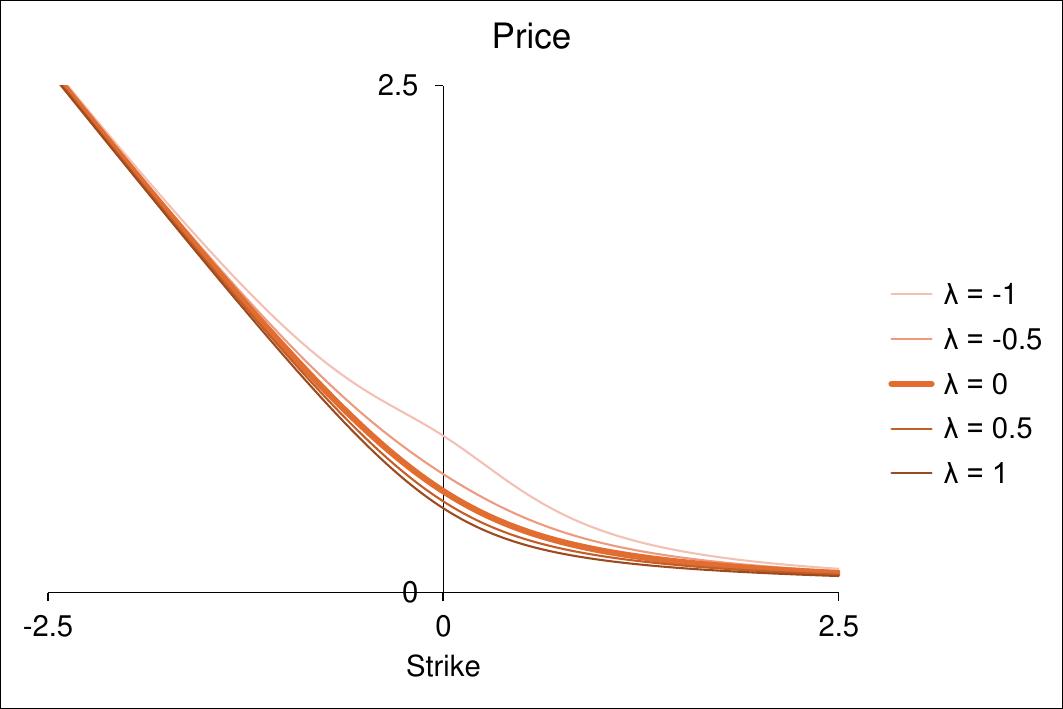}\\
\includegraphics[width=0.33\textwidth-\tabcolsep]{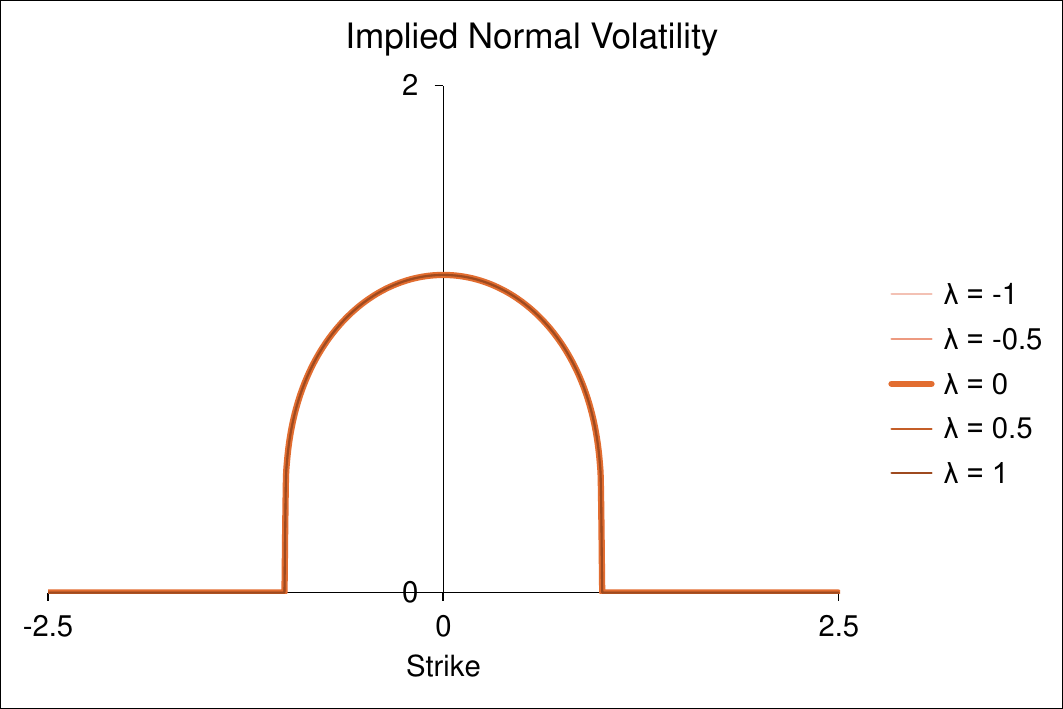}&
\includegraphics[width=0.33\textwidth-\tabcolsep]{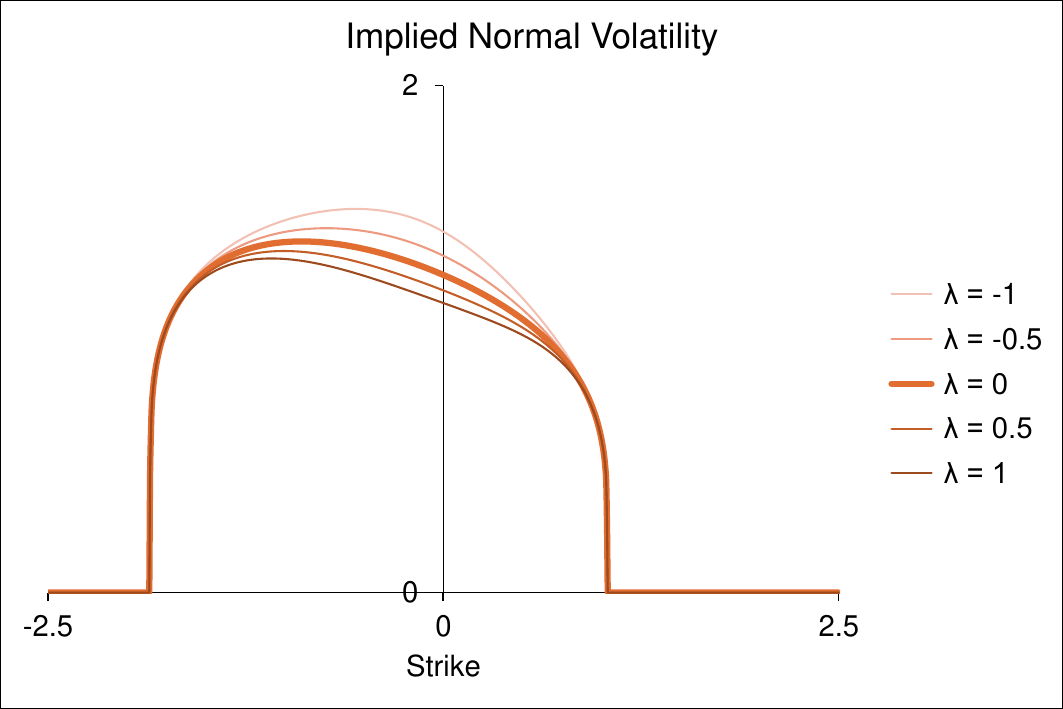}&
\includegraphics[width=0.33\textwidth-\tabcolsep]{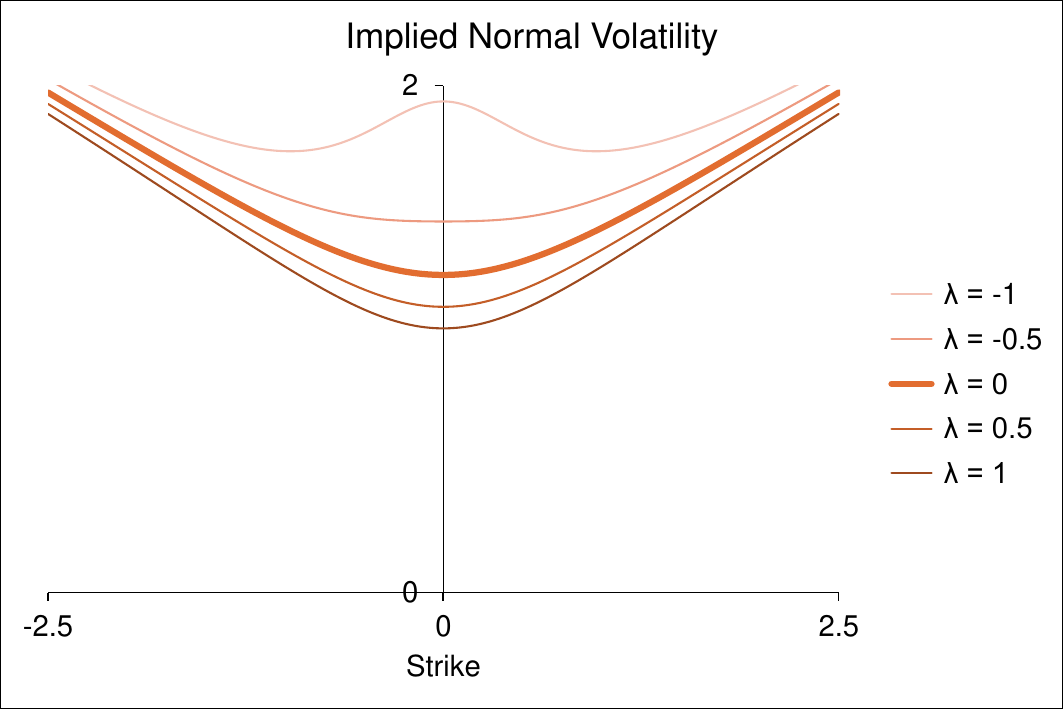}\\
Call Option ($\theta=0$)&Call Option ($\theta=\pi/5$)&Call Option ($\theta=\pi/4$)
\end{tabular}
\caption{In these studies of discrete models on quantum information, the final hedging price is uniformly distributed in the expectation measure on $n=2$ values $-1$ and $1$, and is decorrelated from the final funding price by applying the rotation $\theta=0$, $\pi/5$ and $\pi/4$ to its eigenbasis. Entropic risk optimisation identifies the mid and bid-offer for the price and the hedge ratio for a range of notionals $\lambda$.}
\label{fig:quantum}
\end{figure*}

Generalisations of this model are explored in the studies of \cref{fig:classicalprobs,fig:classicalM4,fig:classical0,fig:classicalP4,fig:quantum}. The model is developed for the final derivative price expressed as a function $P=f[Q-kB]$ of the spread portfolio $Q-kB$ at strike $k$, for the call option with $f[x]=x^+$ and the digital option with $f[x]=1_{x>0}$.

\subsection{Classical study}

The classical studies of \cref{fig:classicalprobs,fig:classicalM4,fig:classical0,fig:classicalP4} assume $b=1$ and $B=1$, and $Q$ is a diagonal matrix with eigenvalues $Q_i$ sampled from a standard normal variable, centralised to have zero sample mean. The matrix $P=f[Q-kB]$ is then diagonal with eigenvalues $P_i:=f[Q_i-k]$.

The calibration condition:
\begin{equation}
0=\sum_{i=1}^n(Q_i-q)\exp[-\phi Q_i]
\end{equation}
is solved for the unit optimal portfolio $\phi$, which generates the eigenstate probabilities:
\begin{equation}
\gamma_i:=\bfrac{\exp[-\phi Q_i]}{\sum\nolimits_{j=1}^n\exp[-\phi Q_j]}
\end{equation}
in the price measure. The hedge condition:
\begin{equation}
0=\sum_{i=1}^n\gamma_i(Q_i-q)\exp[-\alpha(P_i-\delta Q_i)]
\end{equation}
is then solved for the hedge portfolio $\delta$. Using these solutions, the price condition:
\begin{equation}
p=\delta q-\frac{1}{\alpha}\log\!\left[\sum_{i=1}^n\gamma_i\exp[-\alpha(P_i-\delta Q_i)]\right]
\end{equation}
generates the initial derivative price $p$.

Three cases are considered in the studies with $n=\num{100}$ (left column), $\num{1000}$ (middle column) and $\num{10000}$ (right column), for the call option $f[x]=x^+$ (top two rows) and the digital option $f[x]=1_{x>0}$ (bottom two rows). The derivative security cannot be exactly replicated by the funding and hedging securities, and entropic risk optimisation compensates for this model risk with a bid-offer around the mid price and a skew to the hedge ratio. In the study with $n=\num{100}$ samples, the support is contained within the interval (-2.5,2.5), and there can be no option value outside this range of strikes. Increasing the number of samples widens the support and brings the solution closer to the limit of the standard normal variable.

The support of the expectation measure thus plays a crucial role in determining hedge and price and also the extent to which unhedged risks impact the bid-offer. Entropic pricing does not allocate probability beyond the support of the expectation measure, and unrealistic assumptions on the support can mask model risks.

\subsection{Quantum study}

The quantum studies of \cref{fig:quantum} assume $b=1$ and $q=0$, and $B$ and $Q$ are matrices with $n=2$ eigenvalues:
\begin{align}
B&=1+\epsilon\begin{bmatrix}1&0\\0&-1\end{bmatrix} \\
Q&=\begin{bmatrix}\cos[\theta]&-\sin[\theta]\\\sin[\theta]&\cos[\theta]\end{bmatrix}\begin{bmatrix}1&0\\0&-1\end{bmatrix}\begin{bmatrix}\cos[\theta]&\sin[\theta]\\-\sin[\theta]&\cos[\theta]\end{bmatrix} \notag
\end{align}
The gap $\epsilon$ separates the eigenvalues of the funding security and the angle $\theta$ rotates the eigenbasis of the hedging security. The matrix $P=f[Q-kB]$ is obtained by first diagonalising the matrix $Q-kB$ and then applying the function $f$ to its eigenvalues:
\begin{align}
P&= \\
&\begin{bmatrix}\cos[\psi]&-\sin[\psi]\\\sin[\psi]&\cos[\psi]\end{bmatrix}\begin{bmatrix}P_+&0\\0&P_-\end{bmatrix}\begin{bmatrix}\cos[\psi]&\sin[\psi]\\-\sin[\psi]&\cos[\psi]\end{bmatrix} \notag
\end{align}
with:
\begin{align}
\psi&:=\frac{1}{2}\arctan\!\left[\frac{\sin[2\theta]}{\cos[2\theta]-\epsilon k}\right] \\
Q_\pm&:=\pm\frac{\sin[2\theta]}{\sin[2\psi]} \notag \\
P_\pm&:=f[Q_\pm-k] \notag
\end{align}
The non-linear relationship between a matrix and its eigenvalues is exploited to create a complex dependence of the option payoff on the strike. Fine tuning the rotation of the eigenbases can then be used to fit the implied volatility smile.

The calibration condition is trivially solved with $\phi=0$, and the hedge and price conditions become:
\begin{align}
\delta&=\hat{P}\cos[2(\psi-\theta)]-\epsilon p\cos[2\theta] \\
p&=\bar{P}-\frac{1}{\alpha}\log\cosh[\alpha(\hat{P}\sin[2(\psi-\theta)]+\epsilon p\sin[2\theta])] \notag
\end{align}
where $\bar{P}:=(P_++P_-)/2$ and $\hat{P}:=(P_+-P_-)/2$. The price condition is first solved for the initial derivative price $p$. Using this solution, the hedge condition generates the hedge portfolio $\delta$.

Three cases are considered in the studies with $\theta=0$ (left column), $\pi/5$ (middle column) and $\pi/4$ (right column), with $\epsilon=1-\cos[2\theta]$ in each case, for the call option $f[x]=x^+$. Contrasting with the classical studies where thousands of eigenstates are needed for acceptable convergence, the inclusion of non-commutativity between the final funding and hedging prices generates a credible model for the initial derivative price with as few as two eigenstates.

With two securities available for funding and hedging, the derivative security is exactly replicated in the classical binomial model with zero rotation. Heisenberg uncertainty breaks market completeness, and entropic risk optimisation creates a compensating bid-offer around the mid price in the quantum binomial model when the rotation is non-zero. At the maximum rotation $\theta=\pi/4$, the implied volatility smile has continuous support on the full line, a remarkable property for a model whose final underlying prices are binomial.

\section{Entropic XVA}

An unconstrained strategy that maximises the mean does not incentivise risk management and results in a portfolio with potentially disastrous downside risks. For this reason, a strategy that instead aims to maximise the risk-adjusted mean creates a more balanced portfolio that can be fine-tuned to the risk appetite of the investor.

It would appear from this statement that risk management is an investment choice, with the risk appetite a free parameter under the control of the investor. This is not the case. Without risk management there is a material risk that the investor will not meet their contractual commitments. At the very least, this potentially tarnishes the reputation of the investor as a reliable counterparty, with consequent impact on the prices they will be offered. In regulated markets, risk management is also a legal requirement with detailed and specific obligations for margin and capital.

\begin{figure*}[!pt]
\setlength{\abovecaptionskip}{5pt}
\setlength{\belowcaptionskip}{20pt}
\centering
\includegraphics[width=0.8\textwidth]{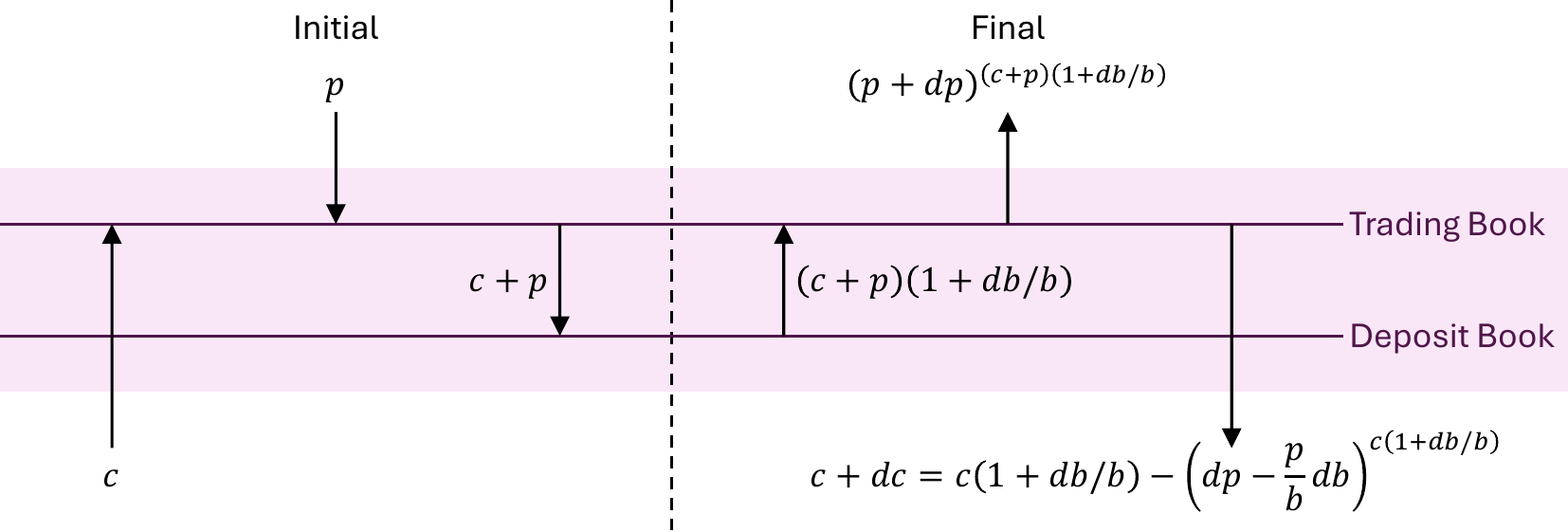}
\caption{In this example, the trader receives the initial derivative price $p$ and deploys additional capital $c$ to guarantee settlement, with margin $m:=c+p\ge0$. These proceeds are deposited in a margin account with unit price $b>0$. The final derivative price $p+dp$ is settled up to the value $(c+p)(1+db/b)$ of the margin account, with any remainder returned to the trader. Default occurs when the margin account is insufficient to meet the contractual obligation. In the expression for the final settlement, the cap is denoted by the superscript.}
\label{fig:cashflowsA}
\includegraphics[width=0.8\textwidth]{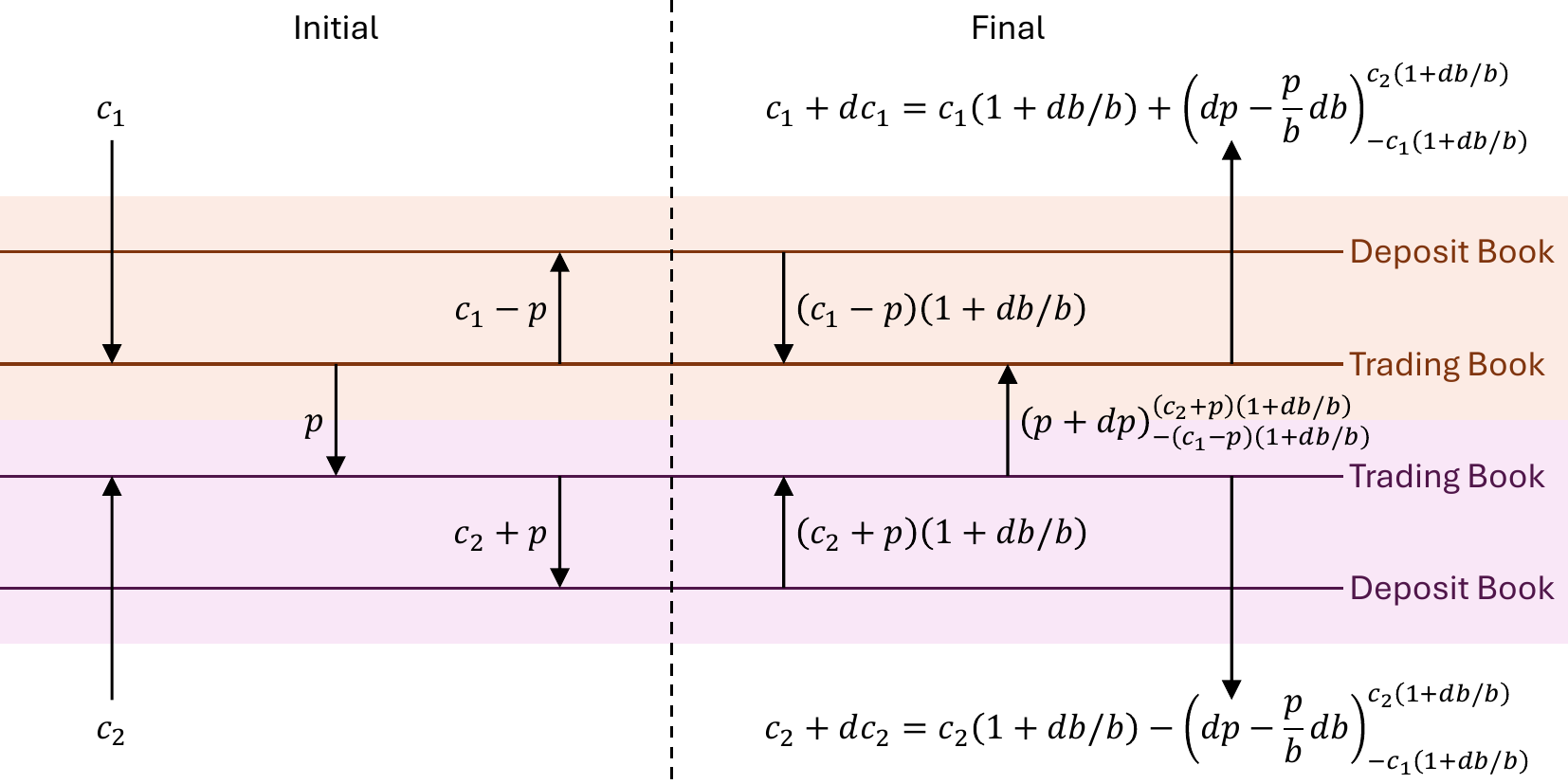}
\caption{Two counterparties deploy capital $c_1$ and $c_2$ respectively as their guarantee of derivative settlement, with margin $m_1:=c_1-p\ge0$ and $m_2:=c_2+p\ge0$, and they default on their contractual obligation if this is insufficient. Net of all considerations, the counterparties share the combined proceeds from the margin accounts, and the final return on capital is floored and capped by the respective default events. In the expressions for the final settlement and capital, the floor is denoted by the subscript and the cap is denoted by the superscript.}
\label{fig:cashflowsB}
\includegraphics[width=0.8\textwidth]{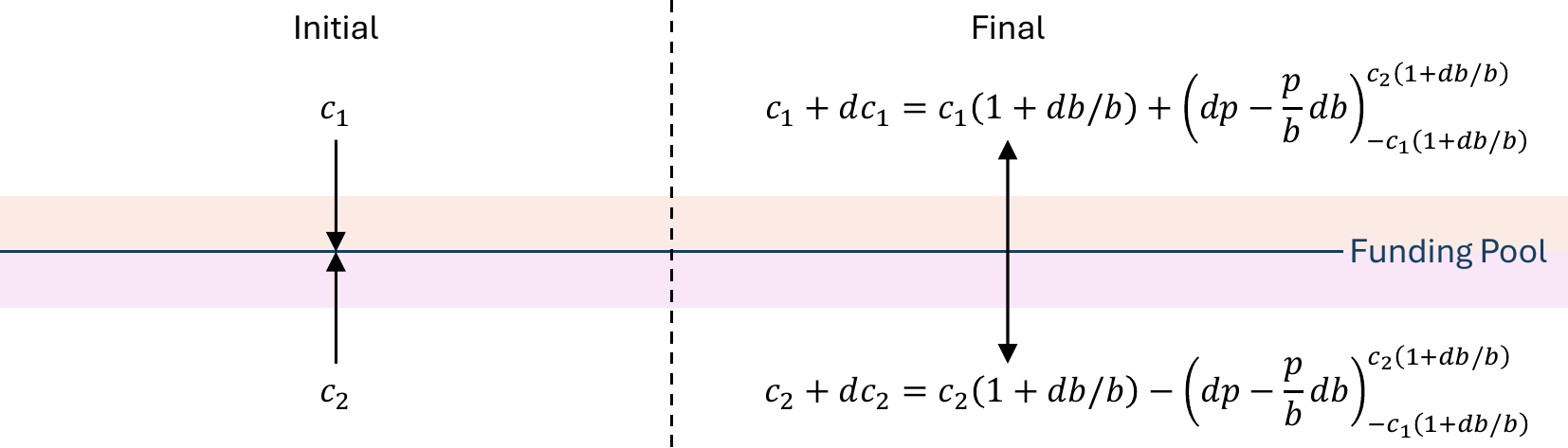}
\caption{Compressing internal settlements, the trade is simplified as a funding pool with contributions from both counterparties shared according to the contractual settlement of the derivative security adjusted for default. Using the entropic margin model, these settlements can be encoded in a smart contract for decentralised distribution.}
\label{fig:cashflowsC}
\end{figure*}

\subsection{Margined derivatives}

If the true costs of risk management are fully accounted in the return, reckless investments are naturally penalised by their associated margin costs. Returning to the bilateral model outlined in the introduction, suppose that two counterparties deploy capital $c_1$ and $c_2$ to a derivative trade with initial price $p$. Excess capital is deposited in the margin account with unit price $b>0$. The derivative is settled from the margin account with no additional funds injected, creating the risk that insufficient capital results in default by one of the counterparties.

As shown in \cref{fig:cashflowsA,fig:cashflowsB,fig:cashflowsC}, net of all considerations the capital returned from the derivative is:
\begin{align}
c_1+dc_1&=c_1(1+db/b)+\left(dp-\frac{p}{b}db\right)_{-c_1(1+db/b)}^{c_2(1+db/b)} \\
c_2+dc_2&=c_2(1+db/b)-\left(dp-\frac{p}{b}db\right)_{-c_1(1+db/b)}^{c_2(1+db/b)} \notag
\end{align}
where the subscript denotes the floor and the superscript denotes the cap. With all the internal settlements compressed, the structure reduces to a funding pool with initial capital $(c_1+c_2)$ deposited in the margin account and final capital $(c_1+c_2)(1+db/b)$ redistributed to the counterparties. Without the derivative settlement, the final capital is distributed according to the return on the initial contribution, with $c_1(1+db/b)$ returned to the first counterparty and $c_2(1+db/b)$ returned to the second counterparty. Introducing the derivative settlement, this distribution of capital is zero-sum adjusted by the funded derivative return up to the amount in the margin account. Default then occurs when the margin account has insufficient funds to meet the contractual settlement.

With the return on capital thus defined, entropic pricing can be applied. Each counterparty assesses the performance of the derivative security relative to their respective underlying markets with prices $q_1$ and $q_2$, using their expectation measures $\Exp_1$ and $\Exp_2$ and risk aversions $\alpha_1$ and $\alpha_2$. The hedge conditions:
\begin{align}
0&=\Exp_1^{\phi_1}[(dq_1-q_1r_1\,dt)\exp[-\alpha_1(dc_1-\delta_1\cdot dq_1)]] \\
0&=\Exp_2^{\phi_2}[(dq_2-q_2r_2\,dt)\exp[-\alpha_2(dc_2-\delta_2\cdot dq_2)]] \notag
\end{align}
are solved with the self-funding conditions $\delta_1\cdot q_1=c_1$ and $\delta_2\cdot q_2=c_2$ to generate the hedge portfolios $\delta_1$ and $\delta_2$, where $\phi_1$ and $\phi_2$ are the unit optimal portfolios in the underlying markets. The price conditions derive the expressions:
\begin{align}
0={}&\Exp_1^{\phi_1}[\alpha_1]\!\left[-\left(\delta_1\cdot dq_1-\frac{c_1}{b}db\right)\vphantom{\left(dp-\frac{p}{b}db\right)_{-c_1(1+db/b)}^{c_2(1+db/b)}}\right. \\
&\qquad\qquad+\left.\left(dp-\frac{p}{b}db\right)_{-c_1(1+db/b)}^{c_2(1+db/b)}\right] \notag \\
0={}&\Exp_2^{\phi_2}[\alpha_2]\!\left[-\left(\delta_2\cdot dq_2-\frac{c_2}{b}db\right)\vphantom{\left(dp-\frac{p}{b}db\right)_{-c_1(1+db/b)}^{c_2(1+db/b)}}\right. \notag \\
&\qquad\qquad-\left.\left(dp-\frac{p}{b}db\right)_{-c_1(1+db/b)}^{c_2(1+db/b)}\right] \notag
\end{align}
which are solved for the threshold prices $p_1$ and $p_2$ of the two counterparties. The trade is viable when the initial price $p$ is in the range $p_2\le p\le p_1$, and is therefore not viable when $p_1<p_2$. Price is thus constrained to the sector bounded by these viability thresholds, equilibrating to a price within the range.

\subsection{Entropic XVA}

\begin{figure*}[!pt]
\centering
\begin{tabular}{@{}C{0.33\textwidth-\tabcolsep}C{0.33\textwidth-\tabcolsep}C{0.33\textwidth-\tabcolsep}@{}}
\includegraphics[width=0.33\textwidth-\tabcolsep]{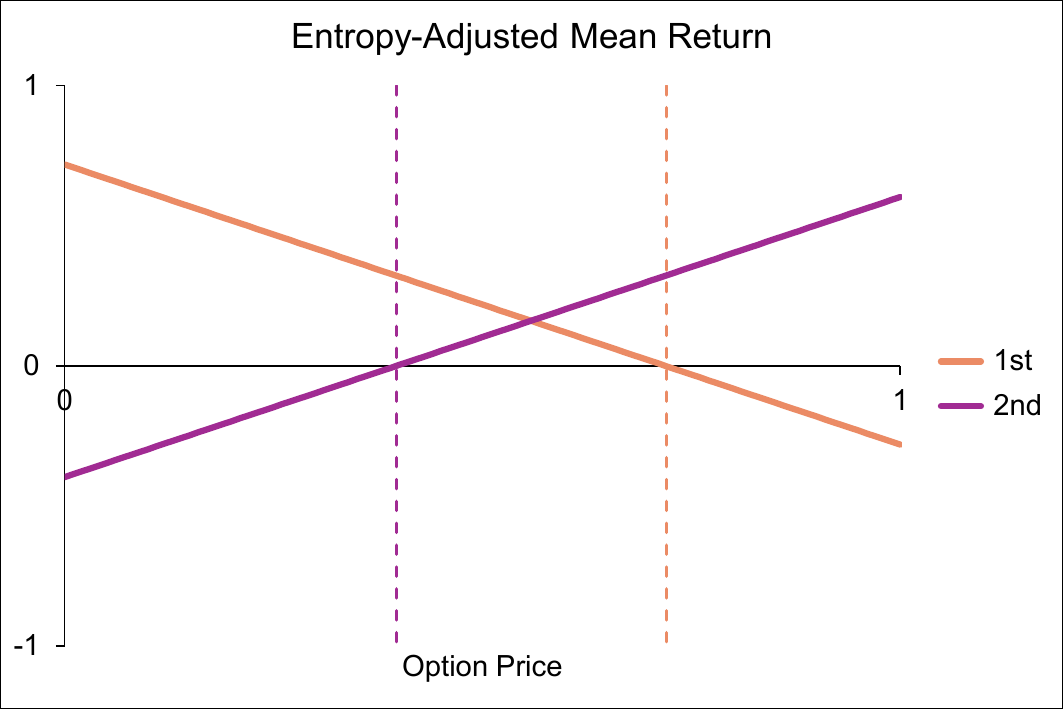}&
\includegraphics[width=0.33\textwidth-\tabcolsep]{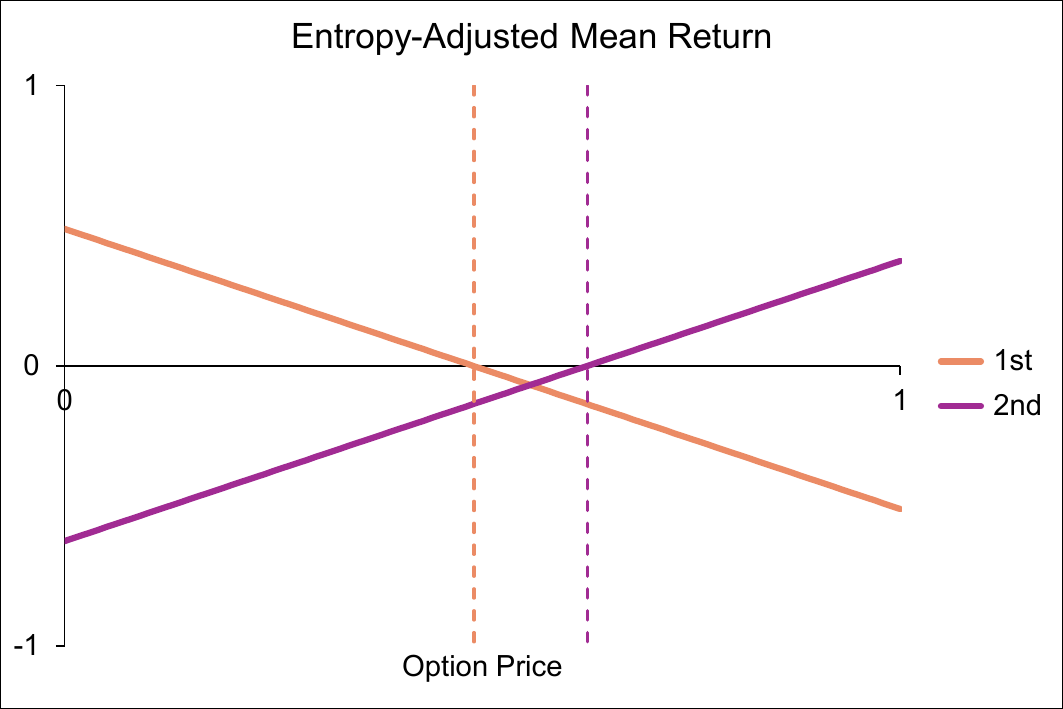}&
\includegraphics[width=0.33\textwidth-\tabcolsep]{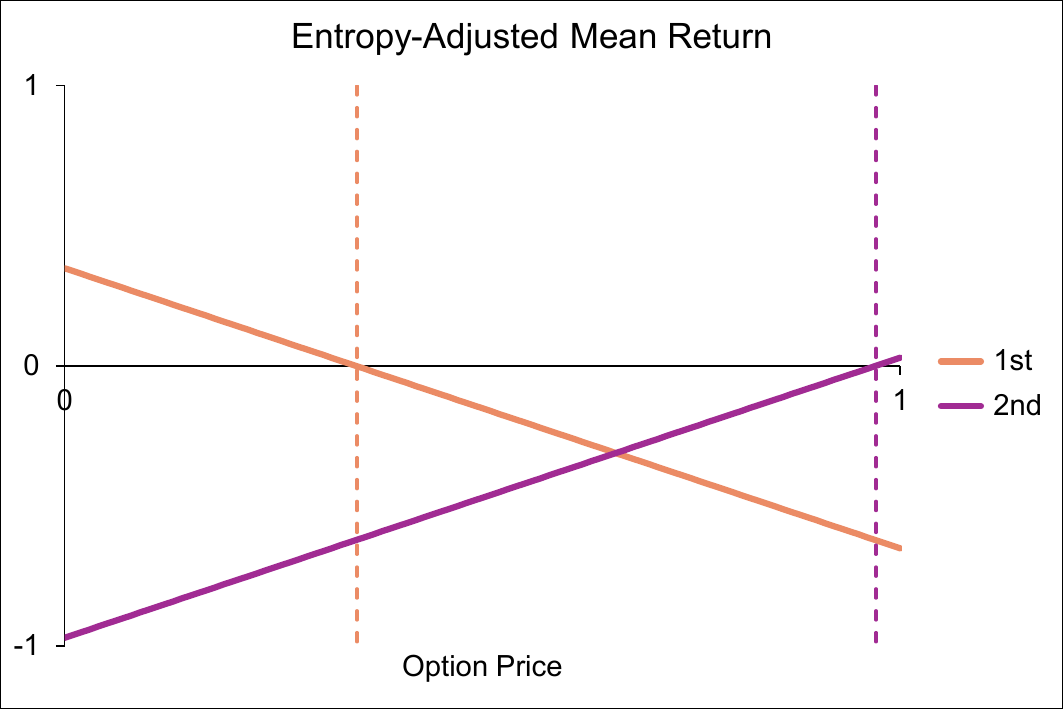}\\
$\lambda=0$ and $c=\infty$ & $\lambda=1$ and $c=\infty$ & $\lambda=2$ and $c=\infty$ \\
\includegraphics[width=0.33\textwidth-\tabcolsep]{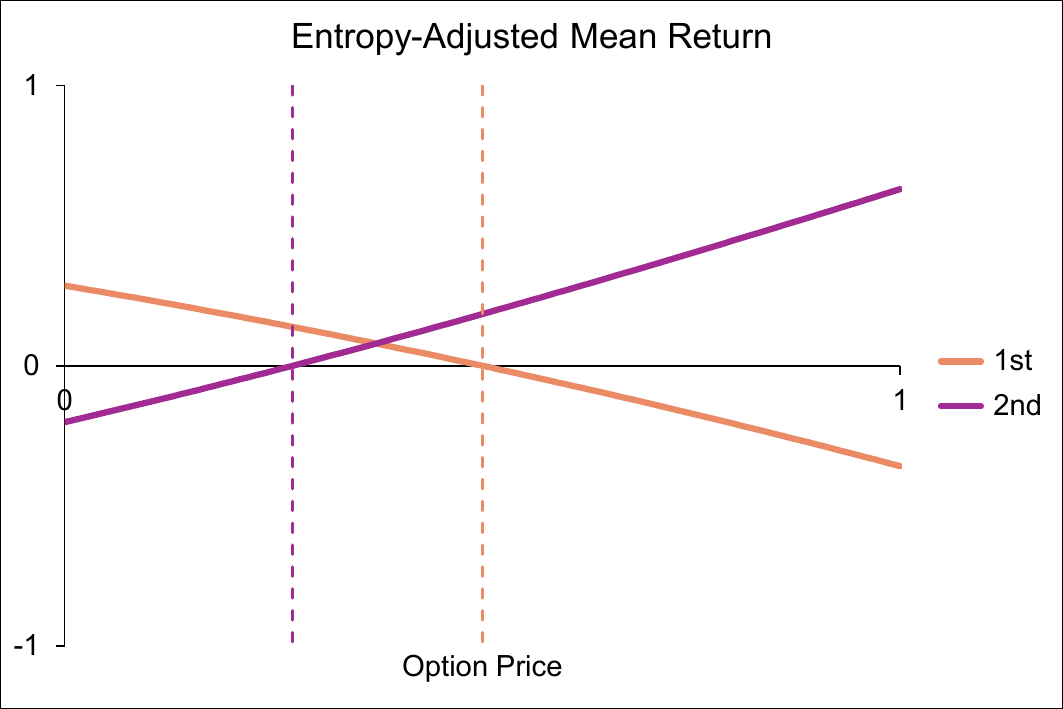}&
\includegraphics[width=0.33\textwidth-\tabcolsep]{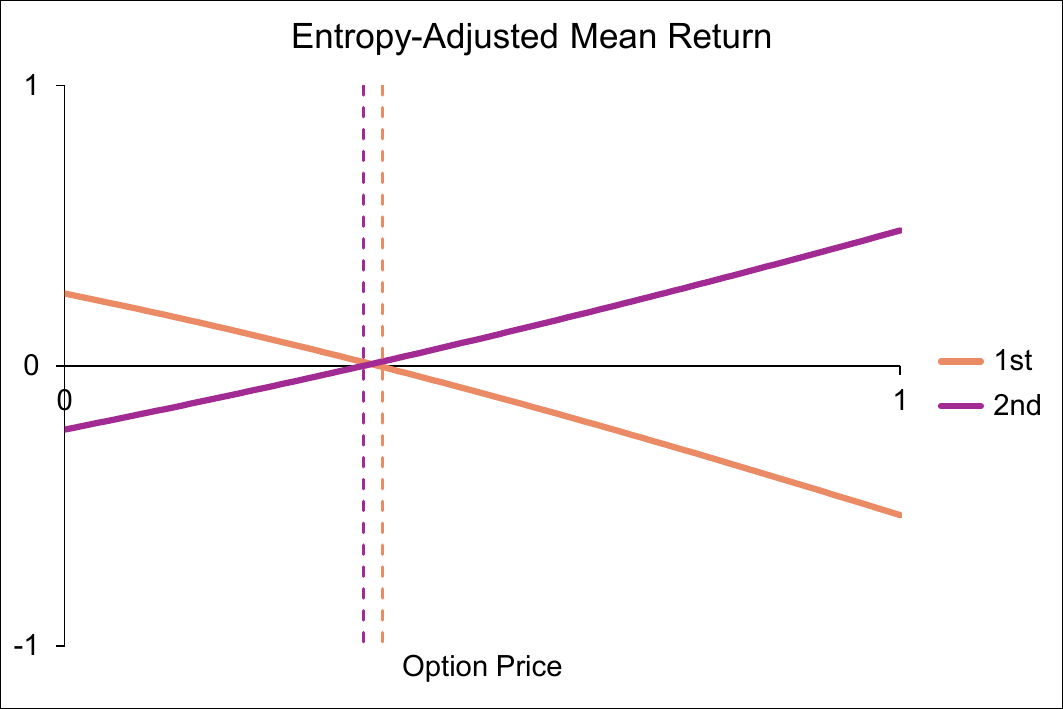}&
\includegraphics[width=0.33\textwidth-\tabcolsep]{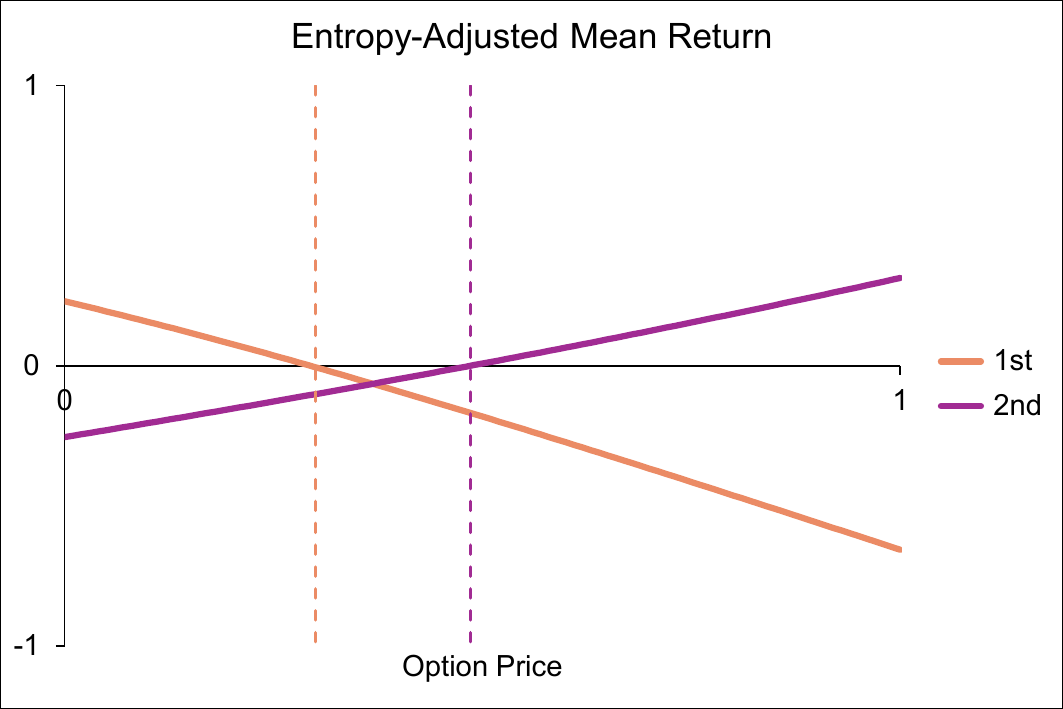}\\
$\lambda=0$ and $c=0.5$ & $\lambda=1$ and $c=0.5$ & $\lambda=2$ and $c=0.5$ \\
\end{tabular}
\caption{In this study, two counterparties have differing expectations for the future state. The first counterparty models the final underlying price as a uniform variable with mean $0.5$ and unit standard deviation, and the second counterparty models the final underlying price as a normal variable with zero mean and unit standard deviation. The trade is then a call option on the final underlying price with zero strike. In each graph, the entropy-adjusted mean return on capital, computed with risk aversion $\alpha=1$, is plotted for both counterparties as a function of the option price per unit notional. The first counterparty is fully capitalised against default, and each row includes cases $c=\infty,0.5$ for the capital of the second counterparty. Each column includes cases $\lambda=0,1,2$ for the notional, where the case $\lambda=0$ corresponds to mid-pricing.
\\\hspace*{\savedindent}The trade is viable when the entropy-adjusted mean return on capital is positive for both counterparties. The first counterparty is axed to buy the option when its price is low, and the second counterparty is axed to sell the option when its price is high. Model risk increases as the notional $\lambda$ is increased, which pushes the threshold prices in opposite directions and makes the trade less viable. Decreasing the capital $c$ increases the risk of default for the second counterparty. The CVA cost to the first counterparty and DVA benefit to the second counterparty then pushes both threshold prices lower.}
\label{fig:entropicxva}
\end{figure*}

Embedded in the expressions for the threshold prices are the contributions from XVA.
\begin{description}[leftmargin=0\parindent]
\item[FVA]Capital for the trade is raised by borrowing in the underlying market and deposited in the margin account. The mismatch between the return on the underlying portfolio and the return on the margin account generates the funding adjustment in the valuation.
\item[CVA]The cap on the potential profit from the funded derivative return generates the credit adjustment in the valuation as a cost to the counterparty.
\item[DVA]The floor on the potential loss from the funded derivative return generates the debit adjustment in the valuation as a benefit to the counterparty.
\end{description}
To understand the XVA contributions, consider the simplified model with predictable funding and no hedging:
\begin{align}
dq_1/q_1&=r_1\,dt \\
dq_2/q_2&=r_2\,dt \notag \\
db/b&=r\,dt \notag
\end{align}
where $r_1$ and $r_2$ are the funding rates of the funding securities and $r$ is the funding rate of the margin account. With only one security in each underlying market, the self-funding conditions derive $\delta_1=c_1/q_1$ and $\delta_2=c_2/q_2$ as well as $\phi_1=\phi_2=0$, and the threshold prices become:
\begin{align}
p_1={}&\frac{1}{1+r_1\,dt}(\Exp_1[\alpha_1][\hat{P}]-m_1(r_1-r)\,dt) \\
={}&\frac{1}{1+r_1\,dt}(\Exp_1[\hat{P}]-m_1(r_1-r)\,dt) \notag \\
&-\frac{\alpha_1}{2}\frac{1}{1+r_1\,dt}\Var_1[\hat{P}]+O[\alpha_1^2] \notag \\
p_2={}&\frac{1}{1+r_2\,dt}(\Exp_2[-\alpha_2][\hat{P}]+m_2(r_2-r)\,dt) \notag \\
={}&\frac{1}{1+r_2\,dt}(\Exp_2[\hat{P}]+m_2(r_2-r)\,dt) \notag \\
&+\frac{\alpha_2}{2}\frac{1}{1+r_2\,dt}\Var_2[\hat{P}]+O[\alpha_2^2] \notag
\end{align}
where $m_1:=c_1-p$ and $m_2:=c_2+p$ are the initial margins for the two counterparties. The final payoff is capped and floored by the default events:
\begin{equation}
\hat{P}:=P_{-m_1(1+r\,dt)}^{m_2(1+r\,dt)}
\end{equation}
which generates the credit and debit valuation adjustments. For each counterparty, the threshold price is the discounted expectation of the capped and floored payoff, adjusted by the funding valuation adjustment and including a reserve against the unhedged risk of the payoff.

\subsection{Entropic margin}

The combined proceeds from the margin accounts are shared between the counterparties, adjusting for the funded return on the derivative security, floored and capped so that the returned capital is positive.

The first counterparty defaults when the floor is breached and the second counterparty defaults when the cap is breached. Each counterparty assesses the threshold for default using their expectation measure, and default is avoided when the margin exceeds the maximum liability:
\begin{align}
m_1&\ge-\essinf\nolimits_1\!\left[\frac{P}{1+db/b}\right] \\
m_2&\ge\esssup\nolimits_2\!\left[\frac{P}{1+db/b}\right] \notag
\end{align}
for the contractual settlement $P$ of the derivative security. Setting these inequalities to equalities generates the minimum capital required to fully guarantee settlement, under the assumptions of the respective expectation measures.

More generally, the counterparties can exchange certainty of settlement for a reduction in capital. In the entropic margin model, this is parametrised by the default aversion $\beta_1$ and $\beta_2$ with:
\begin{align}
m_1&=\left(-\Exp_1[\beta_1]\!\left[\frac{P}{1+db/b}\right]\right)^{\!+} \\
m_2&=\left(\Exp_2[-\beta_2]\!\left[\frac{P}{1+db/b}\right]\right)^{\!+} \notag
\end{align}
where the case $\beta_1=\infty$ means the first counterparty provides full default protection and the case $\beta_2=\infty$ means the second counterparty provides full default protection. Entropy-adjusted mean is utilised in this parametrisation as it spans the range of possible values for the discounted derivative settlement.

Trading on margin is the great innovation of finance, but the protections offered to counterparties are often opaque and operationally cumbersome, and occasionally ineffective. With the derivative settlements net of funding and default compressed into a funding pool with guaranteed settlements, and an entropic margin model that codifies the implied level of default protection, entropic pricing can be automated in a smart contract that transparently declares the trade-off between return and default risk. By packaging funding with contractual settlements, this internalises the management of margin and capital and avoids their overheads and inefficiencies, supporting the decentralised distribution of derivative securities.

\section{Literature review}

Understanding the relationship between the economic and price measures has been the fundamental question of mathematical finance ever since Bachelier first applied stochastic calculus in his pioneering thesis \cite{Bachelier1900}. In this thesis, Bachelier develops a model for the logarithm of the security price as Brownian motion around its equilibrium, deriving expressions for options that would not look unfamiliar today. Methods of portfolio optimisation originated in the work of Markowitz \cite{Markowitz1952} and Sharpe \cite{Sharpe1964} using Gaussian statistics for the returns, generalised by later authors to allow more sophisticated distributional assumptions and measures of utility. These approaches are embedded in the economic measure, explaining the origin of price as the equilibrium of market activity uncovering the expectations of participants.

The impact of dynamic hedging on price was first recognised in the articles by Black and Scholes \cite{Black1973} and Merton \cite{Merton1973}. Adopting the framework devised by Bachelier, these authors observe that the option return is replicated exactly by a strategy that continuously offsets the delta of the option price to the underlying price. The theory matured with the work of Harrison and Pliska \cite{Harrison1981,Harrison1983}, providing a precise statement of the conditions for market completeness and the martingale property of price in a measure equivalent to the economic measure. The discipline has since expanded in numerous directions, with significant advances for term structure and default modelling and numerical methods for complex derivative structures.

Most approaches to derivative pricing begin with the assumption of continuous settlement, and stochastic calculus is an essential ingredient for these developments. The representation of the stochastic differential
equation used in this article follows the discoveries of L\'{e}vy \cite{Levy1934}, Khintchine \cite{Khintchine1937} and It\^{o} \cite{Ito1941}. The technical requirements of the L\'{e}vy-Khintchine representation can be found in these articles and other standard texts in probability theory. The change from economic measure to price measure implied by the entropic principle extends the result from Girsanov \cite{Girsanov1960} to include the scaling adjustment of the jump frequency in addition to the drift adjustment. The example model referenced here is the Heston model \cite{Heston1993}, a popular model in quantitative finance.

Closer consideration of market incompleteness led to the development of alternative pricing principles based on the dynamic optimisation of convex risk metrics, as presented in \cite{Xu2006,Kloppel2007,Ilhan2009}. With foundations that extend well beyond the domain of applicability of risk-neutral pricing, these principles enabled the development of data-driven methods including the pioneering work on deep hedging by Buehler and others \cite{Buehler2020,Buehler2021}. As a recent practical example, Oya \cite{Oya2024} applies these methods to explain the mark-to-market valuation adjustment for Bermudan swaptions. Applications of quantum information in option pricing are investigated further in \cite{McCloud2018}, and the technical requirements to extend this approach to continuous-time models are developed in \cite{McCloud2024}.

Entropy methods have found application across the range of mathematical finance, including portfolio optimisation \cite{Philippatos1972} and derivative pricing \cite{Buchen1996,Gulko1999} -- see also the review essay \cite{Zhou2013} which includes further references. The article \cite{Frittelli2000} by Frittelli proposes the minimal relative entropy measure as a solution to the problem of pricing in incomplete markets, and links this solution with the maximisation of expected exponential utility. The entropic solution is connected with risk optimisation in the introductory essay \cite{McCloud2020}, which this essay extends with the adoption of the entropic risk metric. As a measure of disorder, entropy performs a role similar to variance but is better suited to the real distributions of market returns. Proponents of the use of entropy justify the approach by appeal to the modelling of information flows in dynamical systems and the analogy with thermodynamics.

While market evidence for discounting basis existed earlier, the basis widening that occurred as a result of the Global Financial Crisis of 2007-2008 motivated research into funding and its impact on discounting. Work by Johannes and Sundaresan \cite{Johannes2007}, Fujii and Takahashi \cite{Fujii2011}, Piterbarg and Antonov \cite{Piterbarg2010,Piterbarg2012,Antonov2014}, Henrard \cite{Henrard2014}, McCloud \cite{McCloud2013a} and others established its theoretical justification and practical application.

It is remarkable that the principle of dynamic convex risk optimisation enables the consistent and simultaneous application of models across electronic trading algorithms, for pricing and hedging exotic derivatives, and to determine economic and regulatory capital. In this context, the entropic risk metric is promoted for its applicability within a diverse range of information models and its simple implementation as an extension of the risk-neutral framework, which supports integration into legacy platforms.

\end{document}